\documentclass[a4paper,11pt]{article}

\usepackage{mathpazo} 
\usepackage{listings}
\usepackage{anysize}
\marginsize{1.5cm}{1.5cm}{1.5cm}{1.5cm}
\usepackage{amsmath, amsthm, amssymb, amsfonts}
\usepackage{mathtools,dsfont}
\usepackage{graphicx}
\usepackage{lscape}
\usepackage{tikz}
\usepackage{caption,subcaption}
\usetikzlibrary{shapes,arrows}
\usetikzlibrary{decorations.pathmorphing} 
\usetikzlibrary{fit}                    
\usetikzlibrary{backgrounds}
\usepackage{float}
\usepackage{setspace}
\usepackage{color}
\restylefloat{table}
\setcounter{MaxMatrixCols}{10}
\usepackage{epstopdf}
\usepackage{mathabx}
\usepackage{xcolor}
\usepackage{appendix}
\usepackage{xcolor}

\linespread{1.5} 

\newtheorem{theorem}{Theorem}
\newtheorem{thm}{Theorem}

\newtheorem{corollary}[theorem]{Corollary}

\newtheorem{defi}[theorem]{Definition}

\newtheorem{lemma}[theorem]{Lemma}

\newtheorem{prop}[theorem]{Proposition}

\newtheorem{rem}[theorem]{Remark}

\makeatother

\newcommand{\pconst}{\mathfrak{p}}
\newcommand{\qconst}{\mathfrak{q}}
\newcommand{\mcQ}{\mathcal{Q}}

\newcommand{\mfT}{\mathfrak{T}}
\newcommand{\mcF}{\mathcal{F}}
\newcommand{\process}[1][X]{(#1_t)_{t\in\mfT}}
\renewcommand{\d}{\mathrm{d}}
\usepackage[noblocks, auth-sc]{authblk}

\usepackage{mathtools}
\mathtoolsset{showonlyrefs}

\begin{document}

\title{
\Large
Market Making with Fads, Informed, and Uninformed Traders
}

\author[$\mathsection$]{\normalsize Emilio Barucci}
\author[$\dag$,$\ddag$]{\normalsize Adrien Mathieu}
\author[$\dag$,$\ddag$]{\normalsize Leandro S\'anchez-Betancourt}

\affil[$\mathsection$]{Department of Mathematics, Politecnico di Milano}
\affil[$\dag$]{Mathematical Institute, University of Oxford}
\affil[$\ddag$]{Oxford-Man Institute of Quantitative Finance, University of Oxford}

\maketitle
\renewcommand{\thefootnote}{} 
\footnotetext{\textit{Email addresses:} adrien.mathieu@maths.ox.ac.uk (Adrien Mathieu), 
sanchezbetan@maths.ox.ac.uk (Leandro Sánchez-Betancourt), 
emilio.barucci@polimi.it (Emilio Barucci).}
\footnotetext{Corresponding author details: Leandro Sánchez-Betancourt, Mathematical Institute, University of Oxford, Radcliffe Observatory, Andrew Wiles Building, Woodstock Rd, Oxford OX2 6GG.}
\footnotetext{LBS and AM would like to acknowledge  support from the Oxford-Man Institute of Quantitative Finance.}
\footnotetext{The data that support the findings of this study are available from the corresponding author upon reasonable request.}
\footnotetext{The authors declare that they have no conflicts of interest to report.}
\renewcommand{\thefootnote}{\arabic{footnote}} 
\abstract{
We characterise the solution to a continuous-time optimal liquidity provision problem in a market populated by informed and uninformed traders. 
In our model, the asset price exhibits fads ---these are short-term deviations from the fundamental value of the asset. 
Conditional on the value of the fad, we model how informed traders and uninformed traders arrive in the market. 
The market maker is aware of the existence of the two groups of traders but only observes the anonymous order arrivals. 
We study both, the complete and the partial information versions of the control problem faced by the market maker. 
In such frameworks, we characterise the value of information, and we find the price of liquidity as a function of the proportion of informed traders in the market. We  discuss how to calibrate model parameters from market data and we quantify the outperformance over strategies that ignore these unobservable short-term deviations.} Lastly, for the partial information setup, we explore how to go beyond the Kalman-Bucy filter to extract information about the fad from the market arrivals.

\vspace{1cm}

\begin{flushleft}
      {\bf Keywords: market making, signals, informed traders, noise trading, stochastic filtering, fad} 
\end{flushleft}

\section{Introduction}

The market making problem has been studied in a large body of literature dealing mainly with two issues: asymmetric information and inventory management; see the seminal contribution of \cite{GLOSTEN} and \cite{AVELL,HOSTO}, respectively.\footnote{See also the books \cite{FPR,gueant2016financial,CART-book} for a comprehensive account. }
Asymmetric information is concerned with the financial losses from trading with agents who have privileged information about the fundamental value of the asset (adverse selection). Inventory management focuses on controlling the risk associated with an open inventory, that is, the risk that the price goes up when the inventory is negative (short position) or that the price goes down when the inventory is positive (long position). 

In what follows, we approach the problem through an angle that combines asymmetric information and inventory management in a new way: we study the market making problem when the market is affected by ``fads''. By fads we mean short-term deviations from the fundamental price ---a component that is associated with market inefficiencies and that may be related to over-optimism, short-termism, or other behavioural explanations (e.g. herding); see \cite{GEN_SHLEI,SHLEI}. 
The fad affects the market in two different ways. First, it affects the price of the asset. In addition to the classical arrival of information, which is modelled as a Brownian motion, there is  a mean reverting process (the fad) that drives mid-prices ---the mean-reversion well describes the fad's long run bursting properties. 
Second, the fad also affects the arrival of market orders. This is because a fraction of agents populating the market (informed traders) distinguish fads from  fundamental values. 

The objective of the market maker is to manage inventory while facing both (i) uninformed traders, who are sensitive to the displacements of bid-ask quotes from the mid-price as in \cite{AVELL}, and (ii) informed traders, who identify the true fundamental value of the asset (disentangling it from the fad component) and arrive in the market according to the displacements of the quotes from the fundamental.

The problem crucially depends on the information of the market maker about the fad. In this paper, we work out optimal bid-ask pricing strategies under both, a perfect information setup and an incomplete information setup. This allows us to obtain insights into the value of information depending on the proportion of informed traders in the market. We find that the optimal displacements are affected by the fad (or its filter) asymmetrically: as the fad increases, the market maker decreases the price of liquidity in the ask side and increases the price of liquidity in the bid side to manage the inventory optimally. This arises because the market maker aims to balance the arrival of trades from informed traders (who consider the fundamental value rather than the mid-price as reference price) and uninformed traders (who consider the mid-price as reference price). This asymmetric effect represents a novelty of our model 
over the classical Avellaneda-Stoikov framework \cite{AVELL}: private information (or its filter) plays a key role in the quote asymmetry we characterise; this is in addition to the classical asymmetry coming from inventory management.

Confirming this interpretation, in our analysis, we show that when the trades sent by informed traders become ``sharper'' or more ``toxic'', i.e., trades carried out when the fundamental value of the asset is farther from the mid-price, the market maker's performance worsens and the bid-ask spread increases.

Our paper is closely related to the strand of the literature that studies  how brokers deal with toxic order flow. Toxic flow is the trading activity of informed traders who have access to signals about the short-term trend of the asset and exploit them to their advantage.
For example, \cite{CART-SAN} analyses a model where informed and uninformed traders carry out their trading with a broker.\footnote{See \cite{CDB,ELH} for methods to detect toxic order flow. } In their model, private information is described by a stochastic process that affects the fundamental value of the asset. Similarly, in our model, the market price is affected by a mean reverting process describing deviations (fad) from the fundamental.  Conditional on the fad, we model how informed traders and uninformed traders arrive in the market in a framework similar to \cite{AVELL}.\footnote{An interesting example of a mean-field game of strategic players interacting  in an Avellaneda-Stoikov framework is \cite{baldacci2023mean}.} To the best of our knowledge, this is the first paper where the order arrival from informed and uninformed traders is incorporated in the Avellaneda-Stoikov framework.

Since the reference paper by \cite{ALM}, several papers have incorporated predictive signals into a stochastic control framework of execution cost minimisation. Predictive signals refer mainly to measurable processes built from order book dynamics, e.g., the order flow or the order book imbalance, which affects the asset price in the short term; see, e.g., \cite{CONT,CART-JAI}. There are a number of contributions in the optimal execution literature that address the problem in a full information setting, see e.g.,  \cite{CART-JAI,LEHA,NEU,BAN_CART}, other papers that consider a noisy signal of future stock prices, see e.g.,  \cite{BANK}, and  others that deal with the incomplete information setting, see e.g., \cite{CASGRAIN}. Within the market making 
problem there are fewer examples. One of the examples is \cite{cartea2020market} who study the problem of posting at-the-touch when the market maker observes the short-term drift of the asset. Another example is \cite{drissi2022solvability} who studies market making when prices are  mean-reverting (here the signal is whether the price is above or below the long run level).
In what follows, we consider a setting in which both the 
arrival rate and the asset price contain information about the fundamental value. We consider both a full information and a partial information setting  with the market maker solving a filtering problem.

\subsection*{Summary of contributions}

This paper contributes to the literature on market making by considering a market with fads that is populated by informed and uninformed traders. We provide three key insights relevant to  market microstructure.

First, temporary mispricing generates an endogenous asymmetry in optimal liquidity provision. When the price is temporarily above fundamentals (positive fad), informed traders sell more aggressively and the market maker optimally responds by widening the bid relative to the ask, even in the absence of inventory pressure. A negative fad leads to the widening on the ask side. Thus, the model provides a structural mechanism to link short-term price deviations (short living piece of information) to asymmetric liquidity. 

Second, we extend the Avellaneda-Stoikov expressions for the spread to be an explicit function of the fraction of informed traders (or toxic flow) in the market. This provides a tractable mapping between market composition and optimal spreads, an approach that can be employed using market data.

Third, by comparing the full-information problem and the partial-information problem, we quantify the value of observing or filtering the temporary price distortions (fads). We show that the value of information increases with the persistence of the fad and increases and then decreases as the filter of the fad becomes better.

The remainder of the paper is organised as follows. In Section \ref{MOD} we introduce the model. In Section \ref{sec: Perfect Information} we address the full information problem. In Section \ref{sec: Imperfect Information} we study the incomplete information problem. Section \ref{sec : Simulations} carries out a sensitivity analysis on the optimal strategies through simulations,  Section \ref{CALIB} provides a framework to calibrate model parameters from market data, and Section \ref{sec : beyond approx} presents directions for future research going beyond
the Kalman-Bucy filtering framework. In particular, we discuss how the market maker would filter order arrivals in search of the unobserved value of the fad. In Section \ref{CONC} we provide our conclusions. 
We collect proofs and further robustness checks in the Appendix.

\section{The model}
\label{MOD}
We consider a market maker who trades an asset over a finite time horizon $\mfT:=[0,T]$. Over this period, she provides liquidity by setting the bid and the ask price at which she is willing to buy and sell one unit of the asset, respectively.\footnote{We use one unit without loss of generality. For example, one unit can be thought as the average trade size in the market in which the market maker trades.} 

The model builds on \cite{AVELL,GUE,GUEANT2013}.
Similar to \cite{AVELL}, we take the mid-price of the asset to be exogenous and assume that it evolves as
\begin{equation}
\label{eq: mid-price}
\d S_t=\mu\, \d t+ \sigma\,\d W_t\,,\qquad S_0\in \mathbb{R}^+\,,
\end{equation}
where $\mu\in\mathbb{R}$ and $\sigma>0$.\footnote{In \cite{AVELL} the authors take $\mu=0$.}
Unlike \cite{AVELL}, and similar to \cite{BANK,CAMP,GUA}, we split the noise  $\process[W]$ into a persistent and a temporary component, respectively $\process[U]$ and $\process[Z]$. More precisely,
\begin{equation}\label{eq: noise decomposition}
W_t=\pconst\,Z_t+\qconst\,U_t\,,    
\end{equation}
with $\pconst,\qconst >0$ satisfying $\pconst^2+\qconst^2=1$, and $\process[U]$ follows a mean reverting Ornstein-Uhlenbeck (OU) process
\begin{equation}\label{eq:OU process}
\d U_t=-\eta\,U_t\, \d t+  \d B_t\,,\qquad U_0=0\,,
\end{equation}
where $\process[B]$ and $\process[Z]$ are independent $\process[\mathcal{F}]$-Brownian motions;{\footnote{All our results can still be obtained if we assume correlation between $\process[B]$ and $\process[Z]$.  } below, we provide a rigorous characterisation of the probability space.
The two components $\process[Z]$ and $\process[U]$ aim to capture different features: $Z_t$ is the classical martingale contribution that describes the arrival of information in the market affecting the fundamental value, $U_t$ represents the fad or bubble in the market, a component affecting the price that after the hype tends to burst and therefore to dampen;\footnote{In our model $\process[S]$ satisfies $\d S_t = (\mu - \eta \, \sigma \, \qconst \, U_t) \, \d t + \sigma \, \d \Bar{W}_t$, where $\Bar{W}_t = \pconst \, Z_t + \qconst \, B_t$ is a Brownian Motion and $\process[U]$ is the OU process in (\ref{eq:OU process}) with $\langle \Bar{W},B \rangle _t =  \qconst \, t$.} this can be due to cognitive limits, irrational exuberance or herding effects. This component reverts to zero in the long run, which is why it has been proposed in several classical papers that deal with temporary deviations from the fundamental price, see \cite{CAMP,FAFRE,SUMME}.\footnote{The Ornstein-Uhlenbeck process is widely used to model the dynamic of order flow imbalance/liquidity, see \cite{CART-JAI,LEHA}.} Within our model, we can interpret $S_t-\sigma \qconst U_t$ as the fundamental price, \textit{i.e.}, the asset price without the fad component.
The values of the model parameters $\pconst$ and $\qconst$ control the relevance of the martingale/fundamental component and of the fad component, respectively.

As in \cite{GUEANT2013}, 
the inventory of the market maker evolves according to the following law of motion
\begin{equation}\label{eq: inventory}
\d Q_t = \d N^b_t - \d N^a_t\,,\qquad Q_0\in\mcQ\,,
\end{equation}
where $Q_0$ denotes the initial amount of the asset held by the market maker at $t=0$. The processes  $\process[N^b]$ and $\process[N^a]$ count the number of 
shares the market maker bought and sold, respectively; each transaction is of constant size equal to unity. 
We assume that the inventory is bounded from above and below, \textit{i.e.},  $\underline{q}\leq Q_t \leq \overline{q}$ for $\underline{q} < 0 < \overline{q}$ with $\underline{q}, \overline{q}\in \mathbb{Z}$, and therefore the market maker does not quote on a given side if a trade on that side would breach these boundaries. The set $\mcQ = \left\{\underline{q}, \underline{q}+1, \dots, \overline{q}-1, \overline{q}\right\}$ is the set of possible values for the inventory.
The market maker controls the intensity of the counting processes $\process[N^{a,b}]$ by defining the bid and the ask price, but arrival of 
orders is not fully in his hands as the fad component affects it. To model both components, we write the intensities of the two processes  as follows
\begin{align}
\lambda^a_t&= \Big(\underbrace{\varphi\,e^{-k\,\delta^a_t}}_{\text{uninformed}} + \underbrace{\psi\,e^{-k \, \delta^a_t - \gamma \, (\sigma \,\qconst \,U_t \, \vee \, \mathcal{S}^-)}}_{\text{informed}} \Big) \, \mathds{1}_{\{Q_{t^-} > \underline{q} \}}\,, \label{eq: intensity lambda a}\\
\lambda^b_t&=\Big(\underbrace{\varphi\,e^{-k\,\delta^b_t}}_{\text{uninformed}} + \underbrace{\psi\,e^{-k\delta^b_t +  \gamma \, (\sigma \,\qconst \,U_t \,  \wedge  \, \mathcal{S}^+)}}_{\text{informed}}\Big) \, \mathds{1}_{Q_{t^-} < \overline{q} \}}\,, \label{eq: intensity lambda b}
\end{align}
for the ask quote $S^a_t=S_t + \delta^a_t$  and the bid quote $S^b_t=S_t-\delta^b_t$, respectively, and where $k, \gamma>0$ and ${\mathcal{S}}^- \leq 0 \leq \mathcal{S}^+$. In what follows, we refer to $\delta^a_t$ and $\delta^b_t$ as the ask and bid displacements, respectively. The parameters $\varphi$ and $\psi$ control the proportion of uninformed and informed traders in the market.
They modulate the baseline  intensity of order arrivals from informed and uninformed traders, respectively. Thus, their ratio is linked to the  fraction of informed-uninformed traders populating the market. We return to this point below.

\begin{rem}
The well-known model in \cite{AVELL} is recovered by setting $\qconst=\psi=0$ (the fad does not affect the asset price nor the arrival rate of informed traders).
\end{rem}

The stochastic intensities are the result of the activity of two groups of traders: uninformed and informed traders.  (a) Uninformed traders do not know about the fad and therefore cannot disentangle the fad component from the mid-price to determine the fundamental price. Thus,  their arrivals depend solely on how generous the quotes $S^{a,b}_t$ are with respect to the mid-price $S_t$ in the market (fundamental plus fad), \textit{i.e.}, $\delta^a_t, \ \delta^b_t$. 
On the other hand, (b) informed traders observe the fad (as smart traders in \cite{CAMP}), and therefore they observe the fundamental value of the asset and their arrivals depend on how generous the quotes are with respect to that level. The formulation of the arrival rates in \eqref{eq: intensity lambda a}-\eqref{eq: intensity lambda b} can be derived as follows. Let $F_t = S_t - \sigma\,\qconst\,U_t$ be the fundamental at time $t$ (mid-price minus fad), then the `generosity' of the ask quote with respect to fundamental 
is $S^a_t - F_t = S_t + \delta^a_t - S_t + \sigma\,\qconst\,U_t=\delta^a_t + \sigma\,\qconst\,U_t$. 
Similarly, for the bid quote (at which the informed trader may sell one share), the `generosity' is $F_t - S^b_t = S_t - \sigma\,\qconst\,U_t - S_t + \delta^b_t=\delta^b_t -  \sigma\,\qconst\,U_t$.
The minus factor in the exponent of the arrival rate of informed traders in \eqref{eq: intensity lambda a} is because larger values of $\delta^a_t + \sigma \, \qconst \, U_t$ corresponds to less generosity of the quote and therefore a lower arrival rate of informed traders, the same argument holds true for \eqref{eq: intensity lambda b}.

This argument holds for the baseline case $\gamma = k$. 
The parameters $\gamma$ and $k$  modulate how the order flow of informed and uninformed traders reacts to the fad and the displacements of the quotes from mid-price. The larger (resp.~smaller) the values of these parameters, the more (resp.~less) market order arrivals react to the fad and the displacements. 
The model has enough flexibility to allow informed traders to react more to the fad ($\gamma>k$) or less ($\gamma<k$) than in the baseline case.

The above arguments provide an explanation of the arrival rate of informed traders in
\eqref{eq: intensity lambda a}-\eqref{eq: intensity lambda b}. The peculiarity is that the fads affect the intensities of order arrivals in opposite directions. More precisely, a  positive (resp.~negative) $U_t$ signals an inflated (resp.~deflated) price to the informed traders when compared to the fundamental, this  induces informed traders to sell (resp.~buy) with a higher intensity and buy (resp.~sell) with a lower intensity than they would have done otherwise.

For technical reasons we impose that the effect of the fad in the intensity of order arrivals is bounded. 
More precisely, we ask the baseline intensity $\psi\,\exp{(\pm\gamma\,\sigma \,\qconst\,U_t)}$ of informed traders to be bounded by $\psi\,\exp{(-\gamma\, \mathcal{S}^-)}$ and $\psi\,\exp{(\gamma\,\mathcal{S}^+)}$ in the ask and bid price, respectively. This assumption is needed to show that the Radon-Nikodym derivative that we introduce is a true martingale and that the problem is well defined. 
From a financial point of view, the motivation of the assumption comes from the fact that the intensity of market orders cannot be arbitrarily large because exchanges have a maximum orders they can process per unit time.\footnote{One way to see this is to rewrite the stochastic intensities in  (\ref{eq: intensity lambda a})-(\ref{eq: intensity lambda b}) as $\psi \, (e^{\pm \gamma \, \sigma \, \qconst \, U_t} \wedge \mathcal{Z})$ for $\mathcal{Z}$ a function of $\mathcal{S}^+$ and $\mathcal{S}^-$ and where $\mathcal{Z}$ is the bound (per unit time) from the exchange. For example, $1000$ orders per millisecond.}

The  cash of the market maker satisfies
\begin{equation}\label{eq: cash}
\d X_t =(S_t +\delta^a_t) \d N^a_t -
(S_t - \delta^b_t) \d N^b_t\,,\qquad X_0 = 0\,.
\end{equation}
The market maker controls the bid and ask quotes and optimises the following criterion
\begin{equation}\label{eq: payoff}
    X_T +Q_T\,S_T   - \alpha\,Q_T^2 - \phi\,\int_0^T Q_u^2\,\d u \,.
\end{equation}
The interpretation is as follows. Over the horizon $[0,T]$, the market maker is interested in the final realised cash $X_T$, the market value of the inventory at time $T$, given by $Q_T\,S_T$, and inventory control (over time and at the end of the period), which is given by the remaining two terms.
This performance criterion is widely used in the algorithmic trading literature, e.g. see \cite{CART-book}. The mark-to-market valuation of the inventory is at the mid-price in the market $S_T$ (regardless of whether the fad is positive or negative). This assumption implies that the market maker is not interested in discerning {\em per se} the fundamental value. In Section \ref{sec : beyond approx} we discuss the role of this assumption and study the case where the market maker evaluates the inventory at the fundamental price.\footnote{Note that $Q_T\,S_T - \alpha\,Q_T^2 = Q_T (S_T - \alpha\,Q_T))$. This renders the interpretation that at $T$ the market maker liquidates the open inventory and $\alpha\,Q_T$ captures the slippage of such a liquidation (the asset price is affected linearly by the amount to be liquidated). Here, liquidation is from $S_T$ because this is the mid-price in the market. This is similar to \cite{boyce2024market} where their mark-to-market is done at the prices of the competing market makers.}

In what follows, we study the problem depending on the information available to the market maker. 
First, we address the full information scenario, \textit{i.e.}, the case where the market maker observes the realization of the  process $U_t$ and defines the control problem accordingly.\footnote{The full observation setting is relevant in its own right given that in some OTC markets the liquidity provider is able to profile their clients and is able to detect the short-term signal from the trading activity of their clients \cite{CDB,CART-SAN}.}
Second, we address the more realistic case in which the market maker filters the unobserved fad of the observed price process. We conclude by presenting alternative filters (filtering arrivals instead of prices) and other directions for future research.

\section{Full information}
\label{sec: Perfect Information}
\subsection{Formal derivation of the probability space}
Before pursuing the solution to the full information problem, it is important to provide a rigorous characterisation of the probability space we employ.
Consider $\Omega_d$ the set of increasing piecewise constant càdlàg functions from $[0,T]$ into $\mathbb{N}$ with jumps equal to one and $\Omega_c$ the set of continuous functions from $[0,T]$ into $\mathbb{R}$. We define $\Omega = \Omega_c^2 \times \Omega_d^2$ as the sample space.
 We let $(B_t,Z_t,\Bar{N}_t^a,\Bar{N}_t^b)_{t \in \mfT}$ be the canonical process on $\Omega$. The associated filtration is $\widebar{\mathcal{F}}= (\mathcal{F}_t^c \otimes \mathcal{F}_t^c \otimes \widebar{\mathcal{F}}_t^d \otimes \widebar{\mathcal{F}}_t^d)_{t \in \mfT}$ where $\process[\widebar{\mathcal{F}}^d]$ (resp.~$\process[\mathcal{F}^c]$) is the right continuous completed filtration associated with $\Bar{N}^a$ and $\Bar{N}^b$ (resp.~$B$ or $Z$).
We denote by $\mathbb{P}_0$ the probability measure on $(\Omega, \mathcal{F})$ such that $(\Bar{M}^a_s = \Bar{N}^a_s - s \,, \Bar{M}^b_s = \Bar{N}^b_s -s )_{s \in \mfT}$ are martingales and $\process[Z]$, $\process[B]$ are Brownian motions. We also assume independence between the processes. We define the point processes $\process[N^a], \process[N^b]$ as solutions of the coupled SDE
\begin{equation}
    \d N^b_t = \mathds{1}_{\{N_{t^-}^b-N_{t^-}^a < \overline{q}\}} \, \d \Bar{N}_t^b\,,  \qquad \d N^a_t = \mathds{1}_{\{N_{t^-}^b-N_{t^-}^a > \underline{q}\}} \, \d \Bar{N}_t^a \,.
\end{equation} 
Then, under $\mathbb{P}_0$, $\process[N^a]$ and $\process[N^b]$ have intensities $$\Lambda^a_t = \mathds{1}_{\{N_{t^-}^b-N_{t^-}^a > \underline{q}\}} \, \qquad \,\Lambda^b_t= \mathds{1}_{\{N_{t^-}^b-N_{t^-}^a < \overline{q}\}}  \,,$$
and they are independent of the Brownian motions.

We let $\mathcal{U}$ be the set of admissible controls given by 
    $$\mathcal{U} := \left\{\delta = (\delta^a_t, \delta^b_t)_{t \in \mfT} \in L^2(\Omega \times [0,T]): \delta \geq - \delta_\infty \text{ and $ \mathcal{F} $ predictable}\right\} \, ,$$
    where $\delta_\infty = (\delta_\infty^a, \delta_\infty^b)$ characterises the lower bound of the control. Next, for  $\delta \in \mathcal{U}$, we define $\mathbb{X}_t = (X_t,Q_t,S_t,U_t)$ to be the state of the system, where the dynamics of $\process[\mathbb{X}]$ are given by 
    \begin{align}
    \label{state_var_FI}
        \begin{cases}
            &\d X_t = (S_t + \delta^a_t) \, \d N^a_t -  (S_t - \delta^b_t) \, \d N^b_t \, , \qquad X_0 = 0 \, ,\\
            &\d Q_t = \d N^b_t - \d N^a_t \,, \qquad Q_0 \in \mathbb{R} \, , \\
            & \d S_t = \mu \, \d t + \sigma \, (\qconst\,  \d U_t + \pconst \, \d Z_t) \, , \qquad S_0 \in \mathbb{R}^+ \, ,\\
            &\d U_t = - \eta \, U_t \, \d t + \d B_t \, , \qquad U_0 = 0 \, .
        \end{cases}
    \end{align}
Note that we do not include $Z_t$ as state variable because knowledge of $U_t$ and $S_t$ implies knowledge of $Z_t$ (given that $S_t = S_0 + \mu\,t + \sigma\left(\pconst\,Z_t+\qconst\,U_t\right)$).

Using these processes, we make an explicit change of measure associated with each control process. More precisely, we consider the functions 
\begin{align}
    & \lambda^a(U_t,Q_{t^-}, \delta^a_t) := \lambda^a_t =\underbrace{(\varphi + \psi \, e^{-\gamma \, (\sigma \,\qconst \,U_t \, \vee \,  \mathcal{S}^-)}) \, e^{-k \, \delta^a_t} }_{\Gamma_t^a}\, \mathds{1}_{\{ Q_{t^-} > \underline{q} \}} \, , \label{sto_intens_a}\\
     &  \lambda^b(U_t,Q_{t^-}, \delta^b_t) := \lambda^b_t = \underbrace{(\varphi + \psi \, e^{ \gamma \, (\sigma \,\qconst \,U_t \, \wedge \, \mathcal{S}^+)}) \, e^{-k \, \delta^b_t}}_{\Gamma_t^b} \, \mathds{1}_{\{Q_{t^-} < \overline{q} \}} \, , \label{sto_intens_b}
\end{align}
which represent the ask and bid intensity for the control $\delta \in \mathcal{U}$. Then, for any $\delta \in \mathcal{U}$, we define $\mathbb{P}^\delta$ by 
\begin{equation}
    \frac{\d \mathbb{P}^\delta}{\d \mathbb{P}_0} = L_T^{\delta} \, ,
\end{equation}
where $L_t^{\delta}$ is the Doléans-Dade exponential of $\mathcal{Z}_t$ and
$$\mathcal{Z}_t = \int_0^t (\Gamma^a_s -1 )\,  \d \Bar{M}^{a}_s + \int_0^t (\Gamma^b_s -1 ) \, \d \Bar{M}^{b}_s \, .$$
Since $(\delta^a_t,\delta^b_t)_{t \in \mfT}$ are bounded from below and the intensities are uniformly bounded,  $\process[L^\delta]$ is a $\mathbb{P}_0$ martingale by the Novikov criterion;\footnote{
See Section 3 in \cite{euch2021optimal}. 
} and under $\mathbb{P}^\delta$, the processes 
$$\Bar{M}_t^{a,\delta} := \Bar{N}^a_t - \int_0^t \Gamma^a_u \, \d u \, \qquad \, \text{ and } \, \qquad \Bar{M}_t^{b,\delta} := \Bar{N}^b_t - \int_0^t \Gamma^b_u \, \d u \,  ,$$
are true martingales. Thus, the processes $\process[\Bar{N}^a]$ and $\process[\Bar{N}^b]$ have stochastic intensities $\process[\Gamma^a]$ and $\process[\Gamma^b]$,} and then $\process[N^a]$ and $\process[N^b]$ have stochastic intensities $\process[\lambda^a]$ and $\process[\lambda^b]$.
 Finally, on the probability space $(\Omega, \mathcal{F}, \mathbb{P}^\delta)$, the processes $\process[Z]$ and $\process[B]$ are $\mathcal{F}$-Brownian motions, and the point processes $\process[N^a]$ and $\process[N^b]$ are $\mathcal{F}$-Poisson processes.

\subsection{Characterisation of the solution}
The market maker aims to solve the following control problem 
\begin{equation}
\label{eq:VF_1}
 \sup_{\delta \,  \in \, \mathcal{U}}\mathbb{E}^{\delta}\left[ X_T +Q_T\,S_T - \alpha\,Q_T^2 - \phi\,\int_0^T Q_u^2\,\d u \right]\,.
\end{equation}

Using the Itô formula for $X_T + Q_T \, S_T$, we get 
$$\mathbb{E}^{\delta}[X_T + Q_T \, S_T] = \mathbb{E}^{\delta}[X_0 + Q_0 \, S_0] + \mathbb{E}^{\delta}\Bigg[\int_0^T (\delta^a_s \, \lambda^a_s + \delta^b_s \, \lambda_s^b +Q_s \, \mu - \eta \, \qconst \, \sigma \,Q_s \, U_s) \, \d s \Bigg] \, .$$
The maximisation problem \eqref{eq:VF_1} is then equivalent to maximise 
\begin{equation}
    \mathbb{E}^{\delta}\left[ -\alpha \, Q_T^2 + \int_0^T (\delta^a_s \, \lambda_s^a + \delta^b_t \, \lambda_s^b - \phi \, Q_s^2 + Q_s \, \mu - \eta \, \qconst \, \sigma \,Q_s \, U_s) \, \d s \right] \, ,
\end{equation}
from which we define the value function associated with the control problem
\begin{align}
     V(t,q,u) = \sup_{\delta \, \in \, \mathcal{U}}  \mathbb{E}_{t,q,u}^{\delta}\Bigg[ -\alpha \, Q_T^2 + \int_t^T &\Big(
      - \phi \, Q_s^2 + Q_s \, \mu - \eta \, \qconst \, \sigma \,Q_s \, U_s \nonumber \\
     & +\delta^a_s \, \lambda^a(U_s,Q_{s^-},\delta^a_s) + \delta^b_s \, \lambda^b(U_s,Q_{s^-},\delta^b_s)\Big) \, \d s \Bigg] \, .\label{eq:new_value_fct}
\end{align}
The expression above is finite since all processes belong to $L^2(\Omega \times [0,T])$. Below, we also prove that $V$ is locally bounded. Standard techniques from the theory of stochastic control show that the Hamilton Jacobi Bellman (HJB) equation associated with such a problem is 
\begin{align}
\label{HJB_FI}
0&=\partial_t V - \phi\,q^2 - \eta \, u \, \partial_u V + \frac{1}{2}\,\partial_{u,u}^2 V + (\mu - \eta \, \sigma \, \qconst \, u)\, q\\
&\qquad + \sup_{\delta^a}\Big\{e^{-k\,\delta^a}(\varphi+ \psi \, e^{-\, \gamma \, (\sigma \,\qconst \,u \, \vee \, \mathcal{S}^-)}\times\nonumber\\
&\hspace{20mm}\Big(\delta^a + V(t, q-1,u) - V(t,q,u) \Big)   \Big\}\mathds{1}_{q>\underline{q}}\nonumber\\
&\qquad + \sup_{\delta^b}\Big\{e^{-k\,\delta^b}(\varphi+ \psi \, e^{\gamma\, (\sigma \,\qconst \,u\, \wedge \mathcal{S}^+)})\times\nonumber\\
&\hspace{20mm}\Big(\delta^b + V(t, q+1,u) - V(t,q,u) \Big)   \Big\}\mathds{1}_{q<\overline{q}}\nonumber\,, 
\end{align}
with terminal condition $V(T,q,u) = - \alpha \, q^2$.
We compute the maximisers of the above expression explicitly obtaining that
\begin{align}
    \delta^{a*} &= \Big(\frac{1}{k}- V(t, q-1,u) + V(t,q,u) \Big) \vee - \delta_\infty^a\,,\qquad q\neq \underline{q}\,, \\
    \delta^{b*} &= \Big(\frac{1}{k} - V(t, q+1,u) + V(t,q,u) \Big) \vee -\delta_\infty^b \,,\qquad q\neq \overline{q}\,, 
\end{align}
which transforms the HJB into the following partial differential equation (PDE)
\begin{equation}
\label{eq:HJB_EQ_FI}
\begin{aligned}
0&=\partial_t V - \eta \, u \, \partial_u V - \phi\,q^2  + \frac{1}{2}\,\partial_{u,u}^2 V   + (\mu - \eta \, \sigma \, \qconst \, u) \, q\\
&\qquad + \frac{1}{k}\,  \exp( -1 + k [V(t,q-1,u)-V(t,q,u)]) \, (\varphi + \psi \, e^{-   \gamma \, (\sigma \,\qconst \,u \, \vee \, \mathcal{S}^-)}) \mathds{1}_{ \{q>\underline{q} \}} \, \mathds{1}_{ \{\delta^{a,*} = \delta^{a,**} \}} \\
&\qquad + \frac{1}{k}\,  \exp{\left( -1+ k [V(t,q+1,u)-V(t,q,u)]\right)} \, (\varphi + \psi \, e^{   \gamma \, (\sigma \,\qconst \,u \, \wedge \, \mathcal{S}^+)}) \mathds{1}_{ \{q<\overline{q} \}} \, \mathds{1}_{ \{\delta^{a,*} = \delta^{b,**}\}}  \\ 
 &\qquad+ e^{k\,\delta^a_\infty}(\varphi+ \psi \, e^{- \gamma \, (\sigma \,\qconst \,u \, \vee \, \mathcal{S}^-)}) 
\, (-\delta^a_\infty + V(t, q-1,u) - V(t,q,u) )\mathds{1}_{\{q>\underline{q} \}} \, \mathds{1}_{\{\delta^{a,*} >\delta^{a,**}  \}}  \\
&\qquad + e^{k\,\delta^b_\infty}(\varphi+ \psi \, e^{ \gamma \, (\sigma \,\qconst \,u\, \wedge \, \mathcal{S}^+)}) \,(-\delta^b_\infty + V(t, q+1,u) - V(t,q,u))\mathds{1}_{\{q<\overline{q}\}} \, \mathds{1}_{\{\delta^{a,*} > \delta^{b,**}  \}}  \, ,
\end{aligned}
\end{equation}
where 
$$\delta^{a,**} = \frac{1}{k} - V(t,q-1,u) + V(t,q,u) \, , \, \qquad \, \qquad \, \delta^{b,**} = \frac{1}{k} - V(t,q+1,u) + V(t,q,u) \, .$$

\subsubsection{Viscosity solution: complete information problem}
In this section, we prove that the value function $V$ in \eqref{eq:new_value_fct} is a viscosity solution of the HJB equation in \eqref{HJB_FI}.  We recall that $\mathcal{U}$ is the set of predictable processes $\process[\delta] = (\delta^a_t, \delta^b_t)_{t \in \mfT}$, bounded from below and taking values in $ U = U^a \times U^b = (-\delta^a_\infty, + \infty) \times (-\delta^b_\infty , + \infty)$ and where the state processes are defined in \eqref{state_var_FI}. We rewrite the price process as  
\begin{equation}
\label{eq:mid-priceFI}
    \d S_t = (\mu - \eta \, \sigma \, \qconst \, U_t) \, \d t + \sigma \, \d \Bar{W}_t \, ,
\end{equation}
where  $\process[\Bar{W}]$ is an $\mathcal{F}$-Brownian motion, with correlation $\langle B, \Bar{W} \rangle _t = \qconst \, t$.
The stochastic intensities of the Poisson processes are given by \eqref{sto_intens_a} and \eqref{sto_intens_b}.

We define the restriction of our control space $\mathcal{U}_t$ as 
$$\mathcal{U}_t := \big\{\process[\delta] \in \mathcal{U} : (\delta_s)_{s \geq t} \text{ is independent of } \mathcal{F}_t\big\} \, ,$$
and we recall an important result. Let $\Tilde{V}$ given by
$$\Tilde{V}(t,q,u) = \sup_{\delta \, \in \, \mathcal{U}_t}  \mathbb{E}_{t,q,u}^{\delta}\left[ -\alpha \, Q_T^2 + \int_t^T (\delta^a_s \, \lambda_s^a + \delta^b_t \, \lambda_s^b - \phi \, Q_s^2  +Q_s \, \mu - \eta \, \qconst \, \sigma \,Q_s \, U_s) \, \d s \right] \,,$$
where the supremum is taken over $\mathcal{U}_t$ instead of $\mathcal{U}$.
It follows from the Remark 2.2 (iv) in \cite{touzi2012optimal} that  $V = \Tilde{V}$. To simplify notations, we denote  by $\mathcal{E} := [0,T] \times \mathcal{Q} \times \mathbb{R}$ the space in which $(t,q,u)$ takes values.
Next, we prove that $V$ is locally bounded.
\begin{lemma}
    The value function $V$ is locally bounded.
\end{lemma}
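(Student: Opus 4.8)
The goal is to show that $V$ is bounded on every compact subset of $\mathcal{E}=[0,T]\times\mathcal{Q}\times\mathbb{R}$. Since $[0,T]$ is compact and $\mathcal{Q}$ is finite, this reduces to a bound on $V(t,q,u)$ that is uniform for $t\in[0,T]$, $q\in\mathcal{Q}$, and $u$ in an arbitrary bounded interval; concretely I would establish an at-most-linear estimate $|V(t,q,u)|\le A+B\,|u|$ with $A,B$ depending only on the model parameters. The strategy is to bound the integrand of \eqref{eq:new_value_fct} uniformly over the admissible controls, isolate the single term that is unbounded in $u$, and control it through a moment estimate for the Ornstein--Uhlenbeck process.

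First I treat the terms that are bounded outright. The inventory satisfies $\underline{q}\le Q_s\le\overline{q}$ by construction, so $Q_s^2\le q_\infty^2$ with $q_\infty:=\overline{q}\vee|\underline{q}|$; hence the terminal term $-\alpha Q_T^2$, the running penalty $-\phi Q_s^2$, and the reward $\mu Q_s$ are all bounded by constants. For the order-flow reward $\delta^a_s\lambda^a_s+\delta^b_s\lambda^b_s$ I would use two elementary facts. First, the intensity prefactors are uniformly bounded thanks to the caps $\mathcal{S}^-,\mathcal{S}^+$: since $\sigma\qconst u\vee\mathcal{S}^-\ge\mathcal{S}^-$ one has $\varphi\le\varphi+\psi e^{-\gamma(\sigma\qconst u\vee\mathcal{S}^-)}\le\varphi+\psi e^{-\gamma\mathcal{S}^-}$, and analogously $\varphi\le\varphi+\psi e^{\gamma(\sigma\qconst u\wedge\mathcal{S}^+)}\le\varphi+\psi e^{\gamma\mathcal{S}^+}$; write $\bar C_\lambda$ for the larger upper bound. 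Second, elementary calculus gives $\delta e^{-k\delta}\le\tfrac{1}{ek}$ for all $\delta\in\mathbb{R}$, while on the feasible half-line $\delta\ge-\delta^a_\infty$ the map $\delta\mapsto\delta e^{-k\delta}$ is bounded below by a finite constant (generically its value $-\delta^a_\infty e^{k\delta^a_\infty}$ at the left endpoint), and likewise on the bid side. Since the prefactor is positive and bounded, combining these---using $\bar C_\lambda$ where $\delta e^{-k\delta}$ is negative and positivity where it is nonnegative---yields a constant $C$, independent of $\delta,t,q,u$, with $|\delta^a_s\lambda^a_s|+|\delta^b_s\lambda^b_s|\le C$.

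The only term that is not bounded in $u$ is $-\eta\qconst\sigma Q_s U_s$, which I control by $\eta\qconst\sigma\,q_\infty\,|U_s|$. The crucial point is that the change of measure defining $\mathbb{P}^\delta$ acts only on the point processes and leaves $B$ (hence $U$) an OU process with the same law under every admissible $\delta$; the standard moment bound for the OU process started at $U_t=u$ gives $\mathbb{E}_{t,u}[|U_s|]\le\big(\mathbb{E}_{t,u}[U_s^2]\big)^{1/2}\le|u|+(2\eta)^{-1/2}$, uniformly in $s\in[t,T]$ and in $\delta$. Assembling the pieces, I bound the integrand of \eqref{eq:new_value_fct} pathwise above by $C'+\eta\qconst\sigma q_\infty|U_s|$ and below by $-C'-\eta\qconst\sigma q_\infty|U_s|$. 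Taking expectations and the supremum over $\delta\in\mathcal{U}$ gives the upper bound, while the pathwise lower bound, valid for every admissible $\delta$, passes to $V$ because $\mathcal{U}$ is nonempty; together they yield $|V(t,q,u)|\le\alpha q_\infty^2+(T-t)\big(C'+\eta\qconst\sigma q_\infty(|u|+(2\eta)^{-1/2})\big)$, the desired locally bounded (indeed at-most-linear) estimate.

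The only non-routine step is the uniform-in-$\delta$ control of the order-flow reward: since $\delta$ ranges over an unbounded set, neither factor of $\delta^a_s\lambda^a_s$ is individually bounded, and one must exploit simultaneously the calculus bound on $\delta e^{-k\delta}$, the lower bound $\delta_\infty$ built into $\mathcal{U}$, and the boundedness of the prefactors coming from the caps $\mathcal{S}^\pm$. In particular the sign of $\delta e^{-k\delta}$ dictates which bound on the prefactor to apply, and it is precisely the caps $\mathcal{S}^\pm$---introduced ``for technical reasons'' in the model---that make the estimate uniform in $u$ rather than merely locally bounded.
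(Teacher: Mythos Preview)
Your argument is correct and follows the same overall strategy as the paper: bound each term of the integrand in \eqref{eq:new_value_fct} uniformly over admissible controls, using that $\delta\mapsto\delta e^{-k\delta}$ is bounded on $[-\delta_\infty,\infty)$ and that the intensity prefactors are capped via $\mathcal{S}^\pm$. The only notable difference is that you exploit the hard inventory bound $|Q_s|\le q_\infty$ directly, which lets you avoid the paper's detour through $\mathbb{E}[Q_s^2]$ (and the associated $\int\int\mathbb{E}[(\lambda_u)^2]\,\d u\,\d s$ terms) and yields a sharper linear-in-$|u|$ estimate rather than the paper's $1+q^2+u^2$; this is a cosmetic simplification, not a different proof idea.
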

\begin{proof}
    We define $\widehat{V}: (t,q,u,\delta) \in \mathcal{E} \times \mathcal{U} \rightarrow \mathbb{R}$ by 
    $$ \widehat{V}(t,q,u,\delta) = \mathbb{E}_{t,q,u}^{\delta}\left[  -\alpha \, Q_T^2 + \int_t^T (\delta^a_s \, \lambda_s^a + \delta^b_t \, \lambda_s^b - \phi \, Q_s^2 +  Q_s \, \mu - \eta \, \qconst \, \sigma \,Q_s \, U_s) \, \d s \right] \,.$$
    From $|Q_t |\leq |\mathcal{Q}|$ almost surely and using Jensen inequality, we bound $\widehat{V}$ as follows
    \begin{align*}
        | \widehat{V}(t,q,u,\delta) | &\lesssim  \mathbb{E}^{\delta}[Q_T^2] + \int_t^T \mathbb{E}^{\delta}[\delta^a_s \, \lambda^a_s + \delta^b_s \, \lambda_s^b] \, \d s + \int_t^T  \mathbb{E}^{\delta}[Q_s^2] \, \d s +  \int_t^T  \mathbb{E}^{\delta}[U_s^2] \, \d s \,  \\& \lesssim|\mathcal{Q}|^2 + \int_t^T  \mathbb{E}^{\delta}[U_s^2] \, \d s + \int_t^T \mathbb{E}^{\delta}[\delta^a_s \, \lambda^a_s + \delta^b_s \, \lambda_s^b] \, \d s \,  \\&
        \lesssim 1 + |\mathcal{Q}|^2 + \int_t^T \bigg(u^2\,e^{-2\,\eta\,(s-t)} + \mathbb{E}^{\delta}\bigg[\bigg(\int_t^s e^{-\eta\,(s-r)}\,\d B_r\bigg)^2\bigg]\bigg) \, \d s
        \\&\lesssim  1+ |\mathcal{Q}|^2 + u^2 \, ,
    \end{align*}
where we used It\^o's isometry and that $\delta$ is bounded from below (so $\delta \, e^{-k \, \delta}$ and $e^{-k \, \delta}$ are bounded), and that $\varphi + \psi \, e^{-\gamma \, (\sigma \,\qconst \,U_t \, \vee \,  \mathcal{S}^-)}$ and $\varphi + \psi \, e^{\gamma \, (\sigma \,\qconst \,U_t \, \wedge \,  \mathcal{S}^+)}$ are uniformly bounded. The symbol $ a\lesssim b$ means that there exists a constant $C > 0$ (independent of $a$ and $b$) such that $a \leq C \, b$. Finally, we obtain $$ |\widehat{V}(t,q,u,\delta) | \lesssim 1+ |\mathcal{Q}|^2 +u^2 \, ,$$
and then $(t,q,u) \rightarrow \widehat{V}(t,q,u,\delta)$ is locally bounded. Taking the supremum of the above expression implies that the value function is locally bounded.
\end{proof}

\begin{thm}[Weak dynamic programming principle]
    Define the lower and upper semicontinuous envelopes $$V_*(t,q,u) = \liminf_{(t',q',u') \rightarrow (t,q,u)} V(t',q',u') \, , $$ and  $$V^*(t,q,u) = \limsup_{(t',q',u') \rightarrow (t,q,u)} V(t',q',u') \, .$$
Consider $(t,q,u) \in \mathcal{E}$ fixed and let $\{\theta^\alpha, \alpha \in \mathcal{U}_t\}$ be a family of finite stopping with values in $[t,T]$. Then
    \begin{equation}
        \label{eq:DPP1}
        V(t,q,u) \geq \sup_{\alpha \in \mathcal{U}_t} \mathbb{E}_{t,q,u}^{\alpha}\left[\int_t^{\theta^\alpha} \Big(\delta^a_s \, \lambda_s^a + \delta^b_t \, \lambda_s^b - \phi \, Q_s^2 +  (\mu - \eta \, \qconst \, \sigma \, U_s) \, Q_s \, \Big)\, \d s + V_*(\theta^\alpha, U_{\theta^\alpha}, Q_{\theta^\alpha})\right] \, ,
    \end{equation}
    and if for any $\alpha$, $\Big((U_s^{t,u},Q_s^{t,q})^T \, \mathds{1}_{s \in [t,\theta^\alpha]}\Big)_{t \in \mfT}$ is $L^\infty$ bounded, then
    \begin{equation}
        \label{eq:DPP2}
         V(t,q,u) \leq \sup_{\alpha \in \mathcal{U}_t} \mathbb{E}_{t,q,u}^{\alpha}\left[\int_t^{\theta^\alpha}\Big(\delta^a_s \, \lambda_s^a + \delta^b_t \, \lambda_s^b - \phi \, Q_s^2+  (\mu - \eta \, \qconst \, \sigma \, U_s) \, Q_s \,\Big) \, \d s + V^*(\theta^\alpha, U_{\theta^\alpha}, Q_{\theta^\alpha})\right] \, .
    \end{equation}
\end{thm} 

We do not present the proof since it follows the one in \cite{JUSS} because the stochastic intensities are uniformly bounded. Before dealing with the viscosity solution, let us define the Hamiltonian of our problem.
\begin{defi}[Hamiltonian]
    We define the Hamiltonian $\mathcal{H} :\mathcal{Q}\times \mathbb{R} \times \mathbb{R}  \rightarrow \mathbb{R}$  
    \begin{equation}
    \begin{aligned}
        \label{eq:Hamiltonian_}
        \mathcal{H}(q,u,V_1,V_2) =& \sup_{\delta^a \, \in \, U^a}(\varphi + \psi \, e^{- \gamma \, (\sigma \,\qconst \,u\, \vee \, \mathcal{S}^-)} \, e^{-k \, \delta^a}  \, (\delta^a + V_1 ) \, \mathds{1}_{\{q > \underline{q}\}}  \\& \qquad \, \qquad + \sup_{\delta^b \, \in \, U^b} (\varphi + \psi \, e^{ \gamma \, (\sigma \,\qconst \,u \, \wedge \, \mathcal{S}^+)}) e^{-k \, \delta^b} \, (\delta^b +V_2 ) \, \mathds{1}_{\{q < \overline{q}\} } \, .
    \end{aligned}
    \end{equation}
\end{defi}
Finally, we define the HJB equation associated with the control problem.
\begin{defi}[Hamilton-Jacobi-Bellman equation]
Let $V \in C^{1,2}(\mathcal{E})$, then
    \begin{equation}
    \begin{aligned}
    \label{HJB:eq:fin}
       0 =& \partial_t V(t,q,u) - \phi \, q^2 +q \, \mu - \eta \, \qconst \, \sigma \, q \, u - \eta \, u \, \partial_u V(t,q,u) + \frac{1}{2} \partial^2_{u,u} V(t,q,u) \\&+ \mathcal{H}(q,u,V(t,q-1,u)-V(t,q,u),V(t,q+1,u)-V(t,q,u))  \, .
       \end{aligned}
    \end{equation}
    To shorten the notation, we write the equation  above as $$0 = F(t,q,u, V(t,q,u), \partial_t V(t,q,u),\partial_u V(t,q,u), \partial^2_{u,u}V(t,q,u)).$$
\end{defi}

\begin{rem}
   By a ball around a point $(t,q,u) \in \mathcal{E}$ with radius $r>0$ we mean $$\{(s,j,v) \in \mathbb{R}^3 : \lVert (s,j,v) - (t,q,u) \rVert < r\} \cap \mathcal{E} \, ,$$ 
    where $\lVert \cdot \rVert$ is the Euclidean norm.
\end{rem}

Now we can define the notions of viscosity sub-solution and viscosity super-solution

\begin{defi}[Viscosity sub-solution]
    We say that $V$ is a viscosity sub-solution if, for all $\Phi \in \mathcal{C}^{1,2}(\mathcal{E})$ such that $V^*(t_0,q_0,u_0) - \Phi (t_0,q_0,u_0)$ is a local maximum, we have 
    \begin{equation}
       F(t_0,q_0,u_0, \Phi(t_0,q_0,u_0),\partial_t \Phi(t_0,q_0,u_0), \partial_u \Phi(t_0,q_0,u_0), \partial^2_{u,u} \Phi(t_0,q_0,u_0)) \geq 0 \, .
    \end{equation}
\end{defi}

\begin{defi}[Viscosity super-solution]
    Let $V$ be locally bounded on $\mathcal{E}$. We say that $V$ is a viscosity super-solution if, for all $\Phi \in \mathcal{C}^{1,2}(\mathcal{E})$ such that $V_*(t_0,q_0,u_0) - \Phi(t_0,q_0,u_0)$ is a local minimum, we have 
    \begin{equation}
       F(t_0,q_0,u_0, \Phi(t_0,q_0,u_0), \partial_t \Phi(t_0,q_0,u_0), \partial_u \Phi(t_0,q_0,u_0), \partial^2_{u,u}\Phi(t_0,q_0,u_0)) \leq 0 \, .
    \end{equation}
\end{defi}
\begin{defi}[Viscosity solution]
    We say that $V$ is a viscosity solution, if it is a  viscosity sub-solution and a viscosity super-solution. 
\end{defi}

Before we state the main theorem of the section, we prove an important lemma.

\begin{lemma}
    The Hamiltonian $\mathcal{H}$ defined in \eqref{eq:Hamiltonian_} is continuous.
\end{lemma}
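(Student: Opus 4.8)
The plan is to exploit that the inventory variable $q$ ranges over the finite set $\mathcal{Q}$, so that continuity of $\mathcal{H}$ reduces to joint continuity in $(u,V_1,V_2)$ for each fixed $q$; once $q$ is fixed, the indicators $\mathds{1}_{\{q>\underline{q}\}}$ and $\mathds{1}_{\{q<\overline{q}\}}$ are constants and play no role. The two summands in \eqref{eq:Hamiltonian_} have the same structure, one depending on $(u,V_1)$ and the other on $(u,V_2)$, so it suffices to treat the ask term and establish continuity of $(u,V_1)\mapsto A^a(u)\,G^a(V_1)$, where I set $A^a(u):=\varphi+\psi\,e^{-\gamma(\sigma\qconst u\,\vee\,\mathcal{S}^-)}$ and $G^a(V_1):=\sup_{\delta^a\in U^a}e^{-k\delta^a}(\delta^a+V_1)$; the bid term is handled identically with $A^b(u):=\varphi+\psi\,e^{\gamma(\sigma\qconst u\,\wedge\,\mathcal{S}^+)}$ and $G^b(V_2)$.

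First I would check that $A^a$ is continuous in $u$: the map $u\mapsto \sigma\qconst u\,\vee\,\mathcal{S}^-$ is continuous as a maximum of two continuous functions, and composing with $\exp$ and taking the affine combination with $\varphi,\psi$ preserves continuity; the same applies to $A^b$. The only genuine content is therefore the continuity of $G^a$ (and $G^b$). Here I would use that $G^a$ is a supremum, over $\delta^a$, of the maps $V_1\mapsto e^{-k\delta^a}(\delta^a+V_1)$, each of which is affine in $V_1$; hence $G^a$ is convex. For each fixed $V_1$ the objective $\delta^a\mapsto e^{-k\delta^a}(\delta^a+V_1)$ is continuous on $U^a=(-\delta^a_\infty,+\infty)$, bounded near the finite endpoint and vanishing as $\delta^a\to+\infty$, so its supremum is finite. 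A finite convex function on $\mathbb{R}$ is continuous, which gives continuity of $G^a$.

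Alternatively, and this makes continuity fully explicit, I would reuse the maximiser already computed in the derivation of \eqref{eq:HJB_EQ_FI}, namely $\delta^{a*}=(\tfrac1k-V_1)\vee(-\delta^a_\infty)$, which yields
\[
G^a(V_1)=\begin{cases}\tfrac1k\,e^{kV_1-1}, & V_1\le \tfrac1k+\delta^a_\infty,\\[1mm] e^{k\delta^a_\infty}\,(V_1-\delta^a_\infty), & V_1> \tfrac1k+\delta^a_\infty.\end{cases}
\]
Each branch is continuous, and at the junction $V_1=\tfrac1k+\delta^a_\infty$ both expressions equal $\tfrac1k\,e^{k\delta^a_\infty}$, so $G^a$ is continuous (indeed $C^1$). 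Continuity of the product $A^a(u)\,G^a(V_1)$ then follows from continuity of each factor, and summing the ask and bid contributions gives continuity of $\mathcal{H}$. The one point needing care, and the main (though minor) obstacle, is the kink where the constrained optimiser switches from the interior value $\tfrac1k-V_1$ to the boundary value $-\delta^a_\infty$: this is exactly the matching computation at the junction, and it is precisely what the convexity-plus-finiteness argument of the previous paragraph sidesteps.
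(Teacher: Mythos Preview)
Your proof is correct and follows essentially the same route as the paper: both arguments first exploit that $\mathcal{Q}$ is discrete to freeze $q$, then plug in the explicit maximiser $\delta^{a,*}=(\tfrac1k-V_1)\vee(-\delta^a_\infty)$ (and its bid analogue) to reduce the Hamiltonian to a composition of continuous functions. You give more detail than the paper does---the piecewise formula for $G^a$ with the junction check, plus the alternative finite-convex-hence-continuous observation---but the underlying strategy is the same.
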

\begin{proof}
    First, we derive explicitly the value of the  Hamiltonian. We have that
    \begin{align*}
    \mathcal{H}(q,u,V_1,V_2) =& \sup_{\delta^a \, \in \, U^a} (\varphi + \psi \, e^{- \gamma \, (\sigma \,\qconst \,u\, \vee \, \mathcal{S}^-)}) \,  e^{-k \, \delta^a} (\delta^a + V_1) \, \mathds{1}_{\{q > \underline{q} \}}  \\& \qquad \, \qquad \, \qquad+  \sup_{\delta^b \, \in \, U^b} (\varphi + \psi \, e^{ \gamma \, (\sigma \,\qconst \,u \, \wedge \, \mathcal{S}^+)}) \,  e^{-k \, \delta^b} (\delta^b +V_2) \, \mathds{1}_{\{q < \overline{q} \}} \\=&  (\varphi + \psi \, e^{- \gamma \, (\sigma \,\qconst \,u \, \vee \, \mathcal{S}^-)}) \,  e^{-k \, \delta^{a,*}} (\delta^{a,*} + V_1) \, \mathds{1}_{\{q > \underline{q} \}}  \\& \qquad \, \qquad \, \qquad+  (\varphi + \psi \, e^{ \gamma \, (\sigma \,\qconst \,u \, \wedge \, \mathcal{S}^+)}) \,  e^{-k \, \delta^{b,*}} (\delta^{b,*} +V_2) \, \mathds{1}_{\{q < \overline{q} \}} \, ,
    \end{align*}
    where the optimal displacements  $\delta^* = (\delta^{a,*}, \delta^{b,*})$ are given by 
    \begin{equation}
        \delta^{a,*}(V_1) = \Big(\frac{1}{k} - V_1 \Big) \vee  - \delta^\infty_a \, , \qquad \, \qquad      \delta^{b,*}(V_2) =\Big(\frac{1}{k} - V_2\Big) \vee - \delta^\infty_b \, ,
    \end{equation}
    from which it follows that  $\mathcal{H}$ is continuous. Indeed, take $(q,u,V_1,V_2) \in \mathcal{Q} \times \mathbb{R}^3$ and consider $(q^n,u^n,V_1^n,V_2^n)_{n \in \mathbb{N}}$ that converges toward $(q,u,V_1,V_2)$. Since $\mathcal{Q}$ is discrete, there exists $n_0 \in \mathbb{N}$, such that, for all $n \geq n_0$, $q^n = q$. Then, for $n \geq n_0$, we have $\mathcal{H}(q^n,u^n,V_1^n,V_2^n) = \mathcal{H}(q,u^n,V_1^n,V_2^n)$, and convergence of $\mathcal{H}(q,u^n,V_1^n,V_2^n)$ toward $\mathcal{H}(q,u,V_1,V_2)$ follows by continuity of composition of continuous functions.
\end{proof}
Now, we are able to state the main theorem of the section.
\begin{thm}
    $V$ is a viscosity solution of the HJB equation \eqref{HJB:eq:fin}.
\end{thm}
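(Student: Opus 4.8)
The plan is to follow the now-standard route for value functions that satisfy only a \emph{weak} dynamic programming principle, as in \cite{touzi2012optimal,JUSS}: I will show separately that the upper-semicontinuous envelope $V^*$ is a viscosity sub-solution and that the lower-semicontinuous envelope $V_*$ is a viscosity super-solution of \eqref{HJB:eq:fin}, which by the Definitions above is exactly the assertion that $V$ is a viscosity solution. The three ingredients that make this work are already in place: the two inequalities \eqref{eq:DPP1}--\eqref{eq:DPP2} of the weak DPP, each tailored to the relevant envelope ($V_*$ in \eqref{eq:DPP1}, $V^*$ in \eqref{eq:DPP2}); the continuity of the Hamiltonian $\mathcal{H}$; and the local boundedness of $V$. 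Throughout I localise with a ball $B_r$ of radius $r<1$ around the test point $(t_0,q_0,u_0)$; since $\mathcal{Q}$ is discrete this forces the inventory to stay equal to $q_0$ until the state leaves $B_r$, so that $\theta:=\inf\{s\ge t_0 : (s,Q_s,U_s)\notin B_r\}$ is the minimum of the first jump time of $Q$ and the exit time of the continuous pair $(s,U_s)$. Because $U$ and $Q$ remain bounded on $[t_0,\theta]$ by construction, the $L^\infty$ hypothesis needed to invoke \eqref{eq:DPP2} holds automatically.

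For the sub-solution property I argue by contradiction. Let $\Phi\in\mathcal{C}^{1,2}(\mathcal{E})$ be such that $V^*-\Phi$ has a local maximum at $(t_0,q_0,u_0)$ with $t_0<T$, normalised so that $(V^*-\Phi)(t_0,q_0,u_0)=0$, and suppose that $F(t_0,q_0,u_0,\Phi,\partial_t\Phi,\partial_u\Phi,\partial^2_{u,u}\Phi)<0$. By continuity of the drift, of the diffusion coefficient, and of $\mathcal{H}$, this strict inequality persists on $B_r$, say $F\le-\varepsilon<0$. I then pick a sequence $(t_n,q_n,u_n)\to(t_0,q_0,u_0)$ along which $V(t_n,q_n,u_n)\to V^*(t_0,q_0,u_0)$, apply \eqref{eq:DPP2} at each such point, and use $V^*\le\Phi$ on $B_r$ to replace $V^*(\theta,\cdot)$ by $\Phi(\theta,\cdot)$. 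Applying the It\^o formula for the jump-diffusion $\Phi(s,Q_s,U_s)$ between $t_n$ and $\theta$ and taking expectations, the stochastic integral against $\d B_s$ and the compensated jump integrals against $\d M^{a,\delta}_s,\d M^{b,\delta}_s$ vanish---here I use that the intensities are uniformly bounded, so these are genuine martingales amenable to optional sampling at the bounded stopping time $\theta$. Grouping the control-dependent terms, for every admissible $\delta$ the integrand is bounded above by $F(\,\cdot\,,\Phi,\dots)$, with equality at the explicit maximiser, so the supremum over controls in \eqref{eq:DPP2} yields $V(t_n,q_n,u_n)-\Phi(t_n,q_n,u_n)\le-\varepsilon\,\mathbb{E}[\theta-t_n]$. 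Letting $n\to\infty$, the left-hand side tends to $0$, and the contradiction follows once one checks that $\mathbb{E}[\theta-t_n]$ is bounded below by a positive constant---which is where the boundedness of the coefficients and of the jump intensities, together with $t_0<T$, guarantee that the state cannot leave $B_r$ instantaneously in expectation.

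The super-solution property is obtained by the symmetric contradiction argument, now using \eqref{eq:DPP1} and the envelope $V_*$. Let $\Phi$ touch $V_*$ from below at a local minimum $(t_0,q_0,u_0)$ with $t_0<T$, and suppose $F(\,\cdot\,,\Phi,\dots)>0$ there; by continuity of $\mathcal{H}$ this persists as $F\ge\varepsilon>0$ on $B_r$. Because the supremum defining $\mathcal{H}$ is \emph{attained} at the explicit maximisers $\delta^{a,*},\delta^{b,*}$, I may test \eqref{eq:DPP1} against this single constant optimal control; using $V_*\ge\Phi$ on $B_r$ and the same It\^o computation along a sequence realising $V_*(t_0,q_0,u_0)$ gives $V(t_n,q_n,u_n)-\Phi(t_n,q_n,u_n)\ge\varepsilon\,\mathbb{E}[\theta-t_n]$, whose left-hand side tends to $0$ while the right-hand side is bounded below by a positive constant---the same contradiction as before.

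The main obstacle is the interplay between the point-process (jump) part of the dynamics and the localisation: I must ensure the compensation argument is valid at the stopping time $\theta$ rather than at a deterministic time, which is precisely why the uniform boundedness of the intensities and the true-martingale property of $M^{a,\delta}$ and $M^{b,\delta}$ established in the construction of the probability space are essential, and I must secure a strictly positive lower bound on the expected exit time from $B_r$ despite the presence of jumps. The remaining points are routine: the discreteness of $q$ is handled by taking $r<1$; the boundary inventories $q_0\in\{\underline{q},\overline{q}\}$, where one side is not quoted, are already encoded in the indicators inside $\mathcal{H}$; and the behaviour at the terminal time $t_0=T$ is dealt with separately through the terminal condition $V(T,q,u)=-\alpha\,q^2$.
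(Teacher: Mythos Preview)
Your overall strategy---weak DPP, It\^o's formula applied to the test function, and a contradiction argument---is exactly the route taken in the paper. The gap is in the localisation step. You take the ball $B_r$ with $r<1$ so that the inventory is frozen at $q_0$ until the exit time $\theta$, and then you write ``use $V^*\le\Phi$ on $B_r$ to replace $V^*(\theta,\cdot)$ by $\Phi(\theta,\cdot)$''. But when $\theta$ is the first jump time of $Q$, the exit state is $(\theta,q_0\pm1,U_\theta)$, which is \emph{not} in $B_r$: with $r<1$ every point of $B_r$ has inventory exactly $q_0$. So the inequality $V^*\le\Phi$ on $B_r$ gives you no control over $V^*(\theta,Q_\theta,U_\theta)$ on the jump-exit event, and the replacement is unjustified. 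You cannot simply enlarge $B_r$ to include $q_0\pm1$ either, because then the inventory is no longer constant on $[t_n,\theta)$ and the integrand in the It\^o expansion is no longer pinned at $q_0$.

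The paper resolves this by (i) reducing to a \emph{strict global} extremum of $V^*-\Phi$ (respectively $V_*-\Phi$), which is legitimate here since the nonlocal part of the operator only reaches $q_0\pm1$ and one can modify $\Phi$ away from $\{q_0-1,q_0,q_0+1\}$ without touching the Hamiltonian at $(t_0,q_0,u_0)$; and (ii) introducing the jump set $J(t,q,u)$ of states reachable by a single jump from $B_r$, on which strictness yields a uniform margin
\[
-2\eta \;=\; \max_{\partial B_r\,\cup\,J(t,q,u)}(V^*-\Phi)\;<\;0.
\]
With this margin one gets $\Phi(\theta,Q_\theta,U_\theta)\ge V^*(\theta,Q_\theta,U_\theta)+2\eta$ regardless of whether the exit is continuous or by jump, and the contradiction with \eqref{eq:DPP2} follows from the fixed constant $\eta>0$ rather than from a lower bound on $\mathbb{E}[\theta-t_n]$. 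This also removes the second loose end in your sketch: you never establish that $\mathbb{E}[\theta-t_n]$ is bounded below by a positive constant uniformly in $n$, and with the paper's margin-based argument you do not need to. A minor additional point: in the super-solution step the maximiser $\delta^{a,*},\delta^{b,*}$ depends on $(t,U_t)$ through $\Phi(t,q_0\pm1,U_t)-\Phi(t,q_0,U_t)$, so it is a feedback control, not a ``single constant'' control.
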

\begin{proof}
    We first prove that $V$ is a viscosity super-solution of the equation. Consider $(t,q,u)$ and $\Phi \in C^{1,2}$ such that
    $$0 = (V_* - \Phi)(t,q,u) = \min (V_*-\Phi) \, ,$$
    and for all $(s,j,v) \neq (t,q,u), (V- \Phi)(s,j,v) > (V_*- \Phi)(t,q,u)$.
Let assume by contradiction that  $$h(t,q,u):=F(t,q,u,\Phi(t,q,u), \partial_t \Phi(t,q,u), \partial_u \Phi(t,q,u),\partial^2_{u,u} \Phi(t,q,u)) > 0 \, .$$ 
By continuity of the function $\Phi$ and the Hamiltonian, there exists an open neighbourhood $B_r$ of $(t,q,u)$ such that
$$h(s,j,v)=F(s,j,v,\Phi(s,j,v), \partial_t \Phi(s,j,v),\partial_u \Phi(s,j,v),\partial^2_{u,u} \Phi(s,j,v)) > 0 \,  \qquad \text{ on } B_r \, .$$ 

    We can define the set $J(t,q,u)$ as the set of attainable values if a jump happens in $B_r$.
    We define $$2 \, \eta := \min_{\partial B_r \cup J(t,q,u)}(V_*-\Phi) > 0 \, .$$
    Now, we consider $(t_n,q_n,u_n)$ a sequence such that 
    $$(t_n,q_n,u_n) \rightarrow (t,q,u) \text{ and } V(t_n,q_n,u_n) \rightarrow V^*(t,q,u) \, .$$
    Since $(V-\Phi)(t_n,q_n,u_n) \rightarrow 0$, we can assume that the sequence $(t_n,q_n,u_n)$ also satisfies 
    \begin{equation}
        \label{eq:ineq_first}
        \lvert (V-\Phi) \rvert (t_n,q_n,u_n) \leq \eta \, \text{ for all } \, n \geq 1 \, .
    \end{equation}
    
    Then, we consider the control $\nu \in \mathcal{U}_{t_n}$ defined by 
    \begin{align}
         &\nu^a_t = \delta^{a,*}\Big(\Phi(t,Q_{t^-}^{t_n,q_n,\nu}-1,U_t^{t_n,u_n,\nu})-\Phi(t,Q_{t^-}^{t_n,q_n,\nu},U_t^{t_n,u_n,\nu})\Big) \, ,
         \\&  \nu^b_t =  \delta^{b,*}\Big(\Phi(t,Q_{t^-}^{t_n,q_n,\nu},U_t^{t_n,u_n,\nu}))-\Phi(t,Q^{t_n,q_n,\nu}_{t^-},U_t^{t_n,u_n,\nu})\Big) \, ,
    \end{align}
    we define the stopping time 
    $$\theta_n^\nu :=\inf \{ t > t_n : (t,Q_t^{t_n,q_n,\nu},U_t^{t_n,u_n,\nu} ) \notin B_r \} \, ,$$
    and we observe that $(\theta_n^\nu, Q_{\theta_n^\nu}^{t_n,q_n,\nu}, U_{\theta_n^\nu}^{t_n,u_n,\nu}) \in \partial B_r \cup J(t,q,u)$. Then, we have 
\begin{equation}
\label{eq:ineq_2}
        \Phi(\theta_n^\nu,Q_{\theta_n^\nu}^{t_n,q_n,\nu}, U_{\theta_n^\nu}^{t_n,u_n,\nu}) \leq - 2 \, \eta + V^*(\theta_n^\nu,Q_{\theta_n^\nu}^{t_n,q_n,\nu}, U_{\theta_n^\nu}^{t_n,u_n,\nu})\, .
    \end{equation}
 
    Now, using equation (\ref{eq:ineq_first}) and Itô formula, we obtain 
 \begin{align*}
        V(t_n,q_n,u_n)&  \leq  \eta + \Phi(t_n,q_n,u_n) \, \\
        &= \eta + \mathbb{E}^{\nu}\left[\Phi(\theta_n^\nu,Q_{{\theta_n^\nu}}^{t_n,q_n,\nu}, U_{\theta_n^\nu}^{t_n,u_n,\nu})\right] 
        \\&- \mathbb{E}^{\nu}\Bigg[\int_{t_n}^{\theta_n^\nu} \Bigg(\partial_t \Phi(s,Q_{s-}^{t_n,q_n,\nu},U_s^{t_n,u_n,\nu}) - \eta \, U_s^{t_n,u_n,\nu} \, \partial_u \Phi(s,Q_{s-}^{t_n,q_n,\nu},U_s^{t_n,u_n,\nu}) \\& \qquad + \frac{1}{2} \partial^2_{u,u} \Phi(s,Q_{s-}^{t_n,q_n,\nu},U_s^{t_n,u_n,\nu}) + \lambda^a_s \Big(\Phi(s,Q_{s-}^{t_n,q_n,\nu}-1,U_s^{t_n,u_n,\nu}) \\&\qquad \, \qquad \, \qquad- \Phi(s,Q_{s-}^{t_n,q_n,\nu},U_s^{t_n,u_n,\nu}) \Big)  + \lambda^b_s \Big(\Phi(s,Q_{s-}^{t_n,q_n,\nu}+1,U_s^{t_n,u_n,\nu}) \\&\qquad \, \qquad \, \qquad \, \qquad \, \qquad \, \qquad \, \qquad \, \qquad \, \qquad- \Phi(s,Q_{s-}^{t_n,q_n,\nu},U_s^{t_n,u_n,\nu}) \Big)  \Bigg) \, \d s\Bigg]  \, \\
          &\leq  \eta + \mathbb{E}^{\nu}\Bigg[ \Phi(\theta_n^\nu,Q_{\theta_n^\nu}^{t_n,q_n,\nu}, U_{\theta_n^\nu}^{t_n,u_n,\nu}) + \int_{t_n}^{\theta_n^\nu} \Big(\nu^a_s \, \lambda_s^a + \nu^b_s \, \lambda_s^b \\&\qquad \, \qquad- \phi \, (Q_{s^-}^{t_n,q_n,\nu})^2 +  Q_{s^-}^{t_n,q_n,\nu} \, \mu - \eta \, \qconst \, \sigma \,Q_{s^-}^{t_n,q_n,\nu} \, U_s^{t_n,u_n,\nu}\Big) \, \d s \Bigg] \, ,
        \\&=  \eta + \mathbb{E}^{\nu}\Bigg[ \Phi(\theta_n^\nu,Q_{\theta_n^\nu}^{t_n,q_n,\nu}, U_{\theta_n^\nu}^{t_n,u_n,\nu}) + \int_{t_n}^{\theta_n^\nu} \Big(\nu^a_s \, \lambda_s^a + \nu^b_s \, \lambda_s^b \\&\qquad \, \qquad- \phi \, (Q_s^{t_n,q_n,\nu})^2 +  Q_s^{t_n,q_n,\nu} \, \mu - \eta \, \qconst \, \sigma \,Q_s^{t_n,q_n,\nu} \, U_s^{t_n,u_n,\nu} \Big) \, \d s \Bigg] \, ,
    \end{align*}
  and using equation (\ref{eq:ineq_2}) we obtain 
    \begin{align*}
        V(t_n,q_n,u_n) &\leq -\eta + \mathbb{E}^{\nu}\Bigg[ V_*(\theta_n^\nu,Q_{\theta_n^\nu}^{t_n,q_n,\nu}, U_{\theta_n^\nu}^{t_n,u_n,\nu}) + \int_{t_n}^{\theta_n} (\nu^a_s \, \lambda_s^a + \nu^b_s \, \lambda_s^b \\&\qquad \, \qquad \, \qquad- \phi \, (Q_s^{t_n,q_n,\nu})^2 +  Q_s^{t_n,q_n,\nu} \, \mu - \eta \, \qconst \, \sigma \,Q_s^{t_n,q_n,\nu} \, U_s^{t_n,u_n,\nu} \,) \d s \Bigg] \, .
        \end{align*}
        
    Since $\eta > 0$ is independent of $\nu$, it follows that the latter inequality is in contradiction with the second inequality of the dynamic programming principle (\ref{eq:DPP2}). Then $V$ is a viscosity super-solution.  Now, let us prove that $V$ is a viscosity sub-solution. 
    
    Consider $(t,q,u)$ and $\Phi \in C^{1,2}$ such that
    $$0 = (V^* - \Phi)(t,q,u) = \max (V^*-\Phi) \, ,$$
    and such that for all $(s,j,v) \neq (t,q,u)$, $(V-\Phi)(s,j,v) < (V^*-\Phi)(t,q,u)$. Let assume by contradiction that $$h(t,q,u)=F(t,q,u,\Phi(t,q,u), \partial_u \Phi(t,q,u),\partial^2_{u,u} \Phi(t,q,u)) < 0 \, .$$ 
By continuity of the function $\Phi$ and the Hamiltonian, there exists an open neighbourhood $B_r$ of $(t,q,u)$ such that 
$$h(s,j,v)=F(s,j,v,\Phi(s,j,v), \partial_u \Phi(s,j,v),\partial^2_{u,u} \Phi(s,j,v)) < 0 \,  \qquad \text{ on } B_r \, .$$ 
    We define $$-2 \, \eta := \max_{\partial B_r \cup J(t,q,u)}(V^*-\Phi) < 0 \, .$$
    Now, we consider $(t_n,q_n,u_n)$ a sequence such that 
    $$(t_n,q_n,u_n) \rightarrow (t,q,u) \text{ and } V(t_n,q_n,u_n) \rightarrow V^*(t,q,u) \, .$$
    Since $(V-\Phi)(t_n,q_n,u_n) \rightarrow 0$, we can assume that the sequence $(t_n,q_n,u_n)$ also satisfies 
    \begin{equation}
        \label{eq:ineq}
        \lvert (V-\Phi) \rvert (t_n,q_n,u_n) \leq \eta \, \text{ for all } \, n \geq 1 \, .
    \end{equation}
    
    Then, for an arbitrary control $\nu \in \mathcal{U}_{t_n}$, we define the stopping time 
    $$\theta_n^\nu :=\inf \{ t > t_n : (t,Q_t^{t_n,q_n,\nu},U_t^{t_n,u_n,\nu} ) \notin B_r \} \, ,$$
    and we observe that $(\theta_n^\nu, Q_{\theta_n^\nu}^{t_n,q_n,\nu}, U_{\theta_n^\nu}^{t_n,u_n,\nu}) \in \partial B_r \cup J(t,q,u)$. Then, we have 
\begin{equation}
\label{eq:ineq2}
        \Phi(\theta_n^\nu,Q_{\theta_n^\nu}^{t_n,q_n,\nu}, U_{\theta_n^\nu}^{t_n,u_n,\nu}) \geq 2 \, \eta + V^*(\theta_n^\nu,Q_{\theta_n^\nu}^{t_n,q_n,\nu}, U_{\theta_n^\nu}^{t_n,u_n,\nu})\, .
    \end{equation}
 
    Now, using equation (\ref{eq:ineq}) and Itô formula, we obtain 
 \begin{align*}
        V(t_n,q_n,u_n) & \geq - \eta + \Phi(t_n,q_n,u_n) \, \\
        &= - \eta + \mathbb{E}^{\nu}\left[\Phi(\theta_n^\nu,Q_{\theta_n^\nu}^{t_n,q_n,\nu}, U_{\theta_n^\nu}^{t_n,u_n,\nu})\right]\\&-\mathbb{E}^{\nu}\Bigg[\int_{t_n}^{\theta_n^\nu} \Bigg(\partial_t \Phi(s,Q_{s-}^{t_n,q_n,\nu},U_s^{t_n,u_n,\nu})- \eta \, U_s^{t_n,u_n,\nu} \, \partial_u \Phi(s,Q_{s-}^{t_n,q_n,\nu},U_s^{t_n,u_n,\nu}) \\&\qquad+ \frac{1}{2} \partial^2_{u,u} \Phi(s,Q_{s-}^{t_n,q_n,\nu},U_s^{t_n,u_n,\nu}) + \lambda^a_s \Big(\Phi(s,Q_{s-}^{t_n,q_n,\nu}-1,U_s^{t_n,u_n,\nu}) \\&\qquad \, \qquad \, \qquad- \Phi(s,Q_{s-}^{t_n,q_n,\nu},U_s^{t_n,u_n,\nu}) \Big)  + \lambda^b_s \Big(\Phi(s,Q_{s-}^{t_n,q_n,\nu}+1,U_s^{t_n,u_n,\nu}) \\&\qquad \, \qquad \, \qquad \, \qquad \, \qquad \, \qquad \, \qquad \, \qquad \, \qquad- \Phi(s,Q_{s-}^{t_n,q_n,\nu},U_s^{t_n,u_n,\nu}) \Big) \Bigg) \, \d s\Bigg]  \, \\
        &\geq - \eta + \mathbb{E}^{\nu}\Bigg[ \Phi(\theta_n^\nu,Q_{\theta_n^\nu}^{t_n,q_n,\nu}, U_{\theta_n^\nu}^{t_n,u_n,\nu}) \\&\qquad \, \qquad \, \qquad \, \qquad+ \int_{t_n}^{\theta_n^\nu} \Big(\nu^a_s \, \lambda_s^a + \nu^b_s \, \lambda_s^b - \phi \, (Q_{s^-}^{t_n,q_n,\nu})^2  \\& \qquad \, \qquad \, \qquad \, \qquad \, \qquad +  Q_{s^-}^{t_n,q_n,\nu} \, \mu -\eta \, \qconst \, \sigma \,Q_{s^-}^{t_n,q_n,\nu} \, U_s^{t_n,u_n,\nu}\Big) \, \d s \Bigg] \,,
         \\&= - \eta + \mathbb{E}^{\nu}\Bigg[ \Phi(\theta_n^\nu,Q_{\theta_n^\nu}^{t_n,q_n,\nu}, U_{\theta_n^\nu}^{t_n,u_n,\nu}) \\&\qquad \, \qquad \, \qquad \, \qquad+ \int_{t_n}^{\theta_n^\nu} \Big(\nu^a_s \, \lambda_s^a + \nu^b_s \, \lambda_s^b - \phi \, (Q_s^{t_n,q_n,\nu})^2  \\& \qquad \, \qquad \, \qquad \, \qquad \, \qquad +  Q_s^{t_n,q_n,\nu} \, \mu -\eta \, \qconst \, \sigma \,Q_s^{t_n,q_n,\nu} \, U_s^{t_n,u_n,\nu} \Big)\, \d s \Bigg] \,,
    \end{align*}
  and using equation (\ref{eq:ineq2}) we obtain 
    \begin{align*}
        V(t_n,q_n,u_n) &\geq \eta + \mathbb{E}^{\nu}\Bigg[ V^*(\theta_n^\nu,Q_{\theta_n^\nu}^{t_n,q_n,\nu}, U_{\theta_n^\nu}^{t_n,u_n,\nu}) + \int_{t_n}^{\theta_n} (\nu^a_s \, \lambda_s^a + \nu^b_s \, \lambda_s^b \\&\qquad \, \qquad- \phi \, (Q_s^{t_n,q_n,\nu})^2 +  Q_s^{t_n,q_n,\nu} \, \mu - \eta \, \qconst \, \sigma \,Q_s^{t_n,q_n,\nu} \, U_s^{t_n,u_n,\nu} \,) \d s \Bigg] \, .
        \end{align*}
    Since $\eta > 0$ is independent of $\nu$, it follows from the arbitrariness of $\nu \in \mathcal{U}_{t_n}$ that the latter inequality is in contradiction with the second inequality of the dynamic programming principle (\ref{eq:DPP2}). Then $V$ is a viscosity sub-solution and hence a viscosity solution.
\end{proof}

\subsubsection{Approximate solution to the full informed problem}
\label{subsec:approx}
Next, we study an approximate solution to the HJB in (\ref{eq:HJB_EQ_FI}) with terminal condition  $V(T,q,u) = -\alpha\,q^2$; in Appendix \ref{sec:how-good-second-order-approx} (Figure \ref{fig:optimal quotes ODE PDE}), we show that this approximation  is good when compared to a classical finite-difference scheme.  To carry out the approximation, we take $- \delta_\infty = -\infty$, $\underline{q} = - \overline{q} = + \infty$ and $\mathcal{S}^+ = - \mathcal{S}^- = + \infty$. Moreover, we follow the ideas in \cite{bergault2018closed}, and consider that $\Delta^- V(t,q,u) := V(t,q-1,u) - V(t,q,u) $ and  $\Delta^+ V(t,q,u) := V(t,q+1,u) - V(t,q,u)$ are small. Using this, we perform a second order Taylor expansion around zero for $ p \rightarrow \frac{1}{k} \exp(k \, p)$  of the exponential term.
Formally, this means that 
\begin{align*}
    e^{k[V(t,q+1,u) - V(t,q,u)]} &\approx \underbrace{1 + k \, [V(t,q+1,u) - V(t,q,u)] + \frac{1}{2}(k\, [V(t,q+1,u) - V(t,q,u)])^2}_{ \hat{V}^a\bigl(V(q+1)- V(q) \bigl)}\,, \\
    e^{k[V(t,q-1,u) - V(t,q,u)]} &\approx \underbrace{ 1 + k\, [V(t,q-1,u) - V(t,q,u) ] + \frac{1}{2}(k \, [V(t,q-1,u) - V(t,q,u)])^2}_{\hat{V}^b\bigl(V(q-1)-V(q)\bigl)} \, ,
\end{align*}
where we used the notation $V(q)$ instead of  $V(t,q,u)$ to make the expression compact. The approximation transforms (\ref{eq:HJB_EQ_FI}) into the following PDE
    \begin{align*}
        0&=\partial_t V - \eta \, u \, \partial_u V - \phi\,q^2  + \frac{1}{2}\,\partial_{u,u}^2 V   + (\mu - \eta \, \sigma \, \qconst \, u) \, q\\
&\qquad + \frac{1}{k}\,  e^{-1} \, \bigl(1+ k\Delta^- V + \frac{k^2}{2} (\Delta^- V)^2 \bigl) \,  (\varphi + \psi \, e^{ -\qconst \, \sigma \, \gamma  \, u})\\
&\qquad + \frac{1}{k}\,  e^{-1} \, \bigl(1+ k\Delta^+ V + \frac{k^2}{2} (\Delta^+ V)^2 \bigl) \,   (\varphi + \psi \, e^{ \qconst \, \sigma \, \gamma \, u})  \,.
    \end{align*}

Furthermore, let us consider $\gamma \, \qconst \, u$ small enough so that we can also perform a first order Taylor expansion around zero for both of the exponential terms of $\pm \,\sigma \, \qconst \, \gamma \, u$. We have that
\begin{equation}
    \label{eq:HJB_2nd_FI_tot}
\begin{aligned}
     0&=\partial_t V - \eta \, u \, \partial_u V - \phi\,q^2  + \frac{1}{2}\,\partial_{u,u}^2 V   + (\mu - \eta \, \sigma \, \qconst \, u) \, q\\
&\qquad + \frac{1}{k}\,  e^{-1} \, \bigl(1+ k\Delta^- V + \frac{k^2}{2} (\Delta^- V)^2 \bigl) \,  (\varphi + \psi \, (1 -\qconst \, \sigma \, \gamma \, u \bigl)) \\
&\qquad + \frac{1}{k}\,  e^{-1} \, \bigl(1+ k\Delta^+ V + \frac{k^2}{2} (\Delta^+ V)^2 \bigl) \,   (\varphi + \psi (1 +\qconst \, \sigma \, \gamma  \, u))  \,.
\end{aligned}
\end{equation}
We employ the ansatz that $V$ is quadratic in $q$, \textit{i.e.}, there exists $A: \mathbb{R}^+ \rightarrow \mathbb{R}$ and $B,C : \mathbb{R}^+ \times \mathbb{R} \rightarrow \mathbb{R}$ such that $$V(t,q,u) = q^2 \, A(t) + q \, B(t,u) + C(t,u) \, .$$
We obtain the following characterisation of the approximate solution.
\begin{prop}
    Let $A : \mathbb{R}^+ \rightarrow \mathbb{R}$ and $B\,, C : \mathbb{R}^+ \times \mathbb{R} \rightarrow \mathbb{R}$ be a solution to the following system of PDEs 
    \begin{align}
    \begin{cases}
    \label{eq:HJB_2nd_FI}
        0 &= \partial_t A - \phi + 4 \, (\varphi + \psi )\, e^{-1} \, k \, A^2 \, ,\\
        0&= \partial_t B + \frac{1}{2} \, \partial^2_{u,u} B + (\mu - \eta  \, \sigma \, \qconst \, u) - \eta \, u \partial_u B + 4 \,( \varphi + \psi) \, e^{-1}\, k \, A \, B + 4 \, e^{-1} \, \psi \, \qconst \,\sigma \, \gamma \, u \, A  \\&\qquad+ 4 \, e^{-1} \, k \, \qconst \, \gamma \,\sigma \, u \, \psi \, A^2  \, , \\
        0 &= \partial_t C + \frac{1}{2} \, \partial^2_{u,u} C - \eta \, u \, \partial_u C + \frac{1}{k} \, e^{-1} \, \bigl(2 \, (\varphi + \psi) + 2 \, k \, A \,( \psi + \varphi) + 2\, \psi \, k \, B \,\qconst \, \sigma \, \gamma \, u \\&\qquad+ k^2 (\varphi + \psi) \, ( A^2 + B^2) + 2 \, k^2 \, \psi \,\qconst \, \sigma \, \gamma \, u \, A \, B \bigl) \, , 
        \end{cases}
    \end{align}
    with terminal condition $A(T) = -\alpha \, , \,  B(T,u) = 0 \, , \,  C(T,u) = 0$. Define 
    \begin{equation}
        \Tilde{V}(t,q,u) = q^2 \, A(t) + q \, B(t,u) + C(t,u) \, . 
    \end{equation}
    It follows that $\Tilde{V}$ solves the approximate HJB (\ref{eq:HJB_2nd_FI_tot}).  
    \end{prop}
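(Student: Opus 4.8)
The plan is to verify the claim by direct substitution. Since the maximisers have already been resolved in \eqref{eq:HJB_2nd_FI_tot} (the optimal $\delta^{a,*}=\tfrac1k-\Delta^-V$, $\delta^{b,*}=\tfrac1k-\Delta^+V$ were inserted and the resulting exponentials Taylor-expanded, and in this regime $-\delta_\infty=-\infty$ so no constraint binds and no indicator survives), it suffices to plug the quadratic ansatz $\Tilde V(t,q,u)=q^2A(t)+qB(t,u)+C(t,u)$ into \eqref{eq:HJB_2nd_FI_tot}, compute the partials and the finite differences, and match the coefficients of $q^2$, $q^1$ and $q^0$ separately. The terminal condition is immediate: $\Tilde V(T,q,u)=q^2A(T)+qB(T,u)+C(T,u)=-\alpha\,q^2$ since $A(T)=-\alpha$ and $B(T,\cdot)=C(T,\cdot)=0$.

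First I would record the ingredients. The derivatives $\partial_t\Tilde V=q^2A'+q\,\partial_tB+\partial_tC$, $\partial_u\Tilde V=q\,\partial_uB+\partial_uC$ and $\partial^2_{u,u}\Tilde V=q\,\partial^2_{u,u}B+\partial^2_{u,u}C$ contribute nothing at order $q^2$ beyond the $A'$ term. The finite differences are
\[
\Delta^+\Tilde V=2qA+(A+B)\,,\qquad \Delta^-\Tilde V=-2qA+(A-B)\,,
\]
and squaring gives, with a consistent sign choice, $(\Delta^\pm\Tilde V)^2=4q^2A^2\pm 4qA(A\pm B)+(A\pm B)^2$; this is the term that feeds all three powers of $q$ simultaneously. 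The key simplification is to split the intensity prefactors symmetrically, writing $\varphi+\psi(1\pm\qconst\,\sigma\,\gamma\,u)=(\varphi+\psi)\pm\psi\,\qconst\,\sigma\,\gamma\,u$, i.e.\ a symmetric part $(\varphi+\psi)$ and an antisymmetric part $\pm\psi\,\qconst\,\sigma\,\gamma\,u$ for the bid ($+$) and ask ($-$) sides, which pair respectively with $\Delta^+\Tilde V$ and $\Delta^-\Tilde V$.

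Next I would expand each bracket $1+k\Delta^\pm\Tilde V+\tfrac{k^2}{2}(\Delta^\pm\Tilde V)^2$, multiply by its prefactor, add the two sides, and collect powers of $q$. At order $q^2$ the $q^2$-coefficients of both brackets coincide (both equal $2k^2A^2$), so the antisymmetric $u$-terms cancel and only $2(\varphi+\psi)$ survives; together with $A'-\phi$ this yields the $u$-free first PDE for $A$. At order $q^1$ the symmetric part produces $4(\varphi+\psi)e^{-1}kAB$ while the antisymmetric part produces the two $\psi\,\qconst\,\sigma\,\gamma\,u$ terms, which combined with $\partial_tB+\tfrac12\partial^2_{u,u}B-\eta u\,\partial_uB$ and the running reward $(\mu-\eta\sigma\qconst u)$ reproduce the second PDE for $B$. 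At order $q^0$ the symmetric part assembles the $2(\varphi+\psi)$, $2kA(\varphi+\psi)$ and $k^2(\varphi+\psi)(A^2+B^2)$ contributions and the antisymmetric part the $2\psi k B\,\qconst\,\sigma\,\gamma\,u$ and $2k^2\psi\,\qconst\,\sigma\,\gamma\,u\,AB$ contributions, giving the third PDE for $C$. Since each power of $q$ must vanish identically, and the three displayed PDEs are exactly these three identities, $\Tilde V$ solves \eqref{eq:HJB_2nd_FI_tot}.

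The only real work is algebraic bookkeeping, and the \emph{main obstacle} is the quadratic terms $(\Delta^\pm\Tilde V)^2$: because each contributes to all of $q^2$, $q^1$ and $q^0$, a sign slip in $A\pm B$ or in the opposite-sign linear slopes would contaminate several PDEs at once. Carrying out the computation through the symmetric/antisymmetric decomposition of the prefactors — rather than expanding the bid and ask sides independently — is what keeps the cancellations transparent and, in particular, makes the decoupling of the $A$-equation from $u$ manifest.
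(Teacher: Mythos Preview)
Your proposal is correct and follows the paper's approach exactly: the paper's proof reads in its entirety ``The proof follows by direct substitution.'' Your write-up simply spells out that substitution in detail, with the symmetric/antisymmetric decomposition of the prefactors being a clean way to organise the bookkeeping.
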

\begin{proof}
    The proof follows by direct substitution.
\end{proof}
Next, we study the system in (\ref{eq:HJB_2nd_FI}) in more detail. Standard results from the theory of Riccati differential equations show that if  $\sqrt{\phi}$ is not equal to $\sqrt{4 \, (\varphi + \psi) \, e^{-1} \, k} \, \alpha$, then there is a unique non trivial solution to the ODE satisfied by $A$, given by 

 $$A(t) = \frac{\sqrt{\phi}}{\sqrt{\kappa}} \, \frac{1- e^{2 \, \sqrt{\phi \, \kappa} \, (T-t)} \, \beta}{1+ e^{2 \, \sqrt{\phi \, \kappa} \, (T-t)} \, \beta} \, ,$$
 where $\beta = \frac{\sqrt{\phi} + \sqrt{\kappa} \, \alpha}{{\sqrt{\phi} - \sqrt{\kappa} \, \alpha}}$ and $\kappa = 4 \, (\varphi + \psi) \, e^{-1} \, k$. If  $\sqrt{\phi} =  \sqrt{4 \, (\varphi + \psi) \, e^{-1} \, k} \, \alpha$, then $A = - \alpha$ solves the ODE.

Then, we propose the ansatz that  $B$ is linear in $u$ and $C$ quadratic in $u$. We have the following result. 
\begin{prop}
    Let  $\sqrt{\phi} \neq \sqrt{\kappa} \, \alpha$ and let $$b_0 \,, b_1\,, c_0\,, c_1\,, c_2 : [0,T] \rightarrow \mathbb{R}$$ be the unique solution to the system of ODEs 
    \begin{align}
    \begin{cases}
    \label{eq:systemODE_FI}
        0&= b_0' + \mu + 4 \,  k \, (\psi + \varphi) \, e^{-1} \, A \, b_0 \, , \\
        0&= b_1' - \eta \, \sigma \, \qconst - \eta \, b_1 + 4 \, (\psi + \varphi) \, e^{-1} \, A \,k\, b_1 + 4 \, e^{-1} \, \psi \,\qconst \, \sigma \, \gamma \, A  + 4 \, e^{-1} \, k \,\gamma\,\qconst \, \sigma  \, \psi \, A^2  \, , \\
        0&=c_0' + c_2 + \frac{e^{-1}}{k} \bigl (2 \, (\psi + \varphi) + 2 \, k \, A \,  ( \varphi + \psi)+ k^2 \, (\varphi + \psi) \, (A^2 + b_0^2)\bigl) \, , \\
        0&=c_1' - \eta c_1 + \frac{e^{-1}}{k} \bigl( \, 2 \, \psi \, k \, \sigma \, \gamma \, \qconst \, b_0 + k^2 \, ( \psi + \varphi) \, (2 \, b_0 \, b_1) + 2 \, k^2 \, \psi \, \gamma \,  \sigma \,\qconst \, A \, b_0 \bigl)\, , \\
        0&= c_2' - 2 \, \eta \, c_2 + \frac{e^{-1}}{k} \bigl( \, k^2 \, (\psi + \varphi) \, b_1^2 + 2 \, k^2 \, \psi \, \sigma \, \gamma \, \qconst \, A \, b_1 + 2 \, \psi \, k \, \sigma \, \gamma \, \qconst \, b_1 \bigl) \, , 
    \end{cases}
    \end{align}
    with terminal condition zero. Then 
    \begin{align}
        B(t,u) &:= b_0(t) + u \, b_1(t) \, , \\
        C(t,u) &:= c_0(t) + u \, c_1(t) + u^2 \, c_2(t) \, ,
    \end{align}
    solve the PDEs in (\ref{eq:HJB_2nd_FI}).
\end{prop}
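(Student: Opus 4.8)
The plan is to verify the claim by direct substitution, exploiting the fact that the two PDEs in (\ref{eq:HJB_2nd_FI}) preserve the polynomial degree in $u$ of their candidate solutions. Concretely, with $B$ affine and $C$ quadratic in $u$, every term appearing in the $B$-equation is at most linear in $u$ and every term in the $C$-equation is at most quadratic in $u$; hence each PDE splits into finitely many scalar identities obtained by matching the coefficients of $u^0$, $u^1$ (and, for $C$, $u^2$). The content of the proposition is precisely that these matched identities coincide with the ODE system (\ref{eq:systemODE_FI}), and that the latter admits a unique solution on $[0,T]$. Throughout, $A(t)$ is regarded as a known coefficient, well-defined because $\sqrt{\phi} \neq \sqrt{\kappa}\,\alpha$ guarantees that the non-trivial Riccati branch exists and stays finite on $[0,T]$.

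For the equation satisfied by $B$, I would first record $\partial_t B = b_0' + u\,b_1'$, $\partial_u B = b_1$, and $\partial^2_{u,u} B = 0$, and insert these together with $B = b_0 + u\,b_1$ into the second line of (\ref{eq:HJB_2nd_FI}). Grouping the resulting expression by powers of $u$, the $u^0$-coefficient yields $b_0' + \mu + 4\,(\psi+\varphi)\,e^{-1}\,k\,A\,b_0 = 0$, which is the first ODE in (\ref{eq:systemODE_FI}), while the $u^1$-coefficient (collecting $b_1'$, $-\eta\,\sigma\,\qconst$, $-\eta\,b_1$, the linear-in-$u$ part of $4\,(\varphi+\psi)\,e^{-1}\,k\,A\,B$, and the two explicit $u$-source terms) yields the second ODE. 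Since no term is quadratic or higher in $u$, the affine ansatz closes exactly, and the two identities together are equivalent to the $B$-PDE holding for every $u$.

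For $C$ I would proceed analogously, using $\partial_t C = c_0' + u\,c_1' + u^2\,c_2'$, $\partial_u C = c_1 + 2\,u\,c_2$, and $\partial^2_{u,u} C = 2\,c_2$. The only nonlinear inputs are $B^2 = b_0^2 + 2\,b_0\,b_1\,u + b_1^2\,u^2$ and the products $u\,B$ and $u\,A\,B$, all of which are at most quadratic in $u$; this is the closure observation that makes the quadratic ansatz for $C$ self-consistent. Matching coefficients then gives: the $u^0$-identity is the third ODE (involving $c_0'$, the algebraic term $c_2$ coming from $\tfrac{1}{2}\partial^2_{u,u}C$, and the $u$-independent part of the bracket); the $u^1$-identity, after extracting the linear parts of $2\,\psi\,k\,B\,\qconst\,\sigma\,\gamma\,u$, of $k^2\,(\varphi+\psi)\,B^2$, and of $2\,k^2\,\psi\,\qconst\,\sigma\,\gamma\,u\,A\,B$, is the fourth ODE; and the $u^2$-identity, collecting $c_2'$, $-2\,\eta\,c_2$, and the quadratic parts of the same three bracket terms, is the fifth ODE. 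The terminal conditions $B(T,\cdot)\equiv 0$ and $C(T,\cdot)\equiv 0$ translate coefficient-wise into the zero terminal data $b_0(T)=b_1(T)=c_0(T)=c_1(T)=c_2(T)=0$.

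Finally, for existence and uniqueness I would observe that, once $A$ is fixed, the equations for $b_0$ and $b_1$ are scalar \emph{linear} ODEs with coefficients that are continuous (indeed smooth) and bounded on $[0,T]$, so standard linear ODE theory gives a unique solution on the whole interval; feeding $A, b_0, b_1$ into the equations for $c_0, c_1, c_2$ produces a further cascade of linear ODEs with continuous coefficients and explicit sources, again uniquely solvable on $[0,T]$. I do not expect a genuine analytic obstacle: the argument is a finite, if somewhat lengthy, bookkeeping of coefficients. The one point deserving care — and the step I would flag as the main subtlety — is checking the degree-closure, namely that the $C$-equation generates no $u^3$ term (the quadratic inputs $B^2$ and $u\,A\,B$ land exactly at degree two), since this is what makes the quadratic ansatz legitimate rather than merely an approximation.
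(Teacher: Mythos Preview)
Your proposal is correct and follows essentially the same approach as the paper, which simply states that the proof follows by direct substitution. Your write-up is a careful unpacking of exactly that substitution, with the added (and helpful) observations about degree-closure of the polynomial ans\"atze and the linear cascade that guarantees existence and uniqueness.
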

\begin{proof}
    The proof follows by direct substitution.
\end{proof}

Existence and uniqueness of the ODE follows by the existence and uniqueness of the Riccati ODE solved by $A$, and that all the other ODEs are linear. The explicit formula can be obtain through the Duhammel's formula.

\section{Partial Information}

\label{sec: Imperfect Information}
\subsection{Deriving the filtered processes dynamics}
In this section we consider the case where the market maker does not have perfect information. More precisely,  the information  comes only from mid-prices, \textit{i.e.}, $\process[S]$. We define  $\process[\mcF^S]$ the filtration generated by $\process[S]$. We directly remark that $\mathcal{F}_t^S \subset \mathcal{F}_t$ for $t\in\mfT$, and that  $U_t$ is not $\mathcal{F}^S_t$-measurable. The goal of this part is to formulate a stochastic control problem with complete information that employs filters to obtain the best estimate of the unobserved fad process.\footnote{This is similar to recent works in the mathematical finance literature such as \cite{BARZY,knochenhauer2024continuous}.} 
Thus, we face the problem of filtering the fads $U_t$ with knowledge of $(S_u)_{u\in[0,t]}$, and where
\begin{align}
\d U_t &= f(U_t) \d t + \d B_t\,,\\
\d S_t &= h(U_t)\d t + \sigma \,\d\Bar{W}_t\,,
\end{align}
with $f(x) = -\eta\,x$, $h(x) = \mu - \eta\,\qconst \,\sigma x$, and where $\Bar{W}_t = \pconst\, Z_t + \qconst\,B_t$ is a Brownian Motion, with $\langle B,\Bar{W} \rangle_t = \qconst\,t$. We define the innovation process $\process[I]$ as 
\begin{equation}
    \d I_t = \d S_t - \pi_t(h) \d t\,,\qquad I_0\in\mathbb{R}\,,
\end{equation}
where  $\pi_t(h):= \mathbb{E}[h(U_t) | \mathcal{F}_t^S]$.

The following proposition shows that $\sigma^{-1} \, (I_t - I_0)$ is a Brownian motion which later on we use to derive the dynamics of the filtered processes.
\begin{prop}
    If  $\mathbb{E}[\,\int_0^t \lvert h(U_s) \rvert \d s] < \infty$, then $\Tilde{I}_t := \sigma ^{-1}\, (I_t - I_0)$ is a $\mathcal{F}_t^S$ Brownian Motion under $\mathbb{P}$.
\end{prop}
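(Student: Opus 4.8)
The plan is to verify the hypotheses of Lévy's characterisation of Brownian motion: I would show that $\process[\Tilde{I}]$ is a continuous $\process[\mathcal{F}^S]$-martingale with $\Tilde{I}_0 = 0$ and quadratic variation $\langle \Tilde{I}\rangle_t = t$. Substituting $\d S_t = h(U_t)\,\d t + \sigma\,\d\Bar{W}_t$ into the definition of the innovation process gives
\begin{equation*}
\d I_t = \bigl(h(U_t) - \pi_t(h)\bigr)\,\d t + \sigma\,\d\Bar{W}_t\,,
\end{equation*}
so that $\Tilde{I}_t = \sigma^{-1}\int_0^t \bigl(h(U_s)-\pi_s(h)\bigr)\,\d s + \Bar{W}_t$. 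Continuity is immediate since both summands have continuous paths, and $\process[\Tilde{I}]$ is $\process[\mathcal{F}^S]$-adapted because $\process[S]$ is, and because (a suitable progressively measurable version of) $\pi_s(h) = \mathbb{E}[h(U_s)\mid\mathcal{F}^S_s]$ is $\process[\mathcal{F}^S]$-adapted by construction.

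The heart of the argument is the $\process[\mathcal{F}^S]$-martingale property, which I would establish by computing $\mathbb{E}[I_t - I_s \mid \mathcal{F}^S_s]$ for $s \leq t$ and showing it vanishes. The drift contribution is handled by Fubini's theorem --- justified precisely by the standing assumption $\mathbb{E}[\int_0^t |h(U_r)|\,\d r] < \infty$ --- which lets me interchange the conditional expectation and the time integral:
\begin{equation*}
\mathbb{E}\Bigl[\int_s^t \bigl(h(U_r)-\pi_r(h)\bigr)\,\d r \;\Big|\; \mathcal{F}^S_s\Bigr] = \int_s^t \mathbb{E}\bigl[h(U_r)-\pi_r(h)\mid\mathcal{F}^S_s\bigr]\,\d r\,.
\end{equation*}
For each $r\in[s,t]$ the tower property applied to the nesting $\mathcal{F}^S_s\subset\mathcal{F}^S_r$ gives $\mathbb{E}[h(U_r)\mid\mathcal{F}^S_s] = \mathbb{E}[\pi_r(h)\mid\mathcal{F}^S_s]$, so the integrand is identically zero. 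For the martingale part, since $\process[\Bar{W}]$ is an $\process[\mathcal{F}]$-Brownian motion and $\mathcal{F}^S_s\subset\mathcal{F}_s$, conditioning first on $\mathcal{F}_s$ and using the tower property again yields $\mathbb{E}[\Bar{W}_t-\Bar{W}_s\mid\mathcal{F}^S_s]=0$. Hence $\mathbb{E}[I_t - I_s\mid\mathcal{F}^S_s]=0$.

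It remains to identify the quadratic variation. Because the finite-variation drift does not contribute, the pathwise quadratic variation of $\process[I]$ is that of $\sigma\,\Bar{W}$, namely $\langle I\rangle_t = \sigma^2\,t$, so $\langle \Tilde{I}\rangle_t = \sigma^{-2}\langle I\rangle_t = t$; this quantity is intrinsic and therefore unchanged when $\process[I]$ is regarded as an $\process[\mathcal{F}^S]$-martingale. Lévy's characterisation then shows that $\process[\Tilde{I}]$ is an $\process[\mathcal{F}^S]$-Brownian motion under $\mathbb{P}$.

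I expect the only genuinely delicate points to be bookkeeping rather than conceptual: ensuring a progressively measurable version of $\pi_\cdot(h)$ so that the pathwise and stochastic-integral statements make sense, and rigorously invoking Fubini --- which is exactly where the integrability hypothesis $\mathbb{E}[\int_0^t|h(U_r)|\,\d r]<\infty$ enters. The reduction of the $\Bar{W}$-term to zero via projection onto the smaller filtration is the one step that genuinely uses that $\process[\Bar{W}]$ is a martingale in the \emph{larger} filtration $\process[\mathcal{F}]$ rather than merely in $\process[\mathcal{F}^S]$.
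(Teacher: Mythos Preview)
Your proposal is correct and follows essentially the same approach as the paper: both argue via L\'evy's characterisation by first establishing that $\process[I]$ is an $\process[\mathcal{F}^S]$-martingale and then identifying its quadratic variation as $\langle I\rangle_t=\sigma^2\,t$. The paper merely sketches these steps and cites Proposition~2.30 in Bain--Crisan, whereas you have written out the details (the tower-property argument for the drift and the projection of $\bar W$ onto the smaller filtration) explicitly.
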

\begin{proof}
   The proof is standard so we only describe the steps. First, we prove that   $I$ is an $L^2(\Omega, \mathcal{F}^S, \mathbb{P})$ martingale. Second, we show that $ \langle I \rangle _t =  \sigma^2 t$, and we conclude with the Lévy characterisation of the Brownian Motion; see Proposition 2.30 in \cite{bain2009fundamentals}. Finally, the condition  $\mathbb{E}[\,\int_0^t \lvert h(U_s) \rvert \d s] < \infty$ holds since $\process[U]$ is a continuous Gaussian process.
\end{proof}

\begin{prop}
\label{marting:rpz}
    If $\mathbb{P}(\int_0^t \lvert \pi_t(h) \rvert^2 ds < \infty) = 1$, then for all $L^2(\Omega, \mathcal{F}^S, \mathbb{P})$-martingales $\process[\eta]$, we have a martingale representation of the form $$\eta_t = \eta_0 +\int_0^t \nu_s \, \sigma^{-1} \, \d I_s\,,$$ where $\process[\nu]$ is $\process[\mcF^S]$-progressively measurable. 
    \begin{proof}
       The proof follows from Proposition 2.31 in  \cite{bain2009fundamentals}. The condition  $\mathbb{P}(\int_0^t \lvert \pi_t(h) \rvert^2 ds < \infty) = 1$ holds since we have $\mathbb{E}[\int_0^t | \pi_s(h) |^2 \, \d s] < \infty$ by Jensen inequality and the fact that $\process[U]$ is a continuous Gaussian process.
    \end{proof}
\end{prop}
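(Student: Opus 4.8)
The plan is to establish the Fujisaki--Kallianpur--Kunita martingale representation theorem for the innovations process, which is precisely the content invoked through Proposition 2.31 of \cite{bain2009fundamentals}. The preceding proposition shows that $\tilde I_t=\sigma^{-1}(I_t-I_0)$ is an $\mathcal{F}_t^S$-Brownian motion under $\mathbb{P}$, so it is tempting to apply the classical Brownian representation theorem directly to $\tilde I$. The genuine difficulty---and the reason the statement is nontrivial---is that the augmented natural filtration generated by $\tilde I$ may be strictly smaller than $\mathcal{F}^S$ (the so-called innovations problem), so the classical theorem does not apply to $\mathcal{F}^S$-martingales verbatim. I would therefore avoid relying on any coincidence of filtrations and instead pass to a reference measure under which the observation itself is a Brownian motion generating $\mathcal{F}^S$.

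Concretely, first I would introduce a reference measure $\mathbb{Q}\sim\mathbb{P}$ through Girsanov's theorem, with density $\d\mathbb{Q}/\d\mathbb{P}$ equal to the Dol\'eans-Dade exponential of $-\int_0^\cdot\sigma^{-1}h(U_s)\,\d\bar{W}_s$, chosen so that under $\mathbb{Q}$ the scaled observation $W^0_t:=\sigma^{-1}(S_t-S_0)$ is a standard Brownian motion. The integrability needed for this density to be a true martingale follows from the affine form of $h$ and the Gaussianity of $\process[U]$, exactly as in the two preceding propositions. Because $S_0$ is deterministic, $W^0$ and $S$ generate the same augmented filtration $\mathcal{F}^S$; hence under $\mathbb{Q}$ the filtration $\mathcal{F}^S$ \emph{is} the natural filtration of a Brownian motion, and the classical representation theorem applies, expressing every square-integrable $\mathbb{Q}$-$\mathcal{F}^S$-martingale as a stochastic integral against $W^0$.

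Next I would take an arbitrary $L^2(\Omega,\mathcal{F}^S,\mathbb{P})$-martingale $\eta$ and transfer it to $\mathbb{Q}$ by Bayes' rule: setting $\zeta_t:=\mathbb{E}_{\mathbb{Q}}[\d\mathbb{P}/\d\mathbb{Q}\mid\mathcal{F}_t^S]$, the product $\eta_t\zeta_t$ is a $\mathbb{Q}$-$\mathcal{F}^S$-martingale, and $\zeta$ is itself such a martingale. Applying the $\mathbb{Q}$-representation to both $\eta\zeta$ and $\zeta$ yields $\d(\eta_t\zeta_t)=\xi_t\,\d W^0_t$ and $\d\zeta_t=\zeta_t\,\sigma^{-1}\pi_t(h)\,\d W^0_t$, the latter because the projected density is precisely the Girsanov exponential turning $W^0$ into a process with $\mathcal{F}^S$-drift $\sigma^{-1}\pi_t(h)$. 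Using the It\^o product rule to solve for $\d\eta_t$ and then substituting the innovation identity $\d\tilde I_t=\d W^0_t-\sigma^{-1}\pi_t(h)\,\d t$ recombines the resulting terms into a single stochastic integral against $\tilde I$; the drift terms cancel exactly because $\eta$ is a $\mathbb{P}$-martingale, which delivers the claimed $\mathcal{F}^S$-progressively measurable integrand $\nu$ with $\eta_t=\eta_0+\int_0^t\nu_s\,\d\tilde I_s=\eta_0+\int_0^t\nu_s\,\sigma^{-1}\,\d I_s$.

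The main obstacle is conceptual rather than computational: one must resist invoking the Brownian representation theorem directly for $\tilde I$, since $\mathcal{F}^{\tilde I}$ need not equal $\mathcal{F}^S$. The reference-measure device sidesteps this entirely, and the only delicate bookkeeping is the integration-by-parts step that re-expresses the $\mathbb{Q}$-representation against $W^0$ as a $\mathbb{P}$-representation against the innovations $\tilde I$; the hypothesis $\mathbb{P}(\int_0^t|\pi_s(h)|^2\,\d s<\infty)=1$ is exactly what guarantees that the integrand $\nu$ is well defined in this last step.
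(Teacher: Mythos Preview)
Your proposal is correct: what you outline is precisely the reference-measure proof of the Fujisaki--Kallianpur--Kunita representation theorem, which is exactly the content of Proposition~2.31 in \cite{bain2009fundamentals} that the paper invokes. The paper's own proof consists only of that citation together with the verification (via Jensen's inequality and the Gaussianity of $\process[U]$) that the integrability hypothesis $\mathbb{P}\bigl(\int_0^t|\pi_s(h)|^2\,\d s<\infty\bigr)=1$ holds, so you have supplied the argument behind the cited result rather than taken a different route.
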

Next we define 
\begin{equation}
    \hat{U}_t := \mathbb{E}[U_t | \mathcal{F}_t^S] \, ,
\end{equation} and $\Gamma_t := \hat{U}_t + \int_0^t \eta \, \hat{U}_s \d s$. We then have the following result.
\begin{prop}
The process    $\process[\Gamma]$ is an $\process[\mcF^S]$-martingale under $\mathbb{P}$.
\end{prop}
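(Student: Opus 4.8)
The plan is to verify the defining martingale identity $\mathbb{E}[\Gamma_t\mid\mathcal{F}_s^S]=\Gamma_s$ for $0\le s\le t$ directly, using only the tower property of conditional expectation and the integrated dynamics of the Ornstein--Uhlenbeck fad. The facts I will invoke are that $\process[U]$ has moments bounded uniformly in $t$ (it is a Gaussian OU process), that $\process[B]$ is an $\mathcal{F}$-Brownian motion, and that $\mathcal{F}_t^S\subset\mathcal{F}_t$. Adaptedness and integrability are immediate: $\hat U_t=\mathbb{E}[U_t\mid\mathcal{F}_t^S]$ is $\mathcal{F}_t^S$-measurable and lies in $L^1$ because $U_t$ does, and $\int_0^t\eta\,\hat U_s\,\d s$ inherits both properties, so $\process[\Gamma]$ is $\process[\mcF^S]$-adapted and integrable.

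Fixing $0\le s\le t$, I split the integral $\int_0^t\eta\,\hat U_r\,\d r$ at $s$, observe that the part over $[0,s]$ is $\mathcal{F}_s^S$-measurable, apply conditional Fubini to the part over $[s,t]$, and use the tower property $\mathbb{E}[\hat U_r\mid\mathcal{F}_s^S]=\mathbb{E}[U_r\mid\mathcal{F}_s^S]$ for $r\ge s$. This reduces the martingale identity to the single relation
$$\mathbb{E}[U_t\mid\mathcal{F}_s^S]+\eta\int_s^t\mathbb{E}[U_r\mid\mathcal{F}_s^S]\,\d r=\hat U_s\,.$$
To obtain it I integrate $\d U_r=-\eta\,U_r\,\d r+\d B_r$ from $s$ to $t$, giving $U_t=U_s-\eta\int_s^t U_r\,\d r+(B_t-B_s)$, and then take $\mathbb{E}[\,\cdot\mid\mathcal{F}_s^S]$, once more interchanging expectation and the time integral. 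Substituting the resulting identity back collapses $\mathbb{E}[\Gamma_t\mid\mathcal{F}_s^S]$ to $\Gamma_s$.

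The one step that requires genuine care---the main, if modest, obstacle---is showing $\mathbb{E}[B_t-B_s\mid\mathcal{F}_s^S]=0$. This is \emph{not} a direct consequence of independence of increments, because $\process[\mcF^S]$ is a strict subfiltration generated by $\process[S]$, which is driven by the correlated noise $\Bar{W}=\pconst\,Z+\qconst\,B$; in particular $B_t-B_s$ need not be independent of $\mathcal{F}_s^S$. The fix is to condition first on the larger filtration: since $\process[B]$ is an $\mathcal{F}$-Brownian motion, $\mathbb{E}[B_t-B_s\mid\mathcal{F}_s]=0$, and as $\mathcal{F}_s^S\subset\mathcal{F}_s$ the tower property yields $\mathbb{E}[B_t-B_s\mid\mathcal{F}_s^S]=0$. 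This closes the argument and shows $\process[\Gamma]$ is an $\process[\mcF^S]$-martingale. Conceptually, the statement records that $\hat U$ has finite-variation compensator $-\int_0^t\eta\,\hat U_s\,\d s$, which is precisely the first step of the Fujisaki--Kallianpur--Kunita derivation; combined with the martingale representation in Proposition~\ref{marting:rpz}, it is what eventually produces the SDE satisfied by the filter.
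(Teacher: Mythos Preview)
Your proof is correct and follows essentially the same route as the paper: both verify the martingale property directly by splitting the integral at $s$, using the tower property to reduce $\mathbb{E}[\hat U_r\mid\mathcal{F}_s^S]$ to $\mathbb{E}[U_r\mid\mathcal{F}_s^S]$, integrating the OU dynamics over $[s,t]$, and handling $\mathbb{E}[B_t-B_s\mid\mathcal{F}_s^S]=0$ by first conditioning on the larger filtration $\mathcal{F}_s$. Your write-up is, if anything, a bit more explicit about the justification of the Brownian increment step and the use of conditional Fubini.
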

\begin{proof}
By construction $\Gamma_t$ is $\mathcal{F}_t^S$-measurable for all $t > 0$. Moreover, since $U \in L^1(\Omega, \mathcal{F}, \mathbb{P})$, then, by the Jensen inequality, $\hat{U} \in L^1(\Omega, \mathcal{F}^S, \mathbb{P})$. Then, take $0\leq s \leq t$, it follows that 
\begin{equation}
 \mathbb{E}[\Gamma_t | \mathcal{F}_s^S] = \mathbb{E}[\mathbb{E}[U_t | \mathcal{F}_t^S] | \mathcal{F}_s^S] + \int_0^s \eta\,\hat{U}_r \d r + \int_s^t \eta \, \mathbb{E}[\hat{U}_r | \mathcal{F}_s^S]\,  \d r\,.
\end{equation}
The first term equals $\mathbb{E}[U_t | \mathcal{F}_s^S]$, using $\mathcal{F}_s^S \subset \mathcal{F}_t^S$ and  the tower property. The third equals $\int_s^t \eta\,\mathbb{E}[U_r | \mathcal{F}_s^S]\,  \d r$ by the same argument. Moreover, $$\mathbb{E}[U_t | \mathcal{F}_s^S] = \mathbb{E}\left[- \eta \int_s^t U_r  \, \d r| \mathcal{F}_s^S \right] + \mathbb{E}[B_t - B_s | \mathcal{F}_s^S] + \hat{U}_s = \mathbb{E}\left[- \eta \int_s^t U_r\, \d r | \mathcal{F}_s^S\right]  + \hat{U}_s, $$ 
by the tower property. 
Finally, it follows that $\mathbb{E}[\Gamma_t | \mathcal{F}_s^S] = \Gamma_s$ and
  $\process[\Gamma]$ is an $\process[\mcF^S]$-martingale under $\mathbb{P}.$
\end{proof}

The following theorem presents the dynamic of the filtering process $\process[\hat{U}]$ which we use to formulate a stochastic control problem with complete information.
\begin{thm}
\label{thm:filter}
The process $\hat{U}_t$ satisfies 
\begin{equation}
\d \hat{U}_t = - \eta\, \hat{U}_t\, \d t + \sigma^{-1} \, (-\, \eta\, \qconst \,\hat{P}_t + \qconst) \, \d I_t\,,
\end{equation}
where $\hat{P}_t = \mathbb{E}[(U_t - \hat{U}_t)^2]$ is the conditional variance and is the unique solution of the Riccati equation 
    \begin{equation}
        \frac{\d }{\d t} \hat{P}_t=- \eta^2 \, \qconst^2 \, \hat{P}_t^2 - \hat{P}_t \, (2 \eta - 2 \eta \, \qconst^2) + \pconst^2 \, , \qquad \, \qquad \, \hat{P}_0 = 0 \, .
    \end{equation}
\end{thm}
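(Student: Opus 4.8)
The plan is to treat Theorem \ref{thm:filter} as a linear--Gaussian (Kalman--Bucy) filtering problem in which the signal noise $\process[B]$ and the observation noise $\process[\Bar{W}]$ are \emph{correlated}, with $\langle B,\Bar{W}\rangle_t=\qconst\,t$. The structural fact I would establish first is that, because $(U_t,S_t)$ solves a linear SDE driven by Gaussian noise started from a deterministic point, the conditional law of $U_t$ given $\mathcal{F}_t^S$ is Gaussian. Consequently its conditional variance $\mathrm{Var}(U_t\mid\mathcal{F}_t^S)$ is \emph{deterministic} and coincides with the unconditional mean-square error $\hat{P}_t=\mathbb{E}[(U_t-\hat{U}_t)^2]$. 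This identification is what lets the filtering equations close in finite dimension, and it is the one point that genuinely uses Gaussianity rather than soft martingale arguments.

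For the drift I would use the representation already in hand. Since $\process[\Gamma]$ with $\Gamma_t=\hat{U}_t+\int_0^t\eta\,\hat{U}_s\,\d s$ is an $\process[\mcF^S]$-martingale by the preceding proposition, Proposition \ref{marting:rpz} furnishes a progressively measurable $\process[\nu]$ with $\Gamma_t=\Gamma_0+\int_0^t\nu_s\,\sigma^{-1}\,\d I_s$, whence $\d\hat{U}_t=-\eta\,\hat{U}_t\,\d t+\nu_t\,\sigma^{-1}\,\d I_t$, matching the announced drift. To pin down $\nu_t$ I would compute the cross term $g_t:=\mathbb{E}[\hat{U}_t\,\Tilde{I}_t]=\mathbb{E}[U_t\,\Tilde{I}_t]$ in two ways, where $\Tilde{I}_t=\sigma^{-1}(I_t-I_0)$ is the innovation Brownian motion of the earlier proposition (the two expectations agree because $\Tilde{I}_t$ is $\mathcal{F}_t^S$-measurable). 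From the filter dynamics and Itô's product rule, $\dot{g}_t=-\eta\,g_t+\mathbb{E}[\nu_t]$. Writing instead $\d\Tilde{I}_t=-\eta\,\qconst\,(U_t-\hat{U}_t)\,\d t+\d\Bar{W}_t$ together with $\d U_t=-\eta\,U_t\,\d t+\d B_t$, the product rule, the covariation $\langle U,\Tilde{I}\rangle_t=\langle B,\Bar{W}\rangle_t=\qconst\,t$, and the orthogonality $\mathbb{E}[\hat{U}_t(U_t-\hat{U}_t)]=0$ give $\dot{g}_t=-\eta\,g_t-\eta\,\qconst\,\hat{P}_t+\qconst$. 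Comparing yields $\nu_t=-\eta\,\qconst\,\hat{P}_t+\qconst$, pointwise rather than merely in expectation because $\nu_t$ is deterministic by the Gaussian step. (Equivalently one reads off the same gain from the Fujisaki--Kallianpur--Kunita equation, whose diffusion coefficient is $\pi_t(hU)-\pi_t(h)\hat{U}_t+\sigma\,\qconst=-\eta\,\qconst\,\sigma\,\hat{P}_t+\sigma\,\qconst$.)

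For the conditional variance I would derive the dynamics of the error $e_t:=U_t-\hat{U}_t$. Subtracting the filter from the signal and using $\sigma^{-1}\,\d I_t=\d\Tilde{I}_t=-\eta\,\qconst\,e_t\,\d t+\d\Bar{W}_t$ gives $\d e_t=(-\eta+\eta\,\qconst\,\nu_t)\,e_t\,\d t+\d B_t-\nu_t\,\d\Bar{W}_t$. Applying Itô to $e_t^2$, using $\d\langle B-\nu\Bar{W}\rangle_t=(1-2\,\qconst\,\nu_t+\nu_t^2)\,\d t$, and taking expectations annihilates the martingale part and produces $\dot{\hat{P}}_t=2(-\eta+\eta\,\qconst\,\nu_t)\hat{P}_t+1-2\,\qconst\,\nu_t+\nu_t^2$. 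Substituting $\nu_t=\qconst(1-\eta\,\hat{P}_t)$ and $1-\qconst^2=\pconst^2$ collapses this to $\dot{\hat{P}}_t=-\eta^2\,\qconst^2\,\hat{P}_t^2-(2\eta-2\eta\,\qconst^2)\,\hat{P}_t+\pconst^2$, with $\hat{P}_0=0$ since $U_0=0$ is deterministic. Uniqueness on $[0,T]$ follows from Cauchy--Lipschitz for this scalar Riccati equation, whose right-hand side is locally Lipschitz and whose solution, being a genuine variance, stays bounded; the negative leading coefficient $-\eta^2\,\qconst^2$ rules out finite-time blow-up.

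The main obstacle is handling the noise correlation $\qconst$ correctly: it is responsible for the additive $\qconst$ in the gain (absent from the uncorrelated Kalman--Bucy filter) and for the cross term $-2\,\qconst\,\nu_t$ in the variance ODE, so the signs and placements in both $\d\Tilde{I}_t$ and $\langle U,\Tilde{I}\rangle_t=\qconst\,t$ must be tracked with care. Beyond that, the only non-routine input is the Gaussian-conditioning fact rendering $\mathrm{Var}(U_t\mid\mathcal{F}_t^S)$ deterministic; the remainder is bookkeeping with Itô's formula and the innovations martingale.
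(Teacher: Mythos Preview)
Your proposal is correct and follows essentially the same innovations--method architecture as the paper: martingale representation of $\Gamma$ to obtain $\d\hat{U}_t=-\eta\,\hat{U}_t\,\d t+\nu_t\,\sigma^{-1}\,\d I_t$, a cross-moment computation to identify the gain $\nu_t=\qconst-\eta\,\qconst\,\hat{P}_t$, and It\^o on the squared error $e_t^2$ to derive the Riccati equation for $\hat{P}_t$. The only notable variation is in the identification step: the paper tests $\Gamma$ against $\eta_t=\int_0^t\sigma^{-1}\xi_s\,\d I_s$ for \emph{arbitrary} $\xi$ and uses the Gaussian independence of $\tilde{U}_t$ from $\mathcal{F}_t^S$ to compute $\mathbb{E}[\xi_s U_s\tilde{U}_s]=\xi_s\hat{P}_s$, whereas you test against the single process $\tilde{I}$ and close the argument by invoking determinism of $\nu_t$; both routes lean on the same Gaussian-conditioning fact and arrive at the same place.
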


Given the dynamics of $\process[\hat{U}]$, the dynamics of $\process[S]$ can be rewritten using the Brownian motion $\process[\Tilde{I}]$ as 
$$\d S_t = \sigma \, \d \Tilde{I}_t + (\mu - \eta \, \sigma \, \qconst \, \hat{U}_t) \, \d t \, .$$

Following the motivation in the previous section, our goal is to solve a control problem of the form
\begin{equation}
  \sup_{\delta^{a,b}}\mathbb{E}^{\delta}\left[ X_T +Q_T \, S_T - \alpha\,Q_T^2 - \phi\,\int_0^T Q_u^2\,\d u \right]\, .
\end{equation}

In line with the full information model, here, we model $\process[{N}^a]$ and $\process[{N}^b]$ to be counting processes with intensities
\begin{align}
 {\lambda}_t^a &=\Big(\varphi  \, e^{-k \, \delta_t^a} + \psi \, e^{-k  \delta^a_t -\gamma \,( \qconst \,\sigma \, \hat{U}_t ) \vee \mathcal{S}^-}\Big) \, \mathds{1}_{\{Q_{t^-} > \underline{q} \}} \,,\label{eq:lambdas hat_a}\\   
 {\lambda}_t^b &=\Big(\varphi  \, e^{-k \, \delta_t^b} + \psi \, e^{-k  \delta^b_t  +\gamma \,( \qconst \,\sigma \, \hat{
 U}_t ) \, \wedge \, \mathcal{S}^+)} \Big) \, \mathds{1}_{\{Q_{t^-} < \overline{q} \}}  \label{eq:lambdas hat_b}\,,
\end{align}
where $\varphi,\psi,k,\mathcal{S}^+,\mathcal{S}^-$ are as before.\footnote{One way to motivate this choice of stochastic intensity is that informed traders filter the fad, and then, market arrivals follow the same structure as before, but now depending on the filter of the fad. Another way to understand this structure is that the market maker uses this model as an approximation of reality; that is, the market maker faces informed traders that know exactly what the fad is, but without the knowledge of $\process[U]$, she uses the filter $\process[\hat{U}]$ as a proxy. In Section \ref{sec : beyond approx} we explore an alternative formulation.}
The inventory  ${Q}_t$ and the cash ${X}_t$ are defined as in \eqref{eq: inventory} and \eqref{eq: cash}.
Next, we state the derived full information control problem we associate with the partial information case. The agent wishes to solve the control problem 
\begin{equation}
 V(t,q,\hat{u})  =  \sup_{\delta^{a,b} \, \in \, \mathcal{A}_{t,T}}\mathbb{E}_{t,q,u}^{\delta}\left[ -\alpha \, Q_T^2 + \int_t^T (\delta^a_s \, \lambda_s^a + \delta^b_t \, \lambda_s^b - \phi \, Q_s^2 +  ( \mu - \eta \, \qconst \, \sigma \, \hat{U}_s) \, Q_s \, \d s \right]
 \, ,
\end{equation}
where $\mathcal{A}_{t,T}$ are controls bounded from below by $-\delta_\infty$ and predictable with respect to the filtration $\process[\mathcal{\widehat{F}}]$ generated by $\process[{\Tilde{I}}]$, $\process[\hat{U}]$ and $\process[{N}^{a,b}]$ and 
where
\begin{align*}
  \d \hat{U}_t &=- \eta \, \hat{U}_t \, \d t  - \, (\hat{P}_t  \, \qconst \, \eta -  \qconst )\, \d \Tilde{I}_t \, , \qquad \, \hat{U}_0 = 0 \, ,   \\
  \d {Q}_t &= \d {N}_t^b - \d {N}_t^a \, ,\qquad Q_0\in \mcQ \, .
\end{align*}
The processes $\process[{N}^{a,b}]$ count the number of orders filled by the market maker and have controlled stochastic intensities $\process[{\lambda}^{a,b}]$ defined in \eqref{eq:lambdas hat_a} and \eqref{eq:lambdas hat_b}.

Before characterising the solution to the partial information problem, it is important to provide a rigorous characterisation of the probability space we employ.

\subsection{Formal derivation of the probability space}
Consider $\Omega_d$ the set of increasing piecewise constant càdlàg functions from $[0,T]$ into $\mathbb{N}$ with jumps equal to one and $\Omega_c$ the set of continuous functions from $[0,T]$ into $\mathbb{R}$. We define $\Omega = \Omega_c \times \Omega_d^2$ as the sample space.
 We let $(\Tilde{I},N_t^a,N_t^b)_{t \in \mfT}$ be the canonical process on $\Omega$. The associated filtration is $\mathcal{\widehat{F}}= ( \mathcal{F}_t^c \otimes \mathcal{F}_t^d \otimes \mathcal{F}_t^d)_{t \in \mfT}$ where $\process[\mathcal{F}^d]$ (resp.~$\process[\mathcal{F}^c]$) is the right continuous completed filtration associated with $N^a$ (or $N^b$) (resp.~$\Tilde{I}$).
We denote by $\mathbb{P}_0$ the probability measure on $(\Omega, \mathcal{F})$ such that $\Big(M^a_s = N^a_s - \int_0^t\mathds{1}_{ \{N_{s^-}^b-N_{s^-}^a > \underline{q}\}} \, \d s \,, M^b_s = N^b_s -\int_0^t \mathds{1}_{\{N_{s^-}^b-N_{s^-}^a < \overline{q}\}} \, \d s\Big)_{s \in \mfT}$  are martingales and $\process[\Tilde{I}]$ is a Brownian motion.\footnote{The construction of $\process[N^a]$ and $\process[N^b]$ follows the same idea as in the full information setting.} We also assume independence between the processes $\process[N^a]$, $\process[N^b]$ and the Brownian motion. Next, we define  $\mathbb{X}_t = (\hat{U}_t, Q_t)$ to be the state of the system, where the dynamics of  $\process[\mathbb{X}]$ are given by
    \begin{align}
        \begin{cases}
         &\d Q_t = \d N^b_t - \d N^a_t \,, \qquad \, Q_0 \in \mathcal{Q} \, , \\
        &\d \hat{U}_t = - \eta \, \hat{U}_t \, \d t + (\qconst - \hat{P}_t \, \qconst \, \eta) \, \d \Tilde{I}_t\, , \qquad \, \hat{U}_0 = 0 \, ,
        \end{cases}
    \end{align}
and we  define an equivalent probability measure $\mathbb{P}^\delta$ through the Doléans-Dade exponential of $$\mathcal{Z}_t^\delta =\int_0^t (\lambda^a_s -1 )\,  \d M^a_s + \int_0^t (\lambda^b_s -1 ) \, \d M^{b}_s \, .$$
Under $\mathbb{P}^\delta$, $\process[\Tilde{I}]$ is a Brownian Motion, and $\process[N^a]$ and $\process[N^b]$ are Poisson processes with stochastic intensities given by 
\begin{align}
         & \lambda^a(\hat{U}_t,Q_{t^-}, \delta^a_t) := \lambda^a_t = (\varphi + \psi \, e^{-\gamma \, (\sigma \,\qconst \,\hat{U}_t \, \vee \,  \mathcal{S}^-)}) \, e^{-k \, \delta^a_t} \, \mathds{1}_{\{ Q_{t^-} > \underline{q} \}} \, , \\
     &  \lambda^b(\hat{U}_t,Q_{t^-}, \delta^b_t) := \lambda^b_t = (\varphi + \psi \, e^{ \gamma \, (\sigma \,\qconst \,\hat{U}_t \, \wedge \, \mathcal{S}^+)}) \, e^{-k \, \delta^b_t} \, \mathds{1}_{\{Q_{t^-} < \overline{q} \}} \, . 
\end{align}

\subsection{Characterisation of the solution}

The corresponding HJB equation associated with the above control problem is 
\begin{equation}
\begin{aligned}
\label{eq:HJB}
    0 &= \partial_t V(t,q,\hat{u}) - \phi \,  q^2 + q \, (\mu - \sigma \, \eta \, \qconst \hat{u}) \\ &- \partial_u V(t,q,\hat{u}) \, \eta \, \hat{u} + \frac{1}{2}\partial_{u,u}^2 V(t,q,\hat{u})(\hat{P}_t \,  \qconst \, \eta - \qconst)^2  \\ &+  \sup_{\delta^a}\{(\varphi \, e^{- k \, \delta^a} + \psi \,  e^{-k \, \delta^a - \gamma \, (\qconst \, \sigma \, \hat{u}) \, \vee \, \mathcal{S}^-}) (\delta^a + V(t,q-1,\hat{u}) - V(t,q,\hat{u}))\}  \mathds{1}_{q>\underline{q}}\\ 
    &+ \sup_{\delta^b}\{( \varphi \, e^{-k \, \delta^b} + \psi \, e^{-k \, \delta^b + \gamma \, (\qconst \, \sigma \, \hat{u}) \, \wedge \, \mathcal{S}^+)}) (\delta^b + V(t,q+1,\hat{u}) - V(t,q,\hat{u}))  \} \mathds{1}_{q<\overline{q}}\, ,
\end{aligned}
\end{equation}
where the  optimal controls in feedback form are given by
\begin{align*}
\delta^{a,*} &= \Big(\frac{1}{k}- V(t,q-1,u) + V(t,q,u) \Big) \vee -\delta_\infty^a\,,\\
  \delta^{b,*} &= \Big(\frac{1}{k} - V(t,q+1,u) + V(t,q,u) \Big) \vee -\delta_\infty^b\,.
\end{align*}
This transforms the HJB equation into the following PDE
\begin{equation}
\begin{aligned}
\label{eq: PDE incomplete inf}
 0 &= \partial_t V(t,q,u) - \phi \, q^2 + q \, (\mu - \sigma \, \eta \, \qconst u) \\ &- \partial_u V(t,q,u) \, \eta \, u + \frac{1}{2}\partial_{u,u}^2 V(t,q,u)(\hat{P}_t \,  \qconst \, \eta -  \qconst)^2 \\ &+  \frac{1}{k} \exp(- 1 + k[V(t,q-1,u) - V(t,q,u)]) \, (\varphi + \psi \,e^{-\gamma \, (\qconst \, \sigma \, u)  \, \vee \, \mathcal{S}^-}) \mathds{1}_{q>\underline{q}} \, \mathds{1}_{\{\delta^{a,*} > -\delta_\infty^a\}} \\
 &+ (\varphi \, e^{ k \, \delta^a_\infty} + \psi \,  e^{k \, \delta^a_\infty - \gamma \, (\qconst \, \sigma \, \hat{u}) \, \vee \, \mathcal{S}^-}) (-\delta^a_\infty + V(t,q-1,\hat{u}) - V(t,q,\hat{u})) \, \mathds{1}_{q>\underline{q}} \, \mathds{1}_{\{\delta^{a,*} = - \delta_\infty^a\}} \\ 
    &+ (\varphi \, e^{k \, \delta^b_\infty} + \psi \, e^{k \, \delta^b_\infty + \gamma \, (\qconst \, \sigma \, \hat{u}) \, \wedge \, \mathcal{S}^+)}) (-\delta^b_\infty + V(t,q+1,\hat{u}) - V(t,q,\hat{u})) \, \mathds{1}_{q<\overline{q}}\, \mathds{1}_{\{ \delta^{b,*} = - \delta_\infty^b\}}
 \\&+ \frac{1}{k} \exp(- 1 + k[V(t,q+1,u) - V(t,q,u)]) \, (\varphi + \psi \,e^{\gamma \, (\qconst \, \sigma \, u) \, \wedge \, \mathcal{S}^+)}) \mathds{1}_{q<\overline{q}}\,\mathds{1}_{\{\delta^{b,*}>-\delta_\infty^b\}} ,
\end{aligned}
\end{equation}
with terminal condition $V(T,q,u) = - \alpha \, q^2.$

\subsubsection{Viscosity solution: incomplete information problem}
Again, the value function $V$ is a viscosity solution of the HJB equation (\ref{eq:HJB}). We remark that the steps we followed in the proof of the viscosity solution of Section \ref{sec: Perfect Information} still hold in this setting. We omit the details for brevity. 

\subsubsection{Approximate solution to the partial informed problem}
Similar to Section \ref{subsec:approx}, we take $-\delta_\infty^{a,b} = - \infty$, $\overline{q}= -\overline{q} = + \infty$ and $\mathcal{S}^+ = - \mathcal{S}^- = + \infty$. We assume that $V(t,q-1,u) - V(t,q,u)$ and  $V(t,q+1,u) - V(t,q,u) $ are small and we perform 
a second order Taylor expansion around zero for $ p \rightarrow \frac{1}{k} \exp(k \, p)$ as before. Furthermore, we  perform a first order Taylor expansion of the exponential functions of $\mp \, \qconst \, \gamma \, \sigma \, u$. To simplify the notation further, we define $x_1 := \mu - \sigma \, \qconst \, \eta \, u $, and $x_2 := -\hat{P}_t \, \qconst \, \eta + \qconst$.
As a consequence of the second order approximation, and employing an ansatz for $V$ that is quadratic in $q$, we show that the  approximate solution satisfies a system of PDEs that can be easily solved numerically. The result is summarised in the following proposition.
\begin{prop}
Let  $A:  \mathbb{R}_+ \rightarrow \mathbb{R}$ and $B \,, C :  \mathbb{R}_+ \times \mathbb{R} \rightarrow \mathbb{R}$ be a solution to the system 
\begin{align}
\begin{cases}
\label{eq:SOA_PI}
\partial_t A &= - \phi + 4 \, e^{-1} \,  (\varphi + \psi) \, k \, A^2 \, ,\\
\partial_t B &= \partial_u B \, \eta \, u - \frac{1}{2} \partial_{u,u}^2 B \, x_2^2  + x_1 + \psi \, e^{-1}\bigl( 4A \, \qconst \, \gamma \, \sigma \, u  + 4 \, k\qconst \, \sigma \, \gamma \, u \, A^2 \bigl) \\& \qquad + 4 \, e^{-1} \,  (\varphi + \psi) \, k \, A \, B \,  ,\\
\partial_t C &=  \partial_u C \, \eta \, u - \frac{1}{2} \partial_{u,u}^2 C \, x_2^2+\psi \,  e^{-1} \Bigl (- 2B \, \qconst \, \sigma \, \gamma \, u  + 2A\, B \, k \, \qconst \, \sigma \, \gamma \, u \Bigl)\\& \qquad+ \frac{e^{-1} \, (\varphi + \psi)}{k} \, (2 - 2 \, k \, A + k^2 \, A^2 +  k^2 \, B^2 )\, ,
\end{cases}
\end{align}
with terminal conditions 
\begin{equation}
    A(T) =  \alpha \, , B(T,u) =  0 \, , C(T,u) = 0\, .
\end{equation}
Then, $\widecheck{V}(t,q,u) = - q^2 \, A(t) - q \, B(t,u) - C(t,u)$ is a solution to the approximation of the PDE in \eqref{eq: PDE incomplete inf}.
\end{prop}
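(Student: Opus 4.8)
The plan is to reproduce the direct-substitution argument already used for the full-information approximation in Section \ref{subsec:approx}: first reduce the exact PDE \eqref{eq: PDE incomplete inf} to the approximated form obtained under the stated limiting parameters and Taylor expansions, then insert the quadratic ansatz and match coefficients grade by grade in $q$. Since the map $(A,B,C)\mapsto \widecheck V$ is affine, each graded balance is a single scalar PDE, and the claim reduces to checking that these balances coincide with the three lines of \eqref{eq:SOA_PI}.

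Concretely, I would first send $-\delta_\infty^{a,b}\to-\infty$, $\overline q=-\underline q\to+\infty$ and $\mathcal{S}^\pm\to\pm\infty$, which deletes every indicator and every boundary line in \eqref{eq: PDE incomplete inf} (those carrying $\mathds{1}_{\{\delta^{a,*}=-\delta_\infty^a\}}$ and its analogues), leaving only the two interior contributions $\tfrac1k e^{-1+k\Delta^- V}(\varphi+\psi e^{-\gamma\qconst\sigma u})$ and $\tfrac1k e^{-1+k\Delta^+ V}(\varphi+\psi e^{\gamma\qconst\sigma u})$. On these I would apply the second-order expansion of $p\mapsto\frac1k e^{kp}$ at $p=\Delta^\mp \widecheck V$ together with the first-order expansion $e^{\mp\qconst\gamma\sigma u}\approx 1\mp\qconst\sigma\gamma u$, exactly as before, producing a polynomial right-hand side with the uninformed weight $\varphi$ and the linearised informed weight $\psi(1\mp\qconst\sigma\gamma u)$.

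Next I would compute the discrete increments of the ansatz $\widecheck V(t,q,u)=-q^2A(t)-qB(t,u)-C(t,u)$, namely $\Delta^- \widecheck V=(2q-1)A+B$ and $\Delta^+ \widecheck V=-(2q+1)A-B$, and substitute $\partial_t\widecheck V=-q^2A'-q\partial_t B-\partial_t C$, $\partial_u\widecheck V=-q\partial_u B-\partial_u C$, $\partial^2_{u,u}\widecheck V=-q\partial^2_{u,u}B-\partial^2_{u,u}C$ into the expanded PDE, writing $x_1=\mu-\sigma\qconst\eta u$ and $x_2=\qconst-\hat P_t\qconst\eta$. Collecting by powers of $q$, the $q^2$-terms must yield the Riccati equation $\partial_t A=-\phi+4e^{-1}(\varphi+\psi)kA^2$ (the $(2q\pm1)^2A^2$ pieces of the quadratic expansion supplying the $4e^{-1}(\varphi+\psi)kA^2$ coefficient), the $q^1$-terms must yield the linear PDE for $B$, and the $q^0$-terms the PDE for $C$; the terminal condition $\widecheck V(T,q,u)=-\alpha q^2$ forces $A(T)=\alpha$, $B(T,\cdot)=0$, $C(T,\cdot)=0$.

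The main obstacle is purely organisational: keeping track of the cross terms generated when the quadratic piece $\tfrac{k^2}{2}(\Delta^\pm \widecheck V)^2$ and the linear piece $k\Delta^\pm \widecheck V$ are each multiplied by the linearised informed weight $\psi(1\pm\qconst\sigma\gamma u)$, and verifying that the ask and bid sums leave the correct survivors in each $q$-grade. The key structural point to check is that in the $q^2$ balance the $u$-dependent informed terms appear with opposite signs on the ask and bid sides and cancel, so that $A$ obeys a $u$-independent Riccati equation, whereas in the $q^1$ balance the same informed terms reinforce and generate precisely the $\psi\qconst\sigma\gamma u$ contributions recorded in the equation for $B$; the remaining $u$-independent and $B$-dependent residuals then fall into the equation for $C$. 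Once this grading is confirmed, the proof is complete, since each graded identity is exactly one line of \eqref{eq:SOA_PI}.
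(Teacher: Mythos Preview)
Your proposal is correct and follows exactly the same approach as the paper, which states only that the proof follows by direct substitution of the derivatives of $\widecheck V$ into the quadratic approximation of the PDE. You have simply spelled out in detail the limiting, expansion, and coefficient-matching steps that the paper leaves implicit.
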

\begin{proof}
    The proof follows by direct substitution of the  derivatives of $\widecheck{V}$ in the quadratic approximation we obtained for the PDE satisfied by $V$.
\end{proof}
Next, we use an ansatz in which the function $B$ is linear in $u$ and $C$ is quadratic in $u$. We have the following result.
\begin{prop}
    Let $A : [0,T] \rightarrow \mathbb{R}$ solve the Riccati ODE 
    \begin{equation}
        \label{eq:Riccati_PI}
         A'= - \phi + 4 \, e^{-1} \,  (\varphi + \psi) \, k \, A^2 \, , \qquad A(T) = \alpha \,.
    \end{equation}
 Let $(b_0\, , b_1\,  , c_0\, , c_1\, , c_2) : \mathbb{R}^+ \rightarrow \mathbb{R}$ solve the system of ODEs
\begin{align}
\label{eq:coupled_ODE_PI}
\begin{cases}
    b_0' &= \mu + 4 \, e^{-1} \, ( \varphi + \psi) \, k \, b_0 \, A \, ,\\
    b_1' &= \eta \, b_1 - \eta \, \sigma \, \qconst + \psi \, e^{-1} \, (- 4 \, A \, \gamma \, \sigma \, \qconst + 4 \, k \, \qconst \, \gamma \, \sigma \, A^2) + 4 \, e^{-1} \, (\varphi + \psi) \, k \, b_1 \, A \,  , \\
    c_0'&= - x_2^2 \, c_2 + \frac{e^{-1}}{k} \, ( \varphi + \psi) \, (2 - 2 \, k \, A + k^2 \, A^2 + k^2 \, b_0^2) \,  ,\\
    c_1' &= \eta \, c_1 - 2 \, e^{-1} \, \psi  \, b_0 \, \qconst \, \gamma \, \sigma + \Psi \, e^{-1} \, 2 \, k \, \gamma \, \sigma \, \qconst \, b_0 \, A + 2 \, \frac{e^{-1}}{k} \, (\varphi + \psi) \, k^2 b_0 \, b_1 \, , \\
    c_2'&= 2 \, \eta \, c_2 -2 \,\psi \, e^{-1} \, b_1 \, \gamma \, \sigma \, \qconst + 2 \, \psi \, e^{-1} \, A \, k \,\gamma \, \sigma \, \qconst \, b_1 + \frac{e^{-1}}{k} \, (\varphi + \psi) \, b_1^2 \, k^2 \, ,
    \end{cases}
\end{align}
Define $\widehat{B} \, ,\widehat{C} : [0,T] \times \mathbb{R} \rightarrow \mathbb{R}$ as 
    \begin{align*}
            \widehat{B}(t,u) &= u \, b_1(t) + b_0(t) \, ,\\
            \widehat{C}(t,u)&= u^2 \, c_2(t) + u \, c_1(t) + c_0(t) \, .
    \end{align*}
It follows that ($A, \widehat{B}, \widehat{C}$) solves the system of PDEs in (\ref{eq:SOA_PI}).
\end{prop}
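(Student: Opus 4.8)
The statement is a verification result, so the plan is to prove it by direct substitution: since $b_0,b_1,c_0,c_1,c_2$ are \emph{defined} as the solution of the ODE system \eqref{eq:coupled_ODE_PI}, all I need is to check that the polynomial ansatz $\widehat B(t,u)=u\,b_1(t)+b_0(t)$ and $\widehat C(t,u)=u^2\,c_2(t)+u\,c_1(t)+c_0(t)$, together with the given $A$, turns each equation of \eqref{eq:SOA_PI} into an identity in $(t,u)$. The first equation of \eqref{eq:SOA_PI} involves only $A$, which is a function of $t$ alone, and it coincides verbatim with the Riccati ODE \eqref{eq:Riccati_PI}; so that equation holds by hypothesis and nothing further is required for it.

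For the $B$-equation I would first record the derivatives $\partial_t\widehat B=u\,b_1'+b_0'$, $\partial_u\widehat B=b_1$, and crucially $\partial^2_{u,u}\widehat B=0$, so the diffusion term $-\tfrac12\,x_2^2\,\partial^2_{u,u}\widehat B$ drops out entirely. Substituting $\widehat B$ into the right-hand side, using that $x_1=\mu-\sigma\,\qconst\,\eta\,u$ is affine in $u$ and that the forcing terms carrying $\gamma$ are proportional to $u$, I would collect the result by powers of $u$: matching the $u^0$ coefficients yields exactly the $b_0'$ ODE, and matching the $u^1$ coefficients yields the $b_1'$ ODE. For the $C$-equation I would use $\partial_t\widehat C=u^2\,c_2'+u\,c_1'+c_0'$, $\partial_u\widehat C=2u\,c_2+c_1$ and $\partial^2_{u,u}\widehat C=2c_2$, so the diffusion term contributes the deterministic $-x_2^2\,c_2$. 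The only nonlinearity, $k^2\,B^2$, becomes $k^2\,(u^2\,b_1^2+2u\,b_0\,b_1+b_0^2)$, and the $\gamma$-terms multiply $\widehat B$ by $u$; expanding and again collecting by powers of $u$ returns the three ODEs for $c_0$ (from $u^0$), $c_1$ (from $u^1$), and $c_2$ (from $u^2$).

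The one point that genuinely needs checking—and where I expect the only real content to lie—is \emph{closure} of the polynomial ansatz: one must confirm that no power of $u$ higher than that carried by $\widehat B$ or $\widehat C$ is generated on the right-hand sides, which is what makes the coefficient matching a well-posed rather than overdetermined recovery of \eqref{eq:coupled_ODE_PI}. This holds because $\hat P_t$ and $x_2=\qconst-\hat P_t\,\qconst\,\eta$ are deterministic functions of $t$ only, because the exponential fad factors have already been replaced by their first-order expansions in $u$ (so they contribute at most affine-in-$u$ corrections), and because $\widehat B$ is affine, so $\widehat B^2$ is at most quadratic and matches the degree of $\widehat C$. Once closure is granted, the remaining work is the bookkeeping of signs and of the constant prefactors $e^{-1}$, $k$, $\varphi+\psi$ and $\psi\,\qconst\,\sigma\,\gamma$, which is routine. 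Existence and uniqueness of the quintuple $(b_0,b_1,c_0,c_1,c_2)$ then follows from the already-established solvability of the Riccati equation for $A$ together with the linearity of the four remaining ODEs, which can be integrated in cascade via Duhamel's formula.
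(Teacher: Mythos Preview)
Your proposal is correct and follows exactly the approach the paper intends: direct substitution of the polynomial ansatz into \eqref{eq:SOA_PI} and matching coefficients by powers of $u$. In fact you give considerably more detail than the paper, which omits a proof for this proposition and, for the analogous full-information result, simply records that ``the proof follows by direct substitution''; your observation that closure of the ansatz hinges on $\widehat B$ being affine (so $\partial^2_{u,u}\widehat B=0$ and $\widehat B^2$ is at most quadratic) and on $x_2$ being deterministic is precisely the point that makes the coefficient matching well posed.
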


\subsection{Connection with the separation principle}
In this section we compute the filter of the process $\process[U]$ with respect to the filtration generated by $\process[S]$, and then we construct new stochastic processes $\process[N^{a,b}]$ where their stochastic intensities depend on $\process[\hat{U}]$ instead of $\process[U]$. 
Then, as shown above, the optimal strategies are akin to those from the full information setup but using $\process[\hat{U}]$ instead of $\process[U]$. This modelling approach is similar to employing the separation principle. Quoting \cite{bensoussan2018estimation}, ``the result says that it is optimal to estimate the state
and then apply the rule to the estimate, instead of the state itself''. Of course, this  is not the optimal approach when one works in a non-linear-quadratic framework; however, as we show below, there is non-negligible economic value when employing this modelling approach.

\section{Sensitivity analysis and simulations}
\label{sec : Simulations}
In this section we investigate the optimal strategies derived in the previous sections. Our main goals are: (i) to understand the value of information for the market maker, (ii) to investigate the dependence of the optimal quotes on key model parameters, and (iii) to study the performance of the market maker as a function of the percentage of informed traders in the market.
To this end, we develop a sensitivity analysis of the optimal solutions with respect to key parameters and we simulate the model.
Similar to \cite{boyce2024market}, we consider the following baseline set of parameters:
$\alpha = 0.001 \, , 
 \sigma = 1 \, , 
 \phi = 0.1 \,  ,
 \varphi = 15 \, ,
k = 1 \, , 
\gamma = 1 \, , 
 T = 1 \, , 
\mu =0 \, , 
\eta = 10
\, .$
The value of $\qconst$ modulates the extent to which the fad influences the price process. We set  $\qconst = 0.6$  and $\pconst = \sqrt{1 -\qconst^2}$. Note that $k=\gamma$ and, therefore, informed traders are equally sensitive  to the displacement of the quotes and to the fad.

Note that the baseline expected number of market arrivals on the bid and ask side, i.e.,
\eqref{eq: intensity lambda a}-\eqref{eq: intensity lambda b} without displacements,
is $\varphi \, T+ \psi \, \int_0^T\mathbb{E}[e^{\pm \gamma \, \sigma \, \qconst \, U_t}]\d t$. In what follows, we set 
\begin{equation}
\label{eq_psi}
\psi = \frac{30-\varphi \, T}{\int_0^T\mathbb{E}[e^{\pm \gamma \, \sigma \, \qconst \, U_t}]\d t}
\end{equation}
so that the baseline proportion of market arrivals from informed and uninformed traders is constant (50-50\%) and the expected  arrivals are $30$. 

In this setting, the fad only affects the market maker through the arrival rates of orders \eqref{eq: intensity lambda a}-\eqref{eq: intensity lambda b} because the mark-to-market of the inventory  in \eqref{eq: payoff} at time $T$ is evaluated at the mid-price and not at the fundamental. In Section \ref{sec:fundamental} we study the case where the market marker values the inventory at fundamental.

\subsection{Optimal strategies under full information}
\label{FULL}

In Figure \ref{fig:optimal quotes} we show the optimal ask and bid displacements of the market maker (in solid and dashed lines respectively) in the full information setting as a function of the fad (in the $x$-axis) varying the inventory level in the colour bar (the lighter the colour, the higher the inventory). 
The left panel shows the optimal displacements when $\qconst = 0.3$, the middle panel shows the optimal displacements when $\qconst = 0.6$, and the right panel shows the optimal displacements when $\qconst = 0.9$. 

As expected, and confirming classical results, the higher is the market maker's inventory the lower is the ask displacement $(\delta^{a*,\text{FI}})$ and the higher the bid displacement $(\delta^{b*,\text{FI}})$. 

The effect of the fad on the displacements is asymmetric. The ask displacement $(\delta^{a*,\text{FI}})$ is decreasing in the fad $U_t$, the opposite is observed for the bid displacement $(\delta^{b*,\text{FI}})$. The rationale of this result is that as $U_t$ increases, $\lambda^a$ decreases and $\lambda^b$ increases due to the informed traders' market activity. Given the performance criterion \eqref{eq: payoff}, the market maker values her open inventory at mid-price, and therefore she decreases the price of liquidity in the ask and increases the price of liquidity in the bid to manage the inventory in an optimal way.
Moreover, in agreement with this argument, the absolute value of the slope of the displacements with respect to the fad component decreases as $\qconst$ becomes smaller. Indeed, the closer $\qconst$ is to zero, the less impact the fad has on the mid-price process and on the arrival rates of informed traders (informed and uninformed traders behave in a similar way), and therefore the dependence of the optimal displacements on the fad weakens. In the limit, when $\qconst = 0$, the slope is zero.

\begin{figure}[H]
\centering
\includegraphics[width=0.8\textwidth]{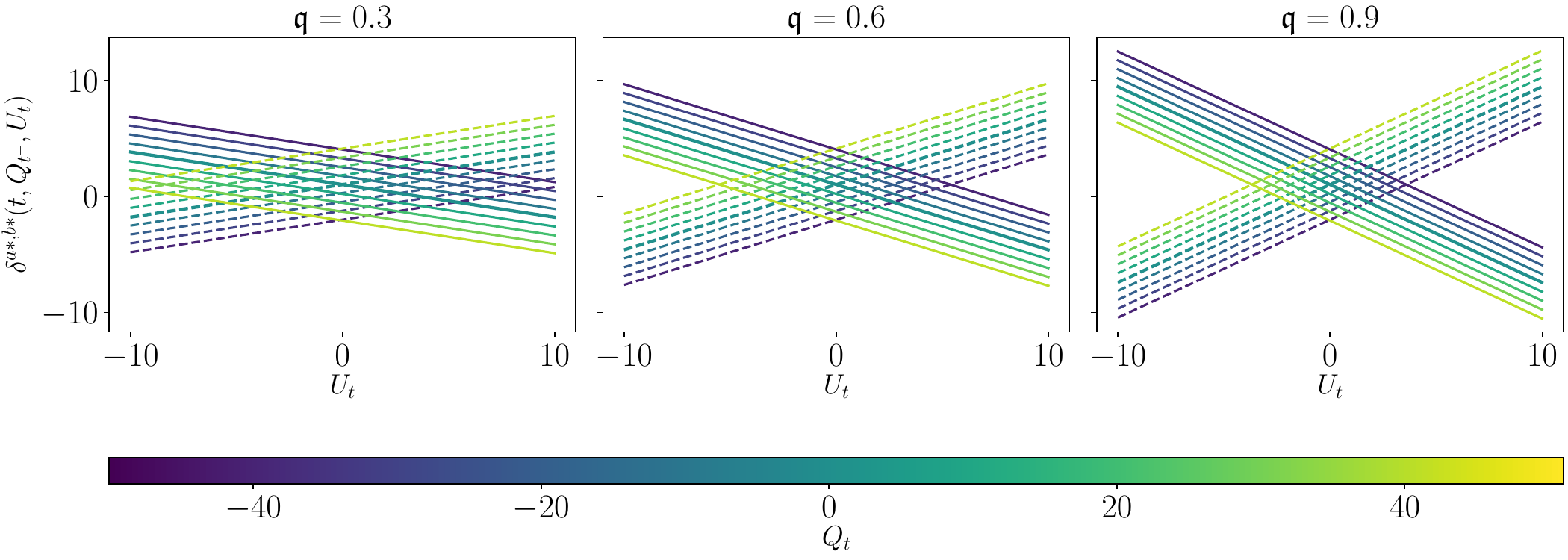}
\caption{Ask and bid displacements as a function of the fad ($x$-axis) and the inventory of the market maker (colour bar) under full information.
Solid lines: optimal ask displacement; dashed lines: optimal bid displacement. Left panel is for $\qconst = 0.3$, middle panel is for $\qconst = 0.6$, and right panel is for $\qconst = 0.9$.}\label{fig:optimal quotes} 
\end{figure} 

For a given path of the fad and of the asset price,  Figure \ref{fig:Q_U}  shows the dynamics of the inventory $\process[Q]$ (in red) and the fad $\process[U]$ (in blue) for two values of $\gamma$ ($\gamma=0.1$ and $\gamma=10$) which modulates the sensitivity of informed traders to the fad. Note that the baseline sensitivity is $\gamma = k = 1$.
We observe that a high value of $\gamma$ leads to a high sensitivity in the number of trades (by informed traders) with respect to $U_t$. On the contrary, when $\gamma$ is small, the executed trades are less related to changes of the value of $U_t$. There is also a higher correlation between the fad $\process[U]$ and the inventory $\process[Q]$ as $\gamma$ increases.\footnote{The mean correlation between the realizations of the paths of $U$ and $Q$ (1000 timesteps)  over 1000 simulations is $0.49$ when $\gamma=0.1$ and $0.57$ when $\gamma=10$; the correlation between the changes of the processes also  increases. }

\begin{figure}[H]
\centering
\includegraphics[width=0.8\textwidth]{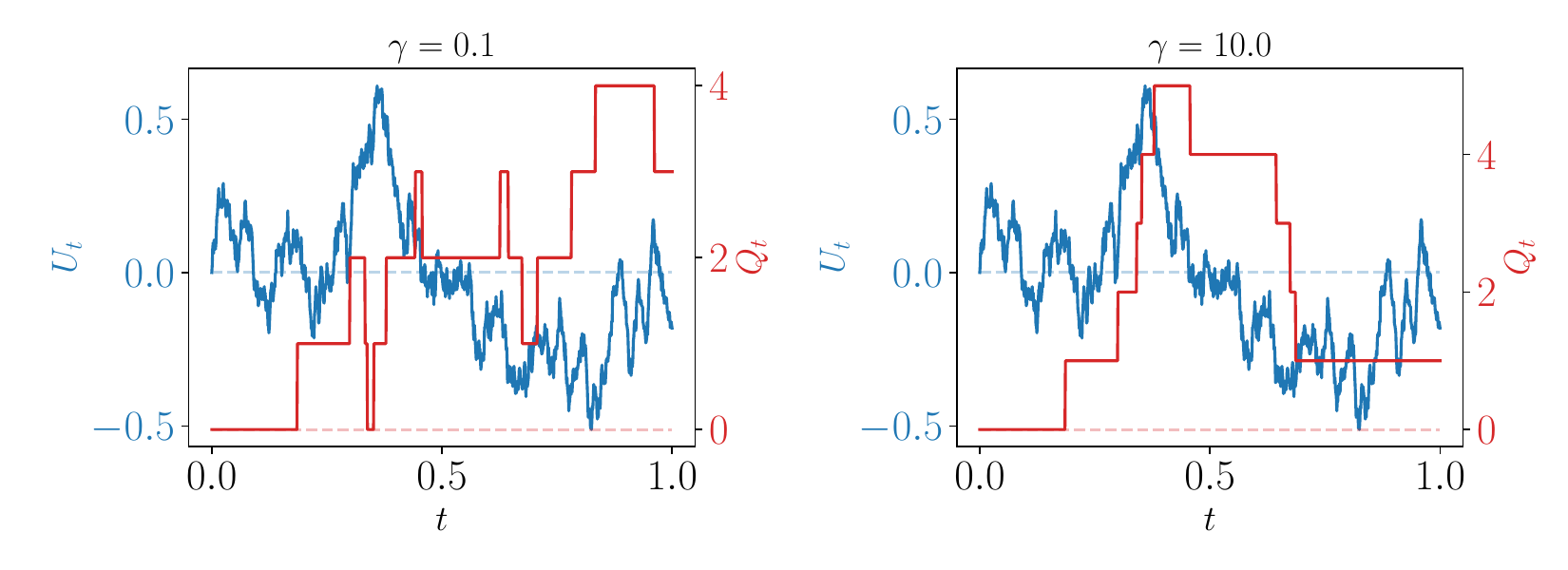}
\caption{Sample paths followed by $\process[Q]$ and $\process[U]$ for $\gamma =0.1$ (left panel) and $\gamma = 10$ (right panel).}\label{fig:Q_U} 
\end{figure}

\subsection{Simulations of optimal strategies under partial information}

Figure \ref{fig:filter00}  shows a simulation path of $\process[U]$ and its filter $\process[\hat{U}]$ for three 
 values of $\qconst$.
We observe that the closer $\qconst$ is to one, the closer the filter is to the fad process. Indeed, the more the price is driven by the fad, the more the filter mimics the true trajectory because the signal is more precise. On the other hand, when $\qconst$ approaches zero, knowledge of the price is not valuable to filter the fad, and so the filter becomes uninformative and collapses around the mean of the fad, which is zero.

\begin{figure}[H]
\centering
\includegraphics[width=0.9\textwidth]{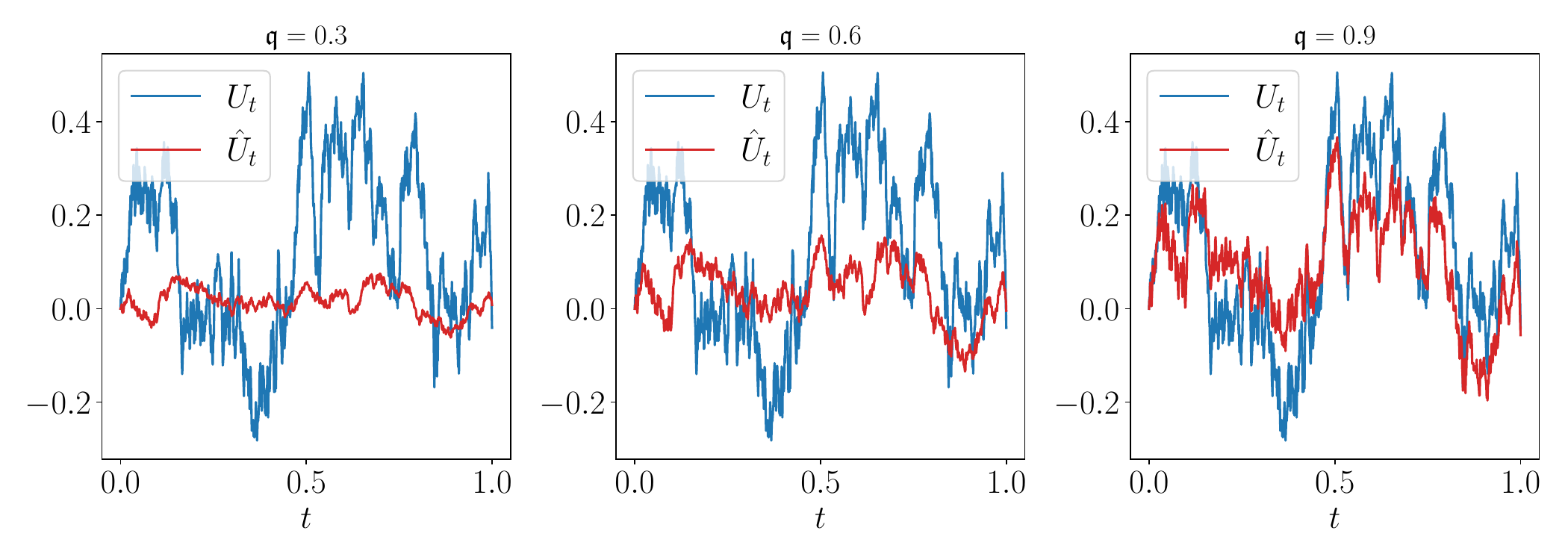}
\caption{Comparison between the process $\process[U]$, in blue, and the filter $\process[\hat{U}]$, in red, for (from left to right) $\qconst = 0.3$, $\qconst = 0.6$, and $\qconst = 0.9$.}\label{fig:filter00} 
\end{figure} 

The displacements obtained under partial information coincide with those obtained under full information as a function of $\process[\hat{U}]$ instead of $\process[U]$. Indeed, the PDE associated with the value function under partial information differs from the full information value function only in the volatilities of the processes $\process[U]$ and $\process[\hat{U}]$. However, the volatility processes do not appear in the linear and quadratic terms in $q$ (of the approximations), thus, their effects vanish when we compute the optimal displacements. Therefore, 
the pictures of the partial information displacements coincide with those in Figure \ref{fig:optimal quotes}, as a function of $\hat{U}$ instead of $U$.

\subsection{Performance evaluation}

We compare three strategies using the performance criterion in \eqref{eq: payoff}
as the yardstick: 
(i) the optimal strategy obtained for the full information setting (FI, $\delta^{*,\mathrm{FI}}(t,Q_{t^-}, U_t)$), 
(ii) the Cartea--Jaimungal--Penalva strategy \cite[Chapter 10]{CART-book} (CJP, $\delta^{*\mathrm{CJP}}(t,Q_{t^-})$), 
and (iii) the optimal strategy from the incomplete information setting (PI, $\delta^{*,\mathrm{PI}}(t,{Q}_{t^-}, \hat{U}_t)$). 
The ground truth in the simulations is that of the full information case. 

We consider the above set of model parameters with $Q_0 = 0$, $S_0 = 100$, $X_0 = 0$, and we carry out 100,000 simulations. We discretise $[0,T]$ in 1,000 timesteps. In Table \ref{table1} we report the mean performance and  standard deviation (in brackets) of the three strategies as the parameters of the model change. On each line, as a parameter ($\qconst, \gamma, \eta$)  changes, $\psi$ (the parameter that captures the presence of informed traders in the market) is calibrated according to \eqref{eq_psi} in a way that the expected number of market arrivals (the arrival rate without considering the effect of the displacements) in the full information setting is $30$ and the informed-uninformed composition is 50-50\%. The bottom rows of the table show the results when stressing the market composition.

As a parameter changes, we have three different types of effects: (i) the evaluation of the performance criterion changes, (ii) the expected number of market arrivals
changes, (iii) the informed-uninformed traders market composition changes. By imposing \eqref{eq_psi}, we neutralise the latter two effects and we concentrate on the first one. In Appendix \ref{NORENOR} we do not keep fixed the composition of the market and the number of trades;\footnote{See Figures \ref{fig:PnL_qconst_2}, \ref{fig:PnL_gamma_2}, \ref{fig:PnL_eta_2}, and Table \ref{table2}.} the main results of the analysis are confirmed.

The three strategies range according to the information available to the market maker. The first strategy is for when the market maker knows the ground truth, \textit{i.e.}, she knows the composition of the order flow \eqref{eq: intensity lambda a}-\eqref{eq: intensity lambda b} and of the asset price dynamics \eqref{eq: mid-price}-\eqref{eq:OU process} and she observes the order flow arrival as well as the realizations of the noise components ($B_t$ and $Z_t$), which allows her to disentangle the fundamental information from the fad component (fully informed market maker). Note that, in this setting, the market maker is not able to identify whether the trade comes from an informed or an uninformed trader. 
The second strategy is for when the market maker observes the order flow, the asset price, and does not know of the existence of a fad component that affects both the asset price and the arrival rates (uninformed market maker). The market maker believes in a misspecified model ($\qconst=0, \ \gamma=0$) but the simulation path for the asset price and the arrival rate reflect the existence of the fad.
The third strategy is for when the market maker knows the model, observes the order flow and the asset price, and filters the fad from price because she does not observe it. We remark that she knows that there is a fad component in the market and its relevance on the price process, \textit{i.e.}, the value of the parameters $\qconst$ and $\gamma$ (partially informed market maker).

To make the point more explicit, the optimal strategy $\delta^{*,\text{CJP}}(t,Q_{t-})$ is obtained assuming that the market maker ignores the existence of a fad in the market and, therefore, she ignores $\process[U]$, however, she receives market orders according to the rate $(\varphi + \psi \, e^{\mp \qconst \, \gamma \, U_t}) \, e^{- k \, \delta_t^{\text{CJP}}(t,Q_{t^-})}$ and, therefore, her performance is affected by the fad. As already observed, the fad component impacts the optimal displacements: in the full information setting, the optimal ask displacement is decreasing in $U_t$, while the optimal bid displacement is increasing in $U_t$ ($\delta^{*,\mathrm{FI}}(t,Q_{t^-}, U_t)$) balancing the activity of informed traders on the two sides of the market; a similar effect is observed for the optimal  displacement under partial information ($\delta^{*,\mathrm{PI}}(t,Q_{t^-}, \hat{U}_t)$) replacing the fad with its filtered value.

For all parameter sets, the optimal strategy under full information outperforms the other two, and the optimal strategy of the partial information setting outperforms the CJP-strategy.

\begin{table}[H]
\centering
\begin{tabular}{l|ccc}
\hline
\hline
& \multicolumn{3}{c}{Performance criterion }\\
& \multicolumn{3}{c}{$X_T + Q_T\, S_T  - \alpha \, Q_T^2 - \phi \, \int_0^T Q_u^2 \, \d u $}\\
\noalign{\vskip-1mm}
\noalign{\vskip 2mm}
\hline
\hline
\noalign{\vskip-1mm}
\noalign{\vskip 2mm}
parameters & $\delta^{*,\mathrm{FI}}(t,Q_{t^-}, U_t)$ & $\delta^{*\mathrm{CJP}}(t,Q_{t^-})$ & $\delta^{*,\mathrm{PI}}(t,{Q}_{t^-}, \hat{U}_t)$\\
\noalign{\vskip-1mm}
\noalign{\vskip 2mm}
\hline
\hline
\noalign{\vskip-1mm}
\noalign{\vskip 2mm}
$\varphi= 15$, $\eta=10$, $\gamma=1$, $\qconst=0.6$  & 21.33 (4.94) & 21.18 (4.94) & 21.20 (4.95)\\
\hline
$\qconst = 0.0$   & 21.34 (5.10) & 21.34 (5.10) & 21.34 (5.10)\\
$\qconst = 0.2$   & 21.34 (5.08) & 21.32 (5.08) & 21.32 (5.09)\\
$\qconst = 0.4$   & 21.34 (5.03) & 21.27 (5.03) & 21.28 (5.03)\\
$\qconst = 0.6$   & 21.33 (4.94) & 21.18 (4.94) & 21.20 (4.95)\\
$\qconst = 0.8$   & 21.31 (4.82) & 21.06 (4.82) & 21.14 (4.82)\\
$\qconst = 1.0$   & 21.30 (4.64) & 20.91 (4.65) & 21.30 (4.64)\\
\hline
$\gamma = 0$     & 21.46 (4.97) & 21.34 (4.97) & 21.36 (4.97)\\
$\gamma = 1$     & 21.33 (4.94) & 21.18 (4.94) & 21.20 (4.95)\\
$\gamma = 2$     & 21.17 (4.92) & 21.01 (4.93) & 21.03 (4.93)\\
$\gamma = 3$     & 21.00 (4.91) & 20.82 (4.93) & 20.85 (4.93)\\
\hline
$\eta = 2.5$     & 21.23 (4.97) & 20.99 (5.00) & 21.05 (5.01)\\
$\eta = 5.0$     & 21.29 (4.95) & 21.09 (4.97) & 21.12 (4.97)\\
$\eta = 7.5$     & 21.32 (4.95) & 21.15 (4.95) & 21.17 (4.96)\\
$\eta = 10.0$    & 21.33 (4.94) & 21.18 (4.94) & 21.20 (4.95)\\
$\eta = 12.5$    & 21.33 (4.94) & 21.21 (4.94) & 21.23 (4.94)\\
\hline
$\varphi,\psi : 0\%$ informed     & 21.46 (4.97) & 21.34 (4.97) & 21.36 (4.97)\\
$\varphi,\psi  : 25\%$ informed   & 21.39 (4.96) & 21.27 (4.96) & 21.28 (4.96)\\
$\varphi,\psi  : 50\%$ informed   & 21.33 (4.94) & 21.18 (4.94) & 21.20 (4.95)\\
$\varphi,\psi  : 75\%$ informed   & 21.26 (4.93) & 21.10 (4.94) & 21.12 (4.94)\\
$\varphi,\psi  : 100\%$ informed  & 21.17 (4.92) & 21.01 (4.93) & 21.02 (4.93)\\
\hline
\hline
\end{tabular}
\caption{Average performance (with standard deviation) for a market maker who follows either (i) the full information optimal strategy,  (ii) the optimal strategy of Cartea-Jaimungal-Penalva (CJP), or (iii) the imperfect information optimal strategy. The performance for the baseline set of parameters is provided in the first row, then, the performance where a parameter changes (one at a time) is reported in the following rows. On each line (varying $\qconst$, $\gamma$, or $\eta$) $\psi$ is rescaled as in \eqref{eq_psi}. The last rows deal with variations of $\varphi,\psi$ keeping the expected number of market arrivals at 30.}
\label{table1}
\end{table}

\textbf{Effect of $\qconst$}: The sensitivity of the performance with respect to $\qconst$ is shown in Figure \ref{fig:PnL_qconst}.
In the limit, we observe two interesting results. 
As $\qconst \rightarrow 0$, the gap between the strategies vanishes conforming with Proposition \ref{prop_q_0} in the Appendix. This is because the impact of the fad on the mid-price and on the arrival rates vanishes and therefore the three strategies render the same performance. 
As $\qconst \rightarrow 1$, the optimal strategy in the full information setting and in the partial information setting tend to coincide in agreement with Proposition \ref{prop_q_1} in the Appendix. This is because in the limit, the filter matches the fad. Instead, the performance of the CJP strategy deteriorates as $\qconst$ increases because it becomes more misspecified.  

The performance under full information weakly declines in $\qconst$. The rationale for this effect is that as $\qconst$ increases the arrival of informed traders becomes more sensitive to the fad component.
That is, all else being equal, the (expected) fifteen arrivals from informed traders occur when the value of the fad is farther away from zero and the fundamental value differs more from the mid-price. Thus, market arrivals from the informed traders tend to be more ``sharp'' or more ``toxic'' as $\qconst$ increases  and, therefore, hurt the market maker more. Note that the market maker can only use the bid and ask displacement to manage the inventory from trading with both, informed and uninformed traders. The classical adverse selection effect applies here: the market maker copes with both informed and uninformed traders setting the bid and the ask price, and when trading becomes more 
``toxic'', the market maker sets a larger bid-ask spread  and experiences a poorer performance. 

Under partial information, the performance 
decreases and then increases; 
the minimum in the figure is around $\qconst = 0.8$. When $\qconst$ is low, the filter is imprecise and becomes stable around its long run mean (which is zero), as a consequence the performance of the partial information strategy looks similar to the CJP-strategy, see Figure \ref{fig:filter00}. As $\qconst$ increases, the quality of the filter improves and the market maker 
 reacts to the fad in a more precise manner.  Note that the performance gap between the PI-strategy and the CJP-strategy is increasing as a function of $\mathfrak{q}$. This is due to the ever-increasing importance of the fad in driving the price as $\qconst$ increases. In the left hand side of the plot, the performance of the PI-strategy is closer to the CPJ-strategy because the quality of the filter is poor as shown in Figure \ref{fig:filter00}.

\begin{figure}[H]
\centering
\includegraphics[width=0.5\textwidth]{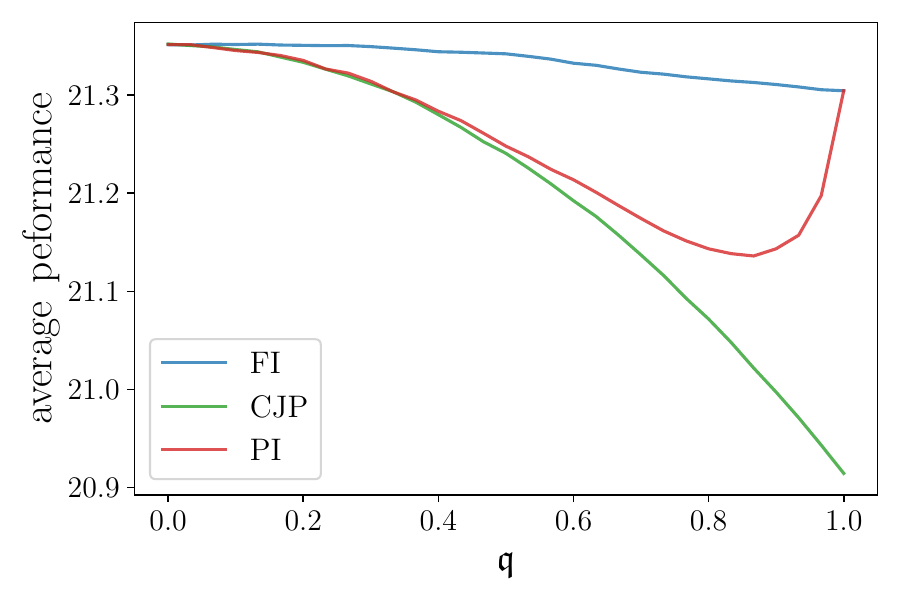}
\caption{Performance of the strategies as a function of $\qconst$. }\label{fig:PnL_qconst} 
\end{figure}

\textbf{Effect of $\gamma$}: As $\gamma$ increases, the performance of the three strategies decreases quasi linearly; see Figure \ref{fig:PnL_gamma}.
The rationale of this result is that, all else being equal, the higher the value of $\gamma$ the higher the expected number of arrivals coming from a counting process with intensity $\exp{(\pm \gamma \, \qconst \, \sigma \, U_t)}$. Since $\psi$ is rescaled according to \eqref{eq_psi}, the informed-uninformed composition in terms of market arrivals remains constant as $\gamma$ changes. However, as informed traders become more sensitive to the fad process ($\gamma$ increases), the market maker is penalised because the informed traders send orders that are sharper (or more toxic) and the market maker only uses the bid and the ask displacements to trade with both informed and uninformed traders.

\begin{figure}[H]
\centering
\includegraphics[width=0.5\textwidth]{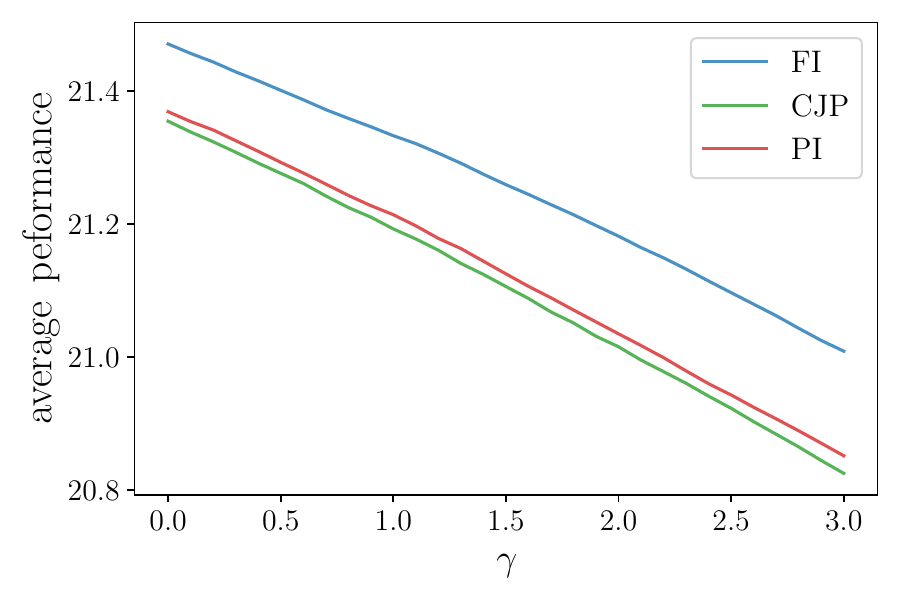}
\caption{Performance of the strategies as a function of $\gamma$. }\label{fig:PnL_gamma} 
\end{figure} 

The argument is similar to the one explaining the effect of a change of $\qconst$. 
As $\gamma$ increases, the market maker becomes more  exposed to ``toxic'' order flow. This is confirmed by  Figure \ref{fig:optimal quotes gamma} where we show the difference, in percentage of the baseline quote $1/k$, between the optimal bid (and ask) displacements for $\gamma = 10$ and $\gamma = 0.1$. 
In agreement with what is shown in Figure \ref{fig:optimal quotes} and  its interpretation, the market maker's sensitivity with respect to $U_t$  becomes stronger as $\gamma$ increases and, as shown above, this leads to a poorer performance. There is an asymmetric effect of the fad on the displacements: when the inventory is zero and $U_t$ is positive, as $\gamma$ increases, the optimal bid (resp.~ask) displacement becomes larger (resp.~smaller) because the arrival rate of informed traders on that side of the market increases (resp.~decreases). The reverse is observed for a negative $U_t$. It can be shown numerically that the bid-ask spread increases as $\gamma$ goes up.

\begin{figure}[H]
\centering
\includegraphics[width=0.5\textwidth]{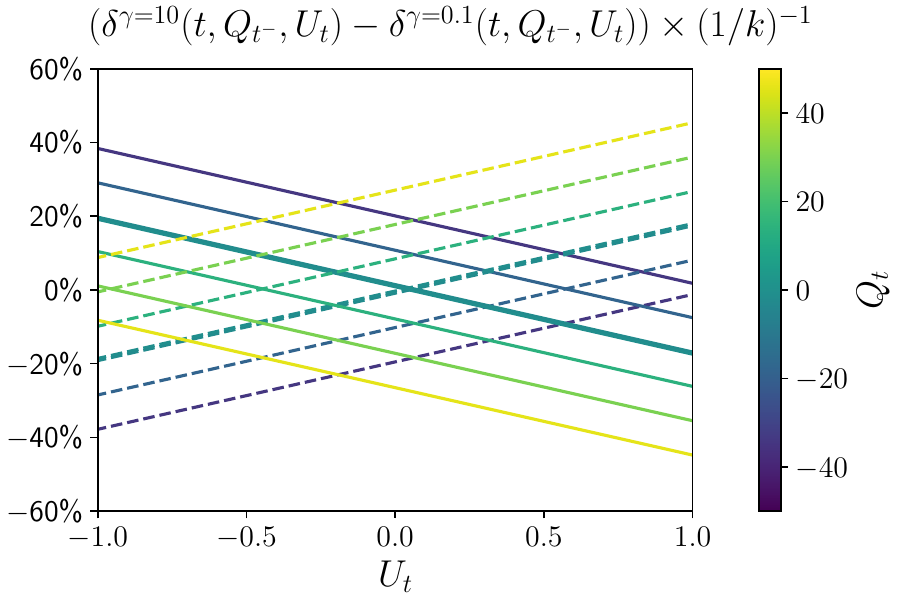}
\caption{Scaled difference in bid and ask displacements for $\gamma=10$ and $\gamma=0.1$ as a function of the fad ($x$-axis) and the inventory of the market maker (colour bar).
Solid lines: scaled difference of optimal ask displacement. Dashed lines: scaled difference of  optimal bid displacement. }\label{fig:optimal quotes gamma} 
\end{figure} 

\textbf{Effect of $\eta$}: The parameter $\eta$ governs the speed of mean reversion of the fad. When $\eta$ increases, the fad reverts more quickly to zero, while when $\eta$ decreases towards zero, the fad becomes pure noise. 
As $\eta$ increases, the performances of the three strategies increase.
The rationale is that as $\eta$ increases, the variance of the fad decreases and it becomes less persistent and, therefore, the market maker is exposed to less risk managing the inventory. This argument holds true for all the three settings. We also observe that the performance gap between the full information and the other two strategies decreases as $\eta$ increases. The reason for this result is that when $\eta$ is high, the fad collapses around zero and, therefore, the
model with no information becomes less misspecified. 
Confirming this interpretation, it can be shown that the bid-ask spread decreases as $\eta$ increases.

\begin{figure}[H]
\centering

\includegraphics[width=0.5\textwidth]{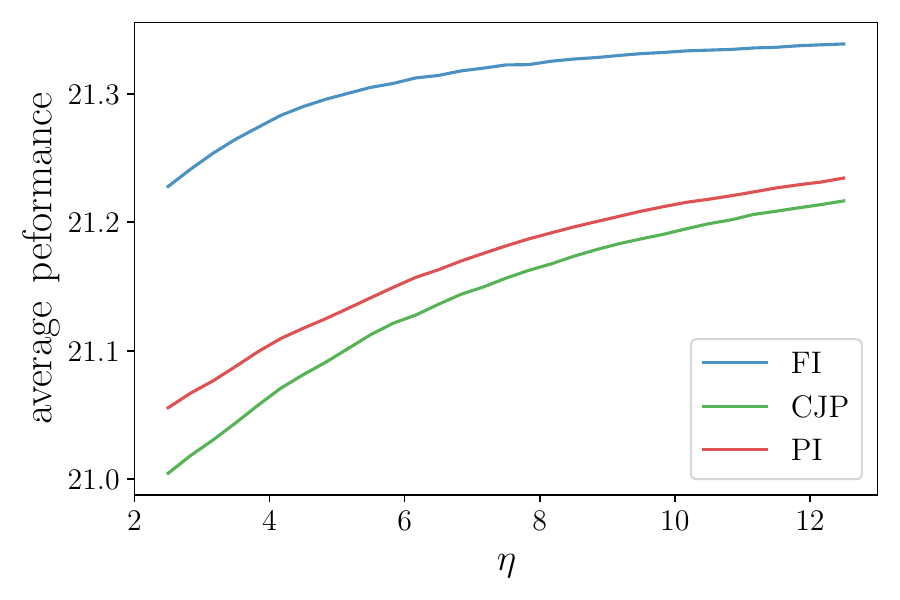}
\caption{Performance of the strategies as a function of $\eta$. }\label{fig:PnL_eta} 
\end{figure}

\textbf{Effect of $\varphi$}: We consider the change in the percentage of informed traders keeping constant the expected number of market arrivals. To this end, given a value for $\varphi\in[0,30]$,
we compute $\psi$ according to \eqref{eq_psi}. The percentage of informed traders is  $(1 - \varphi/30)\times 100 \%$.

In Figure  \ref{fig:spread_psi} we show the quoted bid-ask spread of the market maker and her performance as a function of the percentage of informed traders in the market. The bid-ask spread increases in the fraction of informed traders. This effect is due to classical adverse selection effect going back to \cite{GLOSTEN}: the market maker reacts to the presence of more informed traders by widening the spread.\footnote{Quoting \cite{GLOSTEN}: ``\textit{Generally, ask prices increase and bid prices decrease if the insiders' information becomes better, or the insiders become more numerous relative to liquidity
traders}''.} 
We also observe that the bid-ask spread increases when the running penalty on inventory increases. This is because a higher running penalty makes the market maker more averse to inventory risk.

As far as the performance is concerned, we observe that it decreases with the fraction of informed traders. The root cause of this result is that as the fraction of informed traders increases, the fraction of ''toxic'' trades, e.g., those arriving from informed traders and referring to the fundamental value, goes up. This jeopardises the performance of the market maker.

 \begin{figure}[H]
\centering
\includegraphics[width=0.4\textwidth]{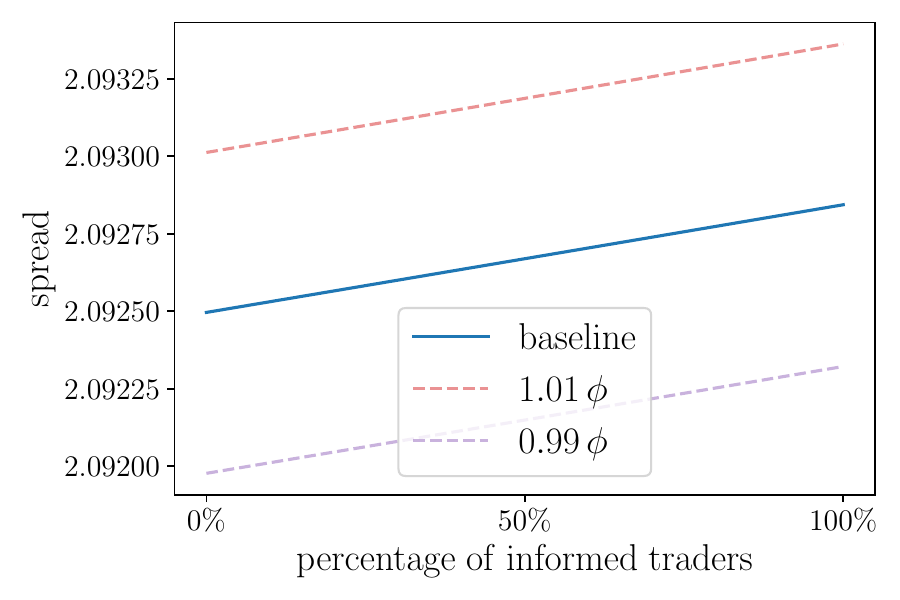}
\includegraphics[width=0.4\textwidth]{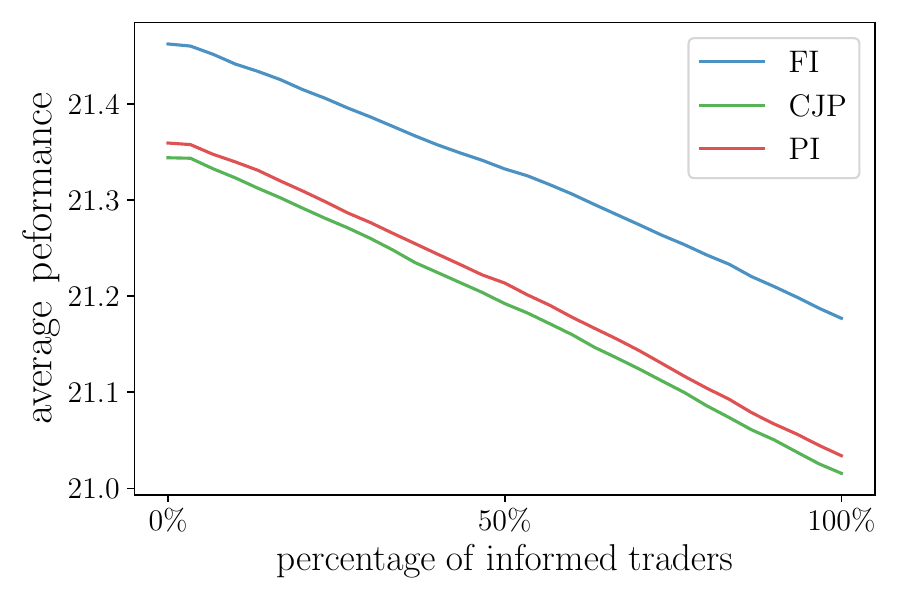}
\caption{Left panel: bid-ask spread of the informed market maker as a function of the percentage of informed traders such that the expected number of market arrivals at $T$ in the full information setting is $30$. The parameter $\phi$ controls the running penalty on inventory. The baseline value is $\phi=0.1$. Right panel: performance of the strategies as a function of the percentage of informed traders.}\label{fig:spread_psi} 
\end{figure}

\section{Calibration and  misspecification of parameters}
\label{CALIB}

Using discrete time observations of the mid price $\process[S]$ and of the market arrivals $\process[M^\pm]$, we are able to obtain estimates of  model parameters $(\qconst,\eta,\varphi,\psi,\gamma)$. More precisely, 
by observing $(S_{t_1},...,S_{t_N})$ we calibrate $(\qconst,\eta)$ using the Kalman--Bucy filter and a maximum likelihood estimator. Once $(\qconst,\eta)$ are calibrated, we construct the filter of the fad $\process[\hat{U}]$ which depends on the calibrated $(\qconst,\eta)$ and use the market arrivals intensities  \eqref{eq:lambdas hat_a}-\eqref{eq:lambdas hat_b}. Then, observing the market arrivals we calibrate $(\psi,\varphi,\gamma)$ using a maximum likelihood estimator. Tables \ref{table_KB_calib} and \ref{table_poisson_calib} report the estimations of the parameters over $1,000$ simulations using these maximization procedures. A detailed derivation of the estimators is provided in Appendix \ref{sec:likelihood_estim}.
    \begin{table}[H]
\centering
\begin{tabular}{l|cc}
\hline
\hline
\noalign{\vskip-1mm}
\noalign{\vskip 2mm}
true parameters & estimation of parameters\\
\noalign{\vskip-1mm}
\noalign{\vskip 2mm}
\hline
\hline
\noalign{\vskip-1mm}
\noalign{\vskip 2mm}
$(\qconst\,,\eta)$ = $(1\,,10)$ &$(0.84\,,12.84)$ \\
$(\qconst\,,\eta)$ = $(0.6\,,10)$ &$(0.62\,,6.28)$ \\
\noalign{\vskip-1mm}
\noalign{\vskip 2mm}
\hline
\hline
\noalign{\vskip-1mm}
\noalign{\vskip 2mm}
$(\qconst\,,\eta)$ = $(0.8\,,10)$ &$(0.74\,,8.52)$ \\
\noalign{\vskip-1mm}
\noalign{\vskip 2mm}
  \hline
  \hline
\noalign{\vskip-1mm}
\noalign{\vskip 2mm}
$(\qconst\,,\eta)$ = $(0.8\,,8)$ &$(0.72\,,7.95)$\\
$(\qconst\,,\eta)$ = $(0.8\,,6)$ &$(0.73\,,7.45)$\\
\hline
\hline
\end{tabular}
  \caption{Calibration of $(\qconst,\eta)$.}
  \label{table_KB_calib}
\end{table}
Table \ref{table_KB_calib} shows the estimation of $\qconst$ and $\eta$ over $1,000$ simulations. The accuracy for $(\eta,\qconst)$ is better when $\qconst$ is large. While the estimate of $\qconst$ remains good, the accuracy for $\eta$ decreases when $\qconst$ decreases. Indeed, a small value of $\qconst$ affects the quality of the filter affecting the quality of the information it contains. 
    \begin{table}[H]
\centering
\begin{tabular}{l|cc}
\hline
\hline
\noalign{\vskip-1mm}
\noalign{\vskip 2mm}
true parameters & estimation of parameters\\
\noalign{\vskip-1mm}
\noalign{\vskip 2mm}
\hline
\hline
\noalign{\vskip-1mm}
\noalign{\vskip 2mm}
$(\varphi\,,\gamma)$ = $(10\,,10)$ &$(9.61\,,10.1)$\\
$(\varphi\,,\gamma)$ = $(5\,,10)$ &$(4.8\,,10.2)$\\
\noalign{\vskip-1mm}
\noalign{\vskip 2mm}
\hline
\hline
\noalign{\vskip-1mm}
\noalign{\vskip 2mm}
$(\varphi\,,\gamma)$ = $(15\,,10)$ & $(14.16\,,10.17)$\\ 
\noalign{\vskip-1mm}
\noalign{\vskip 2mm}
\hline
\hline
\noalign{\vskip-1mm}
\noalign{\vskip 2mm}
$(\varphi\,,\gamma)$ = $(15\,,1)$ &$(6.38\,,1.42)$\\
$(\varphi\,,\gamma)$ = $(15\,,3)$ &$(10.35\,,3.85)$\\
  \hline
  \hline
\end{tabular}
  \caption{Calibration of $(\varphi,\gamma)$.}
  \label{table_poisson_calib}
\end{table}
Next, we estimate $\varphi$ and $\gamma$ over $1,000$ simulations. To isolate the effect of calibrating $\varphi$ and $\gamma$, we focus on the case $\hat{\qconst}=\qconst=1$ and $\hat{\eta}=\eta = 10$; as expected,  if we employ the calibrated parameters from Table \ref{table_KB_calib} the estimation would be worse. Table \ref{table_poisson_calib} shows  that the quality of the filter for $(\varphi,\gamma)$ is high when $\gamma$ is large. Instead, the quality of the calibration drops when $\gamma$ is low. This is because a large value of $\gamma$ increases the degree of influence of the fad in the order arrivals.\footnote{The calibrations use $T= 1$. Increasing the value of $T$ increases the accuracy of the estimations.}

Lastly, to corroborate the quality of the calibration procedure, we investigate the effect of a misspecification made by the market maker on the values of key model parameters.  We limit our analysis to the  full information setting. 
Table \ref{table3} reports the difference in $\% $ between the performance associated with the optimal strategy in the full information setting, and the performance for the same strategy when the parameters are misspecified by $\pm 50 \%$.
For example, if the market maker computes the optimal strategy considering $\qconst=0.3$ rather than the true value ($\qconst=0.6$) then the performance loss is $0.158\%$ (see first value in the third column). 
For the baseline set of model parameters identified at the beginning of this section, we observe that a misspecification of $\qconst$ induces significant changes in profitability, while the strategy is more robust with respect to a misspecification of $\eta$, $\gamma$, or $\varphi$. We can conclude that the key feature for the market maker is how the fad impact the asset price rather than the informed-uninformed composition of the market. 
We remark that these results hold true for the parameters reported at the start of the section. They may change as the reference parameters change.
These results show that the misspecification coming from the calibration of parameters (Tables \ref{table_KB_calib} and \ref{table_poisson_calib}) would induce small changes in performance.

\begin{table}[H]
\centering
\begin{tabular}{l|cc}
\hline
\hline
& \multicolumn{2}{c}{Robustness (in $\%$)}\\
\noalign{\vskip-1mm}
\noalign{\vskip 2mm}
\hline
\hline
\noalign{\vskip-1mm}
\noalign{\vskip 2mm}
baseline & $\text{over-estimation by } 50\%$ & $ \text{under-estimation by } 50\%$\\
\noalign{\vskip-1mm}
\noalign{\vskip 2mm}
\hline
\hline
\noalign{\vskip-1mm}
\noalign{\vskip 2mm}
$\qconst = 0.6$ &$0.160\%^{**}$&$0.158\%^{**}$\\
$ \gamma = 1$&$0.000\%$&$0.002\% $\\
$ \eta = 10$& $0.004\%$ & $0.014\%$\\
$ \varphi = 15$&$0.004\%$&$0.003\% $\\
  \hline
  \hline
\end{tabular}
  \caption{Percentage loss in the performance \eqref{eq: payoff} between the baseline optimal strategy in the full informed setting, and a strategy that misspecifies the parameter by $\pm 50 \%$. The number of simulations is 1,000,000, 
 $``**"$ means significant at $1 \%$. }
  \label{table3}
\end{table}

\section{Future Research Direction}
\label{sec : beyond approx}
\subsection{Beyond first approximation of the intensities}
As a first order approximation, we modelled $\process[{N}^{a,b}]$ as counting processes with intensities 
\begin{align}
    \hat{\lambda}^{a}_t &= (\varphi \, e^{-k \, \delta^{a}_t}  + \psi \, \exp(- k \, \delta_t ^{a} -  \gamma \, (\sigma \, \qconst \, \hat{U}_t) \, \vee \, \mathcal{S}^-)) \, \mathds{1}_{\{Q_{t^-} > \underline{q} \}} \, , \\
      \hat{\lambda}^{b}_t &= (\varphi \, e^{-k \, \delta^{b}_t}  + \psi \, \exp(- k \, \delta_t ^{b} +  \gamma \, (\sigma \, \qconst \, \hat{U}_t) \, \wedge \, \mathcal{S}^+)) \, \mathds{1}_{\{Q_{t^-} < \overline{q} \}} \, .
\end{align}
Here, we construct an exact intensity process $\process[\psi \, \hat{\lambda}^{a,b}]$ for the counting process $\process[{N}^{a,b}]$ in the special case $\mathcal{S}^+ = -\mathcal{S}^- = + \infty$.\footnote{This could be understood as follows: the market maker faces informed traders that know exactly what the fad is, but the market maker does not observe $U_t$. Given  knowledge of the price, the market maker estimates the intensity of the market arrivals with $\mathbb{E}[\psi \, e^{\mp \, \qconst \, \gamma \, U_t} | \mathcal{F}_t^S]$. }

\begin{prop}
Let $\hat{\lambda}_t^a = e^{- k \, \delta_t^a} \,(\varphi + \psi \, \mathbb{E}[e^{- \gamma \, \sigma \, \qconst \, U_t} | \mathcal{F}_t^S]) \, \mathds{1}_{\{Q_t > \underline{q} \}}$. Define $\process[Z]$ as $Z_t := e^{-\gamma \, \sigma \, \qconst \, U_t}$.
    Then, it follows that
    \begin{equation}
        \d Z_t = Z_t \, \Bigg( \eta \,\gamma\, \sigma \, \qconst \, U_t + \frac{(\gamma \,  \sigma \, \qconst)^2}{2} \,  \d t - \gamma \, \sigma \, \qconst \, \d B_t \Bigg) =: h(U_t,Z_t) \, \d t + \sigma(Z_t) \, \d B_t \, .
    \end{equation}
    Define $\hat{Z}_t := \mathbb{E}[Z_t | \mathcal{F}_t^S]$, we have that
    \begin{equation}
         \d \hat{Z}_t = \Bigg(\frac{ (\gamma \,  \sigma \, \qconst)^2}{2} \, \hat{Z}_t + \eta \,\gamma \,  \sigma \, \qconst \, (Q_t + \hat{Z}_t \, \hat{U}_t ) \Bigg) \, \d t - \bigl(  \qconst^2 \, \sigma \, \gamma \, \hat{Z}_t + \eta \, \qconst  \, Q_t \bigl) \, \d \Tilde{I}_t \, ,
    \end{equation}
    where $Q_t = \mathbb{E}[(U_t - \hat{U}_t)(Z_t - \hat{Z}_t) | \mathcal{F}_t^S]$ and $\process[\Tilde{I}]$ is the Brownian Motion defined by the filtering process.
\end{prop}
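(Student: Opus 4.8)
The plan is to prove the two dynamics in turn: an Itô-formula computation for $\process[Z]$, and then a nonlinear-filtering (Fujisaki--Kallianpur--Kunita) argument for $\process[\hat Z]$ that mirrors the innovations-based derivation already used for $\process[\hat U]$ in Theorem~\ref{thm:filter}. For the first SDE I would set $g(x)=e^{-\gamma\,\sigma\,\qconst\,x}$, record $g'(x)=-\gamma\,\sigma\,\qconst\,g(x)$ and $g''(x)=(\gamma\,\sigma\,\qconst)^2\,g(x)$, and apply Itô's formula to $Z_t=g(U_t)$ with $\d U_t=-\eta\,U_t\,\d t+\d B_t$ and $\d\langle U\rangle_t=\d t$. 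The drift $\tfrac12 g''(U_t)-\eta\,U_t\,g'(U_t)=Z_t\big(\tfrac12(\gamma\sigma\qconst)^2+\eta\,\gamma\,\sigma\,\qconst\,U_t\big)$ together with the martingale part $g'(U_t)\,\d B_t=-\gamma\sigma\qconst\,Z_t\,\d B_t$ gives the claimed equation; this step is routine.

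For $\process[\hat Z]$ I would use the innovations approach, writing $\widehat{\,\cdot\,}$ for $\mathbb{E}[\,\cdot\,\mid\mathcal{F}_t^S]$. Denote the drift and diffusion coefficients of $Z$ by $\alpha_t:=Z_t(\eta\gamma\sigma\qconst U_t+\tfrac12(\gamma\sigma\qconst)^2)$ and $\beta_t:=-\gamma\sigma\qconst Z_t$. Exactly as for $\hat U$, one first checks that $\hat Z_t-\int_0^t\widehat{\alpha}_s\,\d s$ is an $\process[\mcF^S]$-martingale; this is an $L^2$ object since $U$ is Gaussian and $Z$ is lognormal with finite moments, so $\int\beta\,\d B$ is a true martingale. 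By the representation in Proposition~\ref{marting:rpz} it equals $\int_0^t\nu_s\,\sigma^{-1}\,\d I_s$ for some $\process[\mcF^S]$-progressive gain $\process[\nu]$, hence $\d\hat Z_t=\widehat{\alpha}_t\,\d t+\nu_t\,\sigma^{-1}\,\d I_t$. The drift follows by conditioning: $\widehat{\alpha}_t=\eta\gamma\sigma\qconst\,\widehat{U_tZ_t}+\tfrac12(\gamma\sigma\qconst)^2\hat Z_t$, and the key point is that this cross-moment does not close, so I introduce $Q_t=\widehat{U_tZ_t}-\hat U_t\hat Z_t$ and substitute $\widehat{U_tZ_t}=Q_t+\hat U_t\hat Z_t$ to recover the stated drift $\tfrac12(\gamma\sigma\qconst)^2\hat Z_t+\eta\gamma\sigma\qconst(Q_t+\hat Z_t\hat U_t)$.

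The gain is identified by matching quadratic covariations. Since $\langle I\rangle_t=\langle S\rangle_t=\sigma^2 t$, the representation gives $\d\langle\hat Z,I\rangle_t=\nu_t\,\sigma\,\d t$, while the correlated FKK identity gives $\d\langle\hat Z,I\rangle_t=\big(\widehat{Z_t\,h(U_t)}-\hat Z_t\,\widehat{h(U_t)}+\widehat{D}_t\big)\d t$, where $D_t$ is the density of $\d\langle Z,S\rangle_t$ and $h(u)=\mu-\eta\qconst\sigma u$. The $\mu$-terms cancel and the first difference collapses to $-\eta\qconst\sigma\,(\widehat{U_tZ_t}-\hat U_t\hat Z_t)=-\eta\qconst\sigma\,Q_t$, while the correlation term reads $D_t=\beta_t\,\sigma\,\qconst=-\gamma\sigma^2\qconst^2 Z_t$ from $\d\langle B,\Bar W\rangle_t=\qconst\,\d t$, so $\widehat{D}_t=-\gamma\sigma^2\qconst^2\hat Z_t$. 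Thus $\nu_t\sigma=-\eta\qconst\sigma Q_t-\gamma\sigma^2\qconst^2\hat Z_t$, giving $\nu_t=-(\eta\qconst Q_t+\qconst^2\sigma\gamma\hat Z_t)$; since $\sigma^{-1}\d I_t=\d\Tilde{I}_t$, the diffusion part $\nu_t\sigma^{-1}\d I_t=\nu_t\,\d\Tilde{I}_t$ reproduces the claimed coefficient $-(\qconst^2\sigma\gamma\hat Z_t+\eta\qconst Q_t)$. As a sanity check, running the same computation with $\xi=U$ (so $\beta\equiv1$) returns the Kalman--Bucy gain $\sigma^{-1}\qconst(1-\eta\hat P_t)$ of Theorem~\ref{thm:filter}.

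The main obstacle is precisely the correlation bookkeeping: because the fad noise $B$ is correlated with the price noise $\Bar W$, the gain carries the extra covariation term $\widehat{D}_t$, and dropping it (as an uncorrelated filter would) loses exactly the $\qconst^2\sigma\gamma\hat Z_t$ contribution to the diffusion. A secondary subtlety worth flagging is that the system is genuinely non-closed: the conditional covariance $Q_t$ enters both drift and diffusion, so the stated SDE is only the first equation of an unclosed hierarchy, consistent with the paper presenting this as a direction ``beyond'' the Kalman--Bucy filter.
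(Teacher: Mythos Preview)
Your proposal is correct and follows essentially the same innovations-based route as the paper: define the compensated process $\hat Z_t-\int_0^t\widehat{h(U_s,Z_s)}\,\d s$, show it is an $\process[\mcF^S]$-martingale, represent it against the innovation via Proposition~\ref{marting:rpz}, and then identify the gain. The only cosmetic difference is that you invoke the correlated FKK gain formula directly to read off $\nu_t=-(\eta\qconst Q_t+\qconst^2\sigma\gamma\hat Z_t)$, whereas the paper simply writes ``similar calculations to those in Section~4.1'' and points back to the testing argument (pairing against $\eta_t=\int_0^t\sigma^{-1}\xi_s\,\d I_s$ and integrating by parts) used in the proof of Theorem~\ref{thm:filter}; these are the same computation packaged two ways.
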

\begin{proof}
    Consider the process $\process[\Gamma]$ as $\Gamma_t := \hat{Z}_t - 1 - \int_0^t \widehat{h(U_s,Z_s)} \, \d s$. Then, $\process[\Gamma]$ is an $L^2 \text{-bounded } (\mathcal{F}_t^S)$-martingale starting from zero.

    Indeed, consider $0 \leq s \leq t \leq T$, then, using the tower property, we have 
    \begin{equation}
        \mathbb{E}[\Gamma_t | \mathcal{F}_s^S] = \mathbb{E}[Z_t | \mathcal{F}_s^S] - \int_0^s \widehat{h(U_u,Z_u)} \, \d u - \int_s^t \mathbb{E}[h(U_u,Z_u)| \mathcal{F}_s^S] \, \d u - 1\, ,
    \end{equation}
    which implies 
    \begin{equation}
        \mathbb{E}[Z_t | \mathcal{F}_s^S] = \hat{Z}_s + \mathbb{E}\Bigg[\int_s^t h(U_u,Z_u) \, \d u + \sigma(Z_u) \, \d B_u \Big| \mathcal{F}_s^S \Bigg] \, .
    \end{equation}
    Using the tower property, we find 
    \begin{equation}
        \mathbb{E}\Bigg[\int_s^t \sigma(Z_u) \, \d B_u \Big| \mathcal{F}_s^S \Bigg] = \mathbb{E}\Bigg[\mathbb{E}\Bigg[\int_s^t \sigma(Z_u) \, \d B_u \Big| \mathcal{F}_s \Bigg] \Big| \mathcal{F}_s^S \Bigg] \, , 
    \end{equation}
    and since $t \rightarrow \int_0^t \sigma(Z_s) \, \d B_s$ is a $(\mathcal{F}_t)_t$-martingale, we have that
    \begin{equation}
         \mathbb{E}\Bigg[\int_s^t \sigma(Z_u) \, \d B_u \Big| \mathcal{F}_s^S \Bigg] = 0 \, .
    \end{equation}
    Finally, we find 
    \begin{equation}
        \mathbb{E}[\Gamma_t | \mathcal{F}_s^S] = \hat{Z}_s - 1 -\int_0^s \widehat{h(U_u,Z_u)} \, \d u = \Gamma_s \, ,
    \end{equation}
    which means that $\process[\Gamma]$ is an $\mathcal{F}^S$-martingale, with initial value zero. Moreover, by construction, we also have that $\process[\Gamma]$ is bounded in $L^2(\Omega, \mathcal{F}^S, \mathbb{P})$.

    Since $\process[\Gamma]$ satisfies condition of Proposition 2.31 in \cite{bain2009fundamentals}, then, there exists an $\process[\mathcal{F}^S]$-progressively measurable process $\process[\nu]$ such that 
    \begin{equation}
        \Gamma_t = \int_0^t \nu_s \, \d \Tilde{I}_s \, .
    \end{equation}
    Similar calculations to those in Section 4.1 imply that $\nu_t =  \qconst \, \sigma(Z_t) - \eta \, \qconst \, Q_t$, where $Q_t := \mathbb{E}[(U_t - \hat{U}_t)(Z_t - \hat{Z}_t) | \mathcal{F}_t^S]$. Then, we get 
    \begin{equation}
        \d \hat{Z}_t = (\qconst \,\widehat{ \sigma(Z_t)} - \eta \, \qconst \, Q_t) \, \d \Tilde{I}_t + \widehat{h(U_t,Z_t)} \, \d t \, ,
        \end{equation}
    where 
    $$\widehat{h(U_t,Z_t)} = \eta \,\gamma \,  \sigma \, \qconst \, \widehat{Z_t \, U_t} + \frac{(\gamma\,  \sigma \, \qconst)^2}{2} \, \hat{Z}_t = \frac{(\gamma \,  \sigma \, \qconst)^2}{2} \, \hat{Z}_t + \eta \,\gamma \,  \sigma \, \qconst \, (Q_t + \hat{Z}_t \, \hat{U}_t ) \, , $$
    and where 
    $$\widehat{\sigma(Z_t)} = -\gamma\,  \sigma \, \qconst \, \hat{Z}_t \, .$$
    Finally, we have 
    \begin{equation}
        \d \hat{Z}_t = \Bigg(\frac{(\gamma \,  \sigma \, \qconst)^2}{2} \, \hat{Z}_t + \eta \, \gamma \,   \sigma \, \qconst \, (Q_t + \hat{Z}_t \, \hat{U}_t ) \Bigg) \, \d t - \bigl(  \qconst^2 \,\gamma \,  \sigma \, \hat{Z}_t + \eta \, \qconst \, Q_t \bigl) \, \d \Tilde{I}_t \, .
    \end{equation}
\end{proof}
The computation for $\lambda^b$ is similar. Since one cannot find closed-form expressions for the process $\process[Q]$, it would be interesting to study non-linear filtering approximation methods, such as those developed in \cite{ertel2024analysis}.

\subsection{Beyond the Kalman-Bucy filter}\label{sec:Beyond KBf}
Building alternative filters is a promising avenue of future research. Information about the fads $U_t$ can be extracted from prices (as done in this paper), or through arrivals of liquidity taking orders. Recall, that in the full information model the arrivals were modulated by the fads in the asset price. Thus, following \cite{casgrain2019trading}, and similar to \cite{cartea2024automated}, one can discretise $U_t$ and model it as a continuous time Markov chain (CTMC) that we introduce next.

Let us consider that the fad process $\process[U]$ is a continuous-time Markov Chain with generator $L$ taking values in $G = \{ \theta^1, ..., \theta^J\}$.
Instead of modelling the stochastic intensity of filled orders, we keep track of both arrivals and fills as two separate (but related) counting processes. Here,  we filter the fads at the level of market arrivals and we formulate the effect of the control at the level of the fills. The disentanglement between arrivals and fills allows us to employ the results in Theorem 3.1 in \cite{casgrain2019trading} and in \cite{chevalier2024optimal}.

We let $M^{a,b}$ be the market arrivals which happen with stochastic intensity 
\begin{align}
    \lambda_t^{a} &= \sum_{j =1}^J \lambda_t^{a,j}\, \mathds{1}_{\{U_t = \theta^j\}} \, ,\\
    \lambda_t^{b} &= \sum_{j =1}^J \lambda_t^{b,j}\, \mathds{1}_{\{U_t = \theta^j\}} \, ,
\end{align}
where
\begin{equation}
    \lambda_t^{a,j} = \varphi + \psi \, e^{-\gamma \, (\qconst \, \sigma \, \theta^j) \vee \, \mathcal{S}^-} \, 
\end{equation}
\begin{equation}
    \lambda_t^{b,j} = \varphi + \psi \, e^{\gamma \, (\qconst \, \sigma \, \theta^j) \, \wedge \, \mathcal{S}^+)} \, .
\end{equation}
Upon arrivals, the fills probabilities are
\begin{align}
     \exp(-k\,\delta^a_t) \,,\qquad \text{and }\qquad  \exp(-k\,\delta^b_t)\,,
\end{align}
and we keep track of all filled trades in the controlled counting processes  $\process[N^{a,b}]$.

Now, we denote the filtration generated by the Poisson processes $\process[M^{a,b}]$ by  $\process[\mathcal{F}^M]$. As before, the full filtration of all processes involved is $\process[\mcF]$.

We are interested in the process $\process[\pi^j]$ defined by
\begin{equation}
    \pi_t^j := \mathbb{E}[ \mathds{1}_{\{U_t = \theta^j\}} | \mathcal{F}_t^M] \, .
\end{equation}
Conditional on knowing  the dynamic of $\process[\pi^j]$, we would be able to transform our partial-information optimal control problem into a full-information one. 
We have the following result.

\begin{thm}
    For all $j \in [\![1,J]\!]$, the process $\process[\pi^j]$ admits the following representation
    \begin{equation}
        \pi_t^i = \Delta_t^i / \sum_{j=1}^J \Delta_t^j \, ,
    \end{equation}
    where for all $j \in [\![1,J]\!]$, the process $\process[\Delta]$ satisfies
    \begin{equation}
        \d \Delta_t^j = \Delta_{t-}^j(\lambda_{t-}^{a,j} - 1) \, (\d M_t^{a} - \d t) + \Delta_{t-}^j(\lambda_{t-}^{b,j} - 1) \, (\d M_t^{b} - \d t) + \sum_{i=1}^J \Delta_{t-}^i {L}_{i,j} \, \d t \, 
    \end{equation}
\end{thm}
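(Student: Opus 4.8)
The plan is to derive the equation for the \emph{unnormalised} filter (a Zakai-type equation) via a reference-measure change and then recover $\process[\pi^j]$ by normalisation (the Kallianpur--Striebel formula). Write $e^j_t := \mathds{1}_{\{U_t = \theta^j\}}$, so that $\pi_t^j = \mathbb{E}[e^j_t \mid \mathcal{F}_t^M]$ and $\sum_{j=1}^J e^j_t = 1$. Since the generator-$L$ chain admits the Dynkin decomposition $\d e^j_t = \sum_{i=1}^J e^i_{t-} L_{i,j}\,\d t + \d m^j_t$ for a $\process[\mcF]$-martingale $m^j$, and since $G$ is finite the candidate intensities $\lambda^{a,j},\lambda^{b,j}$ form a finite, hence bounded, deterministic family.

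First I would introduce a reference measure $\mathbb{P}_0$, equivalent to $\mathbb{P}$ on each $\mathcal{F}_t$, under which $\process[M^a]$ and $\process[M^b]$ are independent unit-rate Poisson processes, independent of the chain $\process[U]$ (whose law is unchanged). Boundedness of the intensities guarantees that the density process
\[
\Lambda_t := \left.\frac{\d\mathbb{P}}{\d\mathbb{P}_0}\right|_{\mathcal{F}_t}, \qquad \d\Lambda_t = \Lambda_{t-}\Big[(\lambda^a_t-1)(\d M^a_t-\d t) + (\lambda^b_t-1)(\d M^b_t-\d t)\Big],
\]
is a genuine $\mathbb{P}_0$-martingale, so the change of measure is legitimate. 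The Kallianpur--Striebel formula then gives $\pi_t^j = \Delta_t^j / \mathbb{E}_0[\Lambda_t\mid\mathcal{F}_t^M]$ with $\Delta_t^j := \mathbb{E}_0[e^j_t\Lambda_t\mid\mathcal{F}_t^M]$; summing over $j$ and using $\sum_j e^j_t=1$ identifies the denominator as $\sum_{i=1}^J \Delta_t^i$, which is exactly the claimed normalisation $\pi_t^i = \Delta_t^i/\sum_{j=1}^J\Delta_t^j$.

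Next I would compute the dynamics of the unnormalised weights $\tilde\Delta^j_t := e^j_t\Lambda_t$ by integration by parts. Under $\mathbb{P}_0$ the chain and the Poisson processes never jump simultaneously, so $[e^j,\Lambda]\equiv 0$, and
\[
\d\tilde\Delta^j_t = \tilde\Delta^j_{t-}(\lambda^{a,j}_{t-}-1)(\d M^a_t-\d t) + \tilde\Delta^j_{t-}(\lambda^{b,j}_{t-}-1)(\d M^b_t-\d t) + \Lambda_{t-}\sum_{i=1}^J e^i_{t-}L_{i,j}\,\d t + \Lambda_{t-}\,\d m^j_t,
\]
where I have used the crucial simplification $e^j_{t-}\lambda^a_t = e^j_{t-}\lambda^{a,j}_{t-}$ (and likewise for $b$), valid because the indicator selects the single state-dependent intensity. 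Taking the $\mathcal{F}^M$-optional projection of this identity and invoking the representation/projection results for point-process filtering (Theorem 3.1 in \cite{casgrain2019trading}, cf. Proposition 2.31 in \cite{bain2009fundamentals}) produces the stated equation: the $\d t$-drift projects coordinate-wise to $\sum_i \Delta^i_{t-}L_{i,j}\,\d t$, the integrands of the compensated observation martingales project to $\Delta^j_{t-}(\lambda^{a,j}_{t-}-1)$ and $\Delta^j_{t-}(\lambda^{b,j}_{t-}-1)$ (the deterministic intensities factoring out of the conditional expectation), and the orthogonal martingale $\int\Lambda_{s-}\,\d m^j_s$, whose jumps are carried by $U$ and are independent of the observation under $\mathbb{P}_0$, projects to zero.

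The main obstacle is this last projection step: rigorously interchanging $\mathbb{E}_0[\,\cdot\mid\mathcal{F}_t^M]$ with the stochastic integrals, i.e.\ showing that the projection of $\int \tilde\Delta^j_{s-}(\lambda^{a,j}_{s-}-1)(\d M^a_s-\d s)$ is $\int \Delta^j_{s-}(\lambda^{a,j}_{s-}-1)(\d M^a_s-\d s)$ and that the $U$-driven martingale contributes nothing. This is precisely the content of the cited filtering theorems, which apply because the intensities are bounded and $\mathcal{F}^M$-adapted and because $\process[M^a]$, $\process[M^b]$ remain unit-rate $\mathbb{P}_0$-martingales after compensation. Once these are in force, the normalisation $\pi_t^i = \Delta_t^i/\sum_{j=1}^J\Delta_t^j$ is immediate from the Kallianpur--Striebel step, completing the proof.
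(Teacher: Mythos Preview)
Your proposal is correct and in fact substantially more detailed than the paper's own argument: the paper proves this theorem in a single line, writing ``This is a direct consequence of Theorem 3.1 in \cite{CASGRAIN}.'' What you have done is unpack the reference-measure/Zakai/Kallianpur--Striebel machinery that underlies that cited result, so the approaches coincide at the level of ideas; yours simply makes explicit the change to a unit-rate reference measure, the integration-by-parts for $e^j_t\Lambda_t$, and the projection step that the paper delegates wholesale to the citation.
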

\begin{proof}
    This is direct consequence of Theorem 3.1 in \cite{CASGRAIN}.
\end{proof}

Armed with the  above filter, we formulate the following control problem. 
We write the price dynamics of $\process[S]$ as 
\begin{equation}
   \d S_t = \d P_t^S +\sigma \, \pconst \, \d Z_t= \sum_{j=1}^J  \sigma \, \qconst \, \theta^j \, \d \mathds{1}_{\{U_t = \theta^j\}} +\mu \,  \d t + \sigma \, \pconst \,  \d Z_t \, ,
\end{equation}
and assuming that  the process $\process[Z]$ is independent of $\process[\mathcal{F}^M]$, we have 
\begin{equation}
    \hat{S}_t = \mu \,t + \sum_{j=1}^J  \sigma \, \qconst \, \theta^j \,\pi_t^j  
    \,.
 \end{equation}

\begin{prop}
Let $\hat{\lambda}^{a,b}_t$ be given by
\begin{equation}
    \hat{\lambda}^{a,b}_t = \sum_{j=1}^J \lambda_t^{a,b,j} \, \pi_t^j \, .
\end{equation}
The process $\process[M^{a,b}]$ is an $\process[\mathcal{F}^M]$-Poisson processes with stochastic intensity $\hat{\lambda}^{a,b}_t$.
\end{prop}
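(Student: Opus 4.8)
The statement is an instance of the classical innovations principle for point processes: when a counting process is observed only through a coarser filtration, its intensity collapses to the projection of the original intensity onto that filtration. The plan is therefore to start from the known $\mathcal{F}$-intensity of $M^a$, project it onto $\mathcal{F}^M$, and verify directly via the martingale characterisation that the projected process $\hat{\lambda}^a_t = \sum_{j=1}^J \lambda_t^{a,j}\,\pi_t^j$ serves as the $\mathcal{F}^M$-intensity; the argument for $M^b$ is identical. Throughout I use that $\mathcal{F}_t^M \subset \mathcal{F}_t$ and that the $\lambda^{a,j}$ are bounded constants (the state space $G$ is finite), so every integrand below is bounded and Fubini applies without difficulty.

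First I would recall that, by construction of $M^a$ together with the representation $\lambda_t^a = \sum_{j=1}^J \lambda_t^{a,j}\,\mathds{1}_{\{U_t = \theta^j\}}$, the process $M_t^a - \int_0^t \lambda_s^a\,\d s$ is an $\mathcal{F}$-martingale. The heart of the proof is to show that $\hat{M}_t^a := M_t^a - \int_0^t \hat{\lambda}_s^a\,\d s$ is an $\mathcal{F}^M$-martingale. Fix $0 \leq s \leq t$ and $A \in \mathcal{F}_s^M$. Since $A \in \mathcal{F}_s \subset \mathcal{F}_u$ for every $u \geq s$, the $\mathcal{F}$-martingale property combined with Fubini gives
\begin{equation*}
\mathbb{E}\!\left[\mathds{1}_A\,(M_t^a - M_s^a)\right] = \mathbb{E}\!\left[\mathds{1}_A \int_s^t \lambda_u^a\,\d u\right] = \int_s^t \mathbb{E}\!\left[\mathds{1}_A\,\lambda_u^a\right]\d u \, .
\end{equation*}
For each fixed $u \in [s,t]$ we have $\mathds{1}_A \in \mathcal{F}_u^M$, so by the tower property and the definition of $\pi_u^j$,
\begin{equation*}
\mathbb{E}\!\left[\mathds{1}_A\,\lambda_u^a\right] = \mathbb{E}\!\left[\mathds{1}_A\,\mathbb{E}[\lambda_u^a \mid \mathcal{F}_u^M]\right] = \mathbb{E}\!\left[\mathds{1}_A \sum_{j=1}^J \lambda_u^{a,j}\,\pi_u^j\right] = \mathbb{E}\!\left[\mathds{1}_A\,\hat{\lambda}_u^a\right] \, .
\end{equation*}
Reassembling and applying Fubini once more yields $\mathbb{E}[\mathds{1}_A(M_t^a - M_s^a)] = \mathbb{E}[\mathds{1}_A \int_s^t \hat{\lambda}_u^a\,\d u]$, which is exactly the martingale property of $\hat{M}^a$ relative to $\mathcal{F}^M$.

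To finish, I would note that $\hat{\lambda}^a$ is $\mathcal{F}^M$-adapted and, by the filter dynamics established in the preceding theorem, càdlàg; passing to its left-continuous version gives an $\mathcal{F}^M$-predictable process whose integral compensates $M^a$. Hence $M^a$ admits $\hat{\lambda}^a_t$ as its $\mathcal{F}^M$-intensity, i.e.\ it is a point process with stochastic intensity $\hat{\lambda}^a$ in the sense used earlier in the paper, and symmetrically for $M^b$. The delicate point is not the identity $\hat{\lambda}^a_t = \mathbb{E}[\lambda_t^a \mid \mathcal{F}_t^M]$ itself --- this is immediate because the $\lambda^{a,j}$ are deterministic --- but rather the passage from the \emph{optional} projection that the martingale computation naturally produces to a genuine \emph{predictable} intensity, which requires the existence of a suitable predictable version. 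This is precisely where the regularity of the filter $\pi^j$ from the previous theorem, and at the level of generality the innovations results invoked through \cite{bain2009fundamentals,casgrain2019trading}, enter the argument.
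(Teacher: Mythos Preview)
Your proposal is correct and follows essentially the same route as the paper: both arguments compensate $M^{a,b}$ with the projected intensity $\hat{\lambda}^{a,b}$, verify the $\mathcal{F}^M$-martingale property of $M_t^{a,b}-\int_0^t\hat{\lambda}_s^{a,b}\,\d s$ via the tower property (you use test sets $A\in\mathcal{F}_s^M$, the paper conditions directly), and then conclude; the paper finishes by invoking Watanabe's characterisation explicitly, while you phrase the conclusion in terms of the intensity and add a remark on predictable versions that the paper omits.
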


\begin{proof}
Let us consider the compensated processes 
\begin{equation}
    D_t^{a,b} := M_t^{a,b} - \int_0^t \lambda_s^{a,b} \, \d s\, .
\end{equation}
We remark that those are $\process[\mathcal{F}]$-martingales. Next,  we consider the processes $\process[\widehat{D}^{a,b}]$ defined by
\begin{equation}
    \widehat{D}^{a,b}_t = D_t^{a,b} + \int_0^t (\lambda_s^{a,b} - \hat{\lambda}_s^{a,b}) \, \d s = M_t^{a,b} - \int_0^t \hat{\lambda}_s^{a,b} \, \d s \, ,
\end{equation}
and we show that it is a $\process[\mathcal{F}^M]$-martingale. Since $\process[M^{a,b}]$ and $\process[\hat{\lambda}^{a,b}]$ are $\process[\mathcal{F}^M]$-adapted, then $\process[\widehat{D}^{a,b}]$ is $\process[\mathcal{F}^M]$-adapted. By the definition of the processes we have that $\process[\widehat{D}^{a,b}] \in L^1(\Omega, \process[\mathcal{F}^M], \mathbb{P})$. Let $0 \leq s \leq t \leq T$, we have 
    $$\mathbb{E}[M_t^{a,b} - \int_0^t \hat{\lambda}^{a,b}_u \d u | \mathcal{F}_s^N] = \widehat{D}_s^{a,b} + \mathbb{E}[D_t^{a,b} - D_s^{a,b} | \mathcal{F}_s^N] - \mathbb{E}[\int_s^t (\lambda_u^{a,b} - \hat{\lambda}_u^{a,b} )\, \d u  | \mathcal{F}_s^M] \, .$$

However, by the tower property and the fact that $\process[D^{a,b}]$ is a martingale, we have $$ \mathbb{E}\bigl[D_t^{a,b} - D_s^{a,b} | \mathcal{F}_s^M \bigl] = \mathbb{E}[ \mathbb{E}[D_t^{a,b} - D_s^{a,b} | \mathcal{F}_s]| \mathcal{F}_s^M] = 0\, .$$ Again, by the tower property, $$\mathbb{E}\Bigg[\int_s^t (\lambda_u^{a,b} - \hat{\lambda}_u^{a,b} )\, \d u  | \mathcal{F}_s^M \Bigg] = \mathbb{E}\Bigg[\mathbb{E}\Bigg[\int_s^t (\lambda_u^{a,b} - \hat{\lambda}_u^{a,b} )\, \d u  | \mathcal{F}_u^M\Bigg]| \mathcal{F}_s^M\Bigg] = 0\, .$$
Finally, we end up with $\mathbb{E}[\widehat{D}^{a,b}_t | \mathcal{F}_s^M] = \widehat{D}^{a,b}_s$, and then $\process[\widehat{D}^{a,b}]$ is a $\process[\mathcal{F}^M]$-martingale. 
    \\
    \newline
    Finally, by the  Watanabe's characterization theorem, $\process[M^{a,b}]$ are $\process[\mathcal{F}^M]$ Poisson processes, with stochastic intensities $\process[\hat{\lambda}^{a,b}]$.
\end{proof} 
Once we have this, we can rewrite the dynamics of the processes as $\process[\mathcal{F}^M]$-predicable processes.
\begin{equation}
    \d \hat{S}_t = \d \hat{P}_t^S = \mu \, \d t + \sum_{j=1}^J  \sigma \, \qconst \, \theta^j \, \d \pi_t^j \, .
\end{equation}
 The processes $\process[M^{a,b}]$ are $\process[\mathcal{F}^M]$-adapted Poisson processes with stochastic intensities $\process[\hat{\lambda}^{a,b}]$ given by 
\begin{equation}
   \hat{\lambda}^{a,b}_t = \sum_{j=1}^J \lambda_t^{a,b,j} \, \pi_t^j \, ,
\end{equation}
and the process $\process[\hat{U}]$ is given by 
\begin{equation}
    \hat{U}_t = \sum_{j=1}^J \theta^j \, \pi_t^j \, .
\end{equation}
Since all the processes are now $\process[\mathcal{F}^M]$-predictable, we are able to formulate the full information optimal control problem similar to that in the previous setting. The optimal control problem is 
\begin{equation}
    V(t,x,q,s, (\Delta^i)_{i \in [\![1,J]\!]}) = \sup_{\delta^{a,b}} \mathbb{E}\left[ \hat{X}_T + {Q}_T \, \hat{S}_T - \alpha \, {Q}_T^2 - \phi \, \int_t^T {Q}_u^2 \, \d u \right] \, , 
\end{equation}
with 
\begin{align}
\begin{cases}
    &\d {S}_t = \mu \, \d t + \sum_{j = 1}^J \sigma \, \qconst \, \theta^j \d \pi_t^j \, , \\
    &\d \hat{U}_t = \sum_{j=1}^J \theta^j \,\d  \pi_t^j \, ,\\
    &\d {Q}_t = \d \Tilde{N}^b_t - \d \Tilde{N}^a_t \,, \\
    &\d \hat{X}_t = (\hat{S}_t + \delta^a_t) \, \d \Tilde{N}^a_t - (\hat{S}_t - \delta^b_t) \, \d \Tilde{N}^b_t \,, \\
    &\pi_t^i = \Delta_t^i / \sum_{j=1}^J \Delta_t^j \, , \\
    &\d \Delta_t^j = \Delta_{t-}^j(\lambda_{t-}^{a,j} - 1) \, (\d M_t^{a} - \d t) + \Delta_{t-}^j(\lambda_{t-}^{b,j} - 1) \, (\d M_t^{b} - \d t) + \sum_{i=1}^J \Delta_{t-}^i {L}_{i,j} \, \d t \, ,
    \end{cases}
\end{align}
where $\process[\Tilde{N}^{a,b}]$ are Poisson process with intensity\footnote{Similarly to the previous cases, we assume that the inventory of the market maker is in $\mathcal{Q}$.} $\left(\hat{\lambda}^{a,b}_t \, e^{-k\, \delta_t^{a,b}} \, \mathds{1}_{\{Q_t \in \mathring{\mathcal{Q}}\}}\right)_{t \in \mfT}$ and where the controls belong to the set of $\process[\mathcal{F}^{M,N}]$-predictable and bounded processes, where $\process[\mathcal{F}^{M,N}]$ is the filtration generated by the Poisson processes $\process[M^{a,b}]$ and $\process[\Tilde{N}^{a,b}]$. We leave the computation of the optimal strategies for future research.

\section{Conclusions}
\label{CONC}
We computed approximate closed-form solutions for a market making problem à la Avellaneda-Stoikov when the market is populated by informed and uninformed traders.\footnote{Here, information is knowledge about the fad in the market.} To the best of our knowledge, this is the first time that informed and uninformed trading is incorporated in the optimal market making problem. 
Our work builds a bridge for the classical results of \cite{GLOSTEN} in the continuous-time optimal market making problem of \cite{AVELL}.
In a market with informed and uninformed traders where the fads are impounded in the asset price, temporary mispricing generates an endogenous asymmetry in optimal liquidity provision by the market maker. The model provides a structural mechanism to link short-term price deviations (short living information) to asymmetric liquidity.

We carried out an extensive sensitivity analysis to find the value of information (by comparing strategies with increasing levels of information) for the market maker. We  showed how the toxicity of the order flow deteriorates the market maker's performance and how it widens the bid-ask spread.

\section*{Acknowledgements:}
We thank \'A.~Cartea for  comments on earlier versions of this draft. We are also grateful to participants at the Oxford-Man Institute's internal seminar, the Mathematical and Computational Finance Seminar at the University of Oxford, and the  Qfinlab Seminar  at the Politecnico di Milano.\\
\noindent For the purpose of open access, the authors have applied a CC BY public copyright licence to any author accepted manuscript arising from this submission.

\newpage
\bibliographystyle{plain}
\bibliography{references}

\newpage
\appendix
\section{Appendix}
\label{appendix}
\subsection{Gaussian processes properties}
Let us present few results related to the Kalman-Bucy filter.
\begin{prop}
    If we define the process $\process[\mathbb{X}]$ as 
    \begin{equation}
    \label{eq : KF}
        \d \mathbb{X}_t = (\mu - \mathbb{A} \, \mathbb{X}_t) \, \d t + \Sigma_t \, \d \mathbb{W}_t \, , \qquad \mathbb{X}_0 \in \mathbb{R}^d \, ,
    \end{equation}
    where $\process[\mathbb{W}]$ is a p-dimensional Brownian motion, $\process[\mathbb{X}] \in \mathbb{R}^d$, $\Sigma \in C^0( \mathbb{R} ; \mathbb{R}^{d \times p}), \mathbb{A} \in \mathbb{R}^{d \times d}$, and $\mu \in \mathbb{R}^d$, and such that 
    $$
     \int_0^t \lVert e^{-\mathbb{A} \, (t-s)} \, \Sigma_s \rVert^2 \, \d s < \infty \, ,
    $$ then
    \begin{equation}
        \mathbb{X}_t = \int_0^t e^{-\mathbb{A} \, (t-s)} \, \Sigma_s \, \d \mathbb{W}_s +e^{-\mathbb{A} \, t} \, \mathbb{X}_0 + \int_0^t e^{-\mathbb{A} \, (t-s)} \, \mu \, \d s \, .
    \end{equation}
\end{prop}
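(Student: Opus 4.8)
The plan is to use the standard integrating-factor (variation of constants) technique: introduce the auxiliary process $Y_t := e^{\mathbb{A}\,t}\,\mathbb{X}_t$ and show that its dynamics lose the drift term that couples back to $\mathbb{X}_t$, leaving an expression that integrates directly.

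First I would apply the Itô formula to $Y_t = e^{\mathbb{A}\,t}\,\mathbb{X}_t$. Since $t \mapsto e^{\mathbb{A}\,t}$ is deterministic, continuously differentiable, and of finite variation, the product rule contributes no cross-variation term, so that
$$\d Y_t = \mathbb{A}\,e^{\mathbb{A}\,t}\,\mathbb{X}_t\,\d t + e^{\mathbb{A}\,t}\,\d\mathbb{X}_t\,.$$
Substituting the SDE \eqref{eq : KF} for $\d\mathbb{X}_t$ and using that $\mathbb{A}$ commutes with $e^{\mathbb{A}\,t}$ (the latter being a power series in $\mathbb{A}$), the terms $\mathbb{A}\,e^{\mathbb{A}\,t}\,\mathbb{X}_t\,\d t$ and $-e^{\mathbb{A}\,t}\,\mathbb{A}\,\mathbb{X}_t\,\d t$ cancel exactly, yielding the reduced dynamics
$$\d Y_t = e^{\mathbb{A}\,t}\,\mu\,\d t + e^{\mathbb{A}\,t}\,\Sigma_t\,\d\mathbb{W}_t\,.$$

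Next I would integrate this identity from $0$ to $t$, using $Y_0 = \mathbb{X}_0$, to obtain
$$Y_t = \mathbb{X}_0 + \int_0^t e^{\mathbb{A}\,s}\,\mu\,\d s + \int_0^t e^{\mathbb{A}\,s}\,\Sigma_s\,\d\mathbb{W}_s\,,$$
and finally multiply on the left by the fixed matrix $e^{-\mathbb{A}\,t}$, moving it inside each integral (it is non-random and constant in the integration variable $s$) and combining exponentials via $e^{-\mathbb{A}\,t}\,e^{\mathbb{A}\,s} = e^{-\mathbb{A}\,(t-s)}$. This recovers the claimed representation.

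The only genuine technical point to check is that the stochastic integral $\int_0^t e^{-\mathbb{A}\,(t-s)}\,\Sigma_s\,\d\mathbb{W}_s$ is well defined, which is exactly what the hypothesis $\int_0^t \lVert e^{-\mathbb{A}\,(t-s)}\,\Sigma_s\rVert^2\,\d s < \infty$ guarantees: it places the integrand in $L^2([0,t])$, so the Itô integral exists as an $L^2$-limit and the manipulations above are legitimate. Pulling the deterministic matrix $e^{-\mathbb{A}\,t}$ through the Itô integral introduces no quadratic-variation correction, since it is a fixed linear map independent of $s$. I do not expect any serious obstacle here; the result is the classical closed form for a linear Gaussian SDE, and once the integrating factor is chosen the argument reduces to a routine verification.
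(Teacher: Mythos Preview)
Your argument is correct and is the standard variation-of-constants (integrating factor) proof for linear SDEs. The paper, however, states this proposition without proof in the appendix; it is treated as a known preliminary result used to justify that the relevant processes are Gaussian.
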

Then the process $\process[\mathbb{X}]$ is a Gaussian process. For instance, if we consider two one-dimensional processes $\process[X]$ and $\process[Y]$ such that their joint dynamic is given by (\ref{eq : KF}), then they are jointly Gaussian. Consequently, and using Gaussian property, it holds that the conditional expectation of the process $\process[X]$ given the natural filtration generated by $\process[Y]$ is also Gaussian. Finally, we have the following corollary. 
\begin{corollary}
The process $\process[\Tilde{X}]$ defined by
    \begin{equation}
        \Tilde{X}_t := X_t - \mathbb{E}[X_t | \sigma(Y_u)_{u \in [0,t]}] \, ,
    \end{equation}
    is also a Gaussian process, independent of $\sigma(Y)$.
\end{corollary}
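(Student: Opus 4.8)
The plan is to reduce everything to the Gaussian structure guaranteed by the preceding proposition. Since the joint dynamics of $\process[X]$ and $\process[Y]$ are the linear system (\ref{eq : KF}), the explicit representation given there exhibits $(X_t,Y_t)$ as an affine functional (a deterministic term plus a Wiener integral) of the driving Brownian motion; consequently the whole family $\{X_s,Y_s : s\in[0,T]\}$ spans a Gaussian subspace $\mathcal{G}\subset L^2(\Omega)$, and in particular, for each fixed $t$, $X_t$ together with $(Y_u)_{u\in[0,t]}$ is jointly Gaussian. Fixing $t$, I would introduce the closed linear span $H_t:=\overline{\operatorname{span}}\big(\{1\}\cup\{Y_u : u\in[0,t]\}\big)\subset L^2(\Omega)$ and let $\Pi_{H_t}$ denote the orthogonal projection onto it.

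The key reduction is the classical identity, valid because $X_t$ is jointly Gaussian with the conditioning family, that the conditional expectation equals the linear projection: $\hat X_t:=\mathbb{E}[X_t\mid\sigma(Y_u)_{u\in[0,t]}]=\Pi_{H_t}X_t$. Since $\hat X_t$ is then an element of the Gaussian space $\mathcal{G}$, it is Gaussian, and hence so is $\Tilde X_t=X_t-\hat X_t$. To obtain that the whole process $\process[\Tilde X]$ is Gaussian, I would note that each $\Tilde X_t$ lies in $\mathcal{G}$, so any finite linear combination $\sum_i a_i\,\Tilde X_{t_i}$ is univariate Gaussian; by the Cram\'er--Wold device $\process[\Tilde X]$ is a Gaussian process.

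For the independence, I would invoke the defining property of the projection: $\Tilde X_t=X_t-\Pi_{H_t}X_t$ is orthogonal to $H_t$, so $\operatorname{Cov}(\Tilde X_t,Y_u)=0$ for every $u\in[0,t]$. Because $\Tilde X_t$ and $(Y_u)_{u\in[0,t]}$ are jointly Gaussian (all living in $\mathcal{G}$), vanishing covariance is equivalent to independence; hence $\Tilde X_t$ is independent of the observation $\sigma$-algebra $\sigma(Y_u)_{u\in[0,t]}$, which is the content of the statement.

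I expect the main obstacle to be the rigorous justification of the identity $\mathbb{E}[X_t\mid\sigma(Y_u)_{u\in[0,t]}]=\Pi_{H_t}X_t$, since one conditions on the $\sigma$-algebra generated by an uncountable family. The clean way around it is to use that $\process[Y]$ has a.s.\ continuous paths, so $\sigma(Y_u)_{u\in[0,t]}$ is generated by the countable family $(Y_u)_{u\in\mathbb{Q}\cap[0,t]}$ and $H_t$ is its closed linear span; approximating $X_t$ by projections onto finitely many $Y_{u_i}$ (where the projection and the conditional expectation coincide by finite-dimensional Gaussian conditioning) and passing to the $L^2$-limit then yields the identity. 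One should also read the independence as being from the observation filtration up to time $t$, and not from future values $Y_s$ with $s>t$, which is precisely what the projection argument delivers.
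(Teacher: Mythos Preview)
Your argument is correct and is precisely the standard expansion of what the paper gestures at. In the paper this corollary is stated without proof; the only justification given is the sentence immediately preceding it, which says that $(X,Y)$ are jointly Gaussian and hence, ``using Gaussian property,'' the conditional expectation is Gaussian. Your projection argument (conditional expectation $=$ orthogonal projection in the Gaussian space, orthogonality $\Rightarrow$ independence) is exactly the content behind that phrase, and your care about conditioning on an uncountable family via a countable dense subset, as well as your reading of ``independent of $\sigma(Y)$'' as independence of $\Tilde X_t$ from $\mathcal{F}^Y_t$ (which is how the paper actually uses the result in the proof of Theorem~\ref{thm:filter}), are both appropriate.
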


\subsection{Proof of Theorem \ref{thm:filter}}
\begin{proof}
The above Kalman-Bucy filtering equations are derived  as follows. 

Since $\mathbb{E}\bigl[\int_0^t \lvert \, \pi_s (h) \,  \d s \,  \rvert ^2 \bigl] < \infty$ for all $t >0$, we can use  Proposition \ref{marting:rpz}, and given that $\process[\Gamma] \in L^2(\Omega, \mathcal{F}, \mathbb{P})$, we have that
    $$\Gamma_t = \hat{U}_t + \eta \int_0^t \hat{U}_s \, \d s = \int_0^t \nu_s \,\sigma^{-1} \, \d I_s\,,$$ 
    where our goal is to find $\nu$. 
    Let $\process[\xi]$ be an $\process[\mcF^S]$ progressively measurable process in $L^2(\Omega, \mathcal{F},\mathbb{P})$, and  define $\eta_t := \int_0^t \sigma^{-1} \, \xi_s \,\d I_s$. Since $\process[\Tilde{I}]$ is a Brownian Motion and $\process[\xi]$ is square-integrable, we have that $\process[\eta]$ is a martingale, and we get 
    \begin{equation}
        \mathbb{E}[\Gamma_t \,\eta_t ] = \mathbb{E}\bigl[\int_0^t  \nu_s \,\xi_s\, \d s \bigl] = \mathbb{E}[\hat{U}_t \eta_t + \eta \,\int_0^t \hat{U}_s \,\eta_t \, \d s]\,.
    \end{equation}
    Using the martingale property of $\process[\eta]$, we get that 
    \begin{equation}
        \mathbb{E}[\hat{U}_s \,\eta_t] =  \mathbb{E}[\hat{U}_s\,\eta_s] \,.
    \end{equation}
    Hence, we get $\mathbb{E}[\Gamma_t \, \eta_t ] = \mathbb{E}[\hat{U}_t \, \eta_t + \eta \, \int_0^t \hat{U}_s \, \eta_s \, \d s]$\,.
    Moreover, we have $$\d \eta_t =  \sigma^{-1} \, \xi_t(\d Y_t - \pi_t(h) \, \d t) = - \sigma^{-1} \,  \xi_t(\sigma \, \qconst \, \eta \, \Tilde{U}_t \, \d t - \sigma \, \d  \Bar{W}_t)\, ,$$
    where $\Tilde{U}_t := U_t - \hat{U}_t$. \\
    \newline
    Then, using integration by parts, we find that
    \begin{equation}
        \mathbb{E}[U_t \, \eta_t] = \mathbb{E}[ \int_0^t - \eta \, \eta_s \, U_s \d s - \xi_s \, U_s \,  \qconst \, \eta \Tilde{U}_s \, \d s + \xi_s \, \qconst \,  \d s]\, .
    \end{equation}
    Now, using  that $U_t = \Tilde{U}_t + \hat{U}_t$ and that $\Tilde{U}_t$ is independent\footnote{See Appendix \ref{appendix} for the details.} of $\mathcal{F}_t^S$ we find that
    \begin{equation}
        \mathbb{E}[\int_0^t \xi_s \, U_s \, \Tilde{U}_s \, \qconst \, \eta \, \d s] = \mathbb{E}[\int_0^t \xi_s \, \hat{P}_s \,  \qconst \, \eta \, \d s]\, .
    \end{equation}
    Thus, on the one hand 
    \begin{equation}
        \mathbb{E}[\hat{U}_t \, \eta_t] = - \eta \, \mathbb{E}[\int_0^t \eta_s \,  U_s\, \d s] - \mathbb{E}[\int_0^t \xi_s \, \hat{P}_s \, \eta \,  \qconst \, \d s] + \mathbb{E}[\int_0^t \xi_s \,   \qconst \, \d s]\, ,
    \end{equation}
    and on the other hand 
    \begin{equation}
            \mathbb{E}[\hat{U}_t \, \eta_t] = \mathbb{E}[\int_0^t  \xi_s \, \nu_s  \, \d s] - \eta \, \mathbb{E}[\int_0^t \hat{U}_s \, \eta_s \, \d s]\, .
    \end{equation}
    Since $\mathbb{E}[\hat{U}_s \, \eta_s] = \mathbb{E}[U_s \, \eta_s]$, and given that $\process[\xi]$ is arbitrary, we get that 
    \begin{equation}
        \nu_t = - \hat{P}_t \, \eta \, \qconst + \qconst\, .
    \end{equation}
Now we can rewrite $\Gamma_t$ as 
\begin{equation}
    \Gamma_t = \int_0^t  - \hat{P}_s \, \sigma^{-1} \, \qconst \, \eta \, \d I_s + \sigma^{-1} \, \qconst \,  I_t = \hat{U}_t + \eta \,  \int_0^t \hat{U}_s \, \d s \, ,
\end{equation}
and compute the dynamics of $\hat{U}_t $ as
\begin{equation}
    \d \hat{U}_t =- \eta \, \hat{U}_t \, \d t  - \sigma^{-1} \, (  \, \hat{P}_t  \,  \qconst \, \eta - \qconst) \, \d I_t \, . 
\end{equation}
Next, we compute the dynamics of $\hat{P}_t$.
    Since $$\d U_t = - \eta \, U_t \, \d t + \d B_t \, ,$$ we have that 
\begin{align*}  
    \d \Tilde{U}_t &= - \eta \, \Tilde{U}_t \, \d t + \d B_t - (\hat{P}_t \,  \eta \, \qconst - \qconst )\,  \qconst \, \eta \,  \Tilde{U}_t \, \d t + (\hat{P}_t \,  \eta \, \qconst - \qconst) \,  \d \Bar{W}_t \\
    &=  - \eta \, \Tilde{U}_t \, (1 +   \qconst \, (\hat{P}_t  \, \eta \, \qconst - \qconst)\, \d t + (1 + (\hat{P}_t \, \eta \, \qconst - \qconst ) \, \qconst ) \, \d B_t + (\hat{P}_t \,  \eta \,  \qconst - \qconst )\, \pconst \, \d Z_t \, .
\end{align*}
    Now, we compute $\frac{\d}{\d t}\mathbb{E}[\Tilde{U}_t^{ 2}] = \frac{\d\hat{P}_t}{\d t}$ and we use that $\mathbb{E}[\Tilde{U}_0^{ 2}] = 0$. We find that
    \begin{align*}
        \hat{P}_t &=  \mathbb{E}[\int_0^t - 2 \,  \eta \, \hat{P}_s \, (1 +   \qconst \, (\hat{P}_s \,  \eta \, \qconst - \qconst))\, \d s] \\
        &\qquad\qquad\qquad+ \mathbb{E}[\int_0^t (1 + (\hat{P}_s \,   \eta \, \qconst - \qconst)\, \qconst  )^2 \, \d s] + \mathbb{E}[\int_0^t ((\hat{P}_s  \,  \eta \, \qconst - \qconst) \,   \pconst)^2 \, \d s \bigl] \\
        &=- 2 \int_0^t \,  \eta \, \hat{P}_s \, (1 +  \qconst \, (\hat{P}_s \, \eta \, \qconst - \qconst ))\, \d s \\
        &\qquad\qquad\qquad+\int_0^t (1 + (\hat{P}_s \, \eta \, \qconst - \qconst )\,  \qconst)^2 \, \d s + \int_0^t ((\hat{P}_s \,  \eta \, \qconst - \qconst) \, \pconst)^2 \, \d s \, .
    \end{align*}
    Hence, we get that $\hat{P}_t$ solves the following Riccati equation 
    \begin{equation}
        \frac{\d \hat{P}_t}{\d t} =-2 \, \eta \, \hat{P}_t \, (1 +   \qconst \, (\hat{P}_t  \, \eta \, \qconst - \qconst )) + (1 + (\hat{P}_t  \, \eta \, \qconst - \qconst)  \,  \qconst)^2 +((\hat{P}_t \,  \eta \, \qconst - \qconst)  \pconst)^2 \, .
    \end{equation}
    Existence and uniqueness of the Riccati equation follows by reverting the time and Lemma 2.1 in \cite{sun2023stochastic}.
\end{proof}

\subsection{CJP-strategy}
Under the CJP-strategy we consider a model that follows from Chapter 10 of \cite{CART-book}, where we consider an uninformed market maker that does not model  the market arrivals and the mid-price with fads.\footnote{That is, she considers $\qconst = 0$.} According to her beliefs, the mid-price $\process[S]$ follows 
\begin{equation}
    \d S_t = \mu \, \d t + \sigma \, \d B_t \, ,
\end{equation}
and his inventory $\process[Q]$ and wealth $\process[X]$ are defined through Poisson processes $\process[N^{a,b}]$ with intensity 
\begin{equation}
\label{eq:MA_CJP}
    \lambda_t^a =\kappa \, \, e^{- k \, \delta^a_t} \, , \qquad \, \lambda_t^b = \kappa \, e^{- k \, \delta^b_t} \, ,
\end{equation}
and where $\kappa \in \mathbb{R}^+$ is such that the expected number of market arrivals at $T$ in this setting is equal to the expected number of arrivals at $T$ in the full information setting.\footnote{We take $\kappa = \varphi + \frac{\psi}{T} \, \int_0^T \mathbb{E}[e^{\pm \gamma \, \sigma \, \qconst \, U_t}] \, \d t$.}

Considering this model, the market maker computes her optimal quotes $(\delta^{*,a}_t, \delta^{*,b}_t)_{t \in \mfT}$ as solution of the control problem \eqref{eq:VF_1}.

\subsection{Limit cases for $\qconst \in \{0,1\}$}
In this subsection, we present two limit results linking the PnL of our strategies when $\qconst \in \{0,1\}$. To simplify the calculations, we only consider the cases $\delta_\infty = \infty$, $\mathcal{Q} = \mathbb{Z}$ and $\mathcal{S}^+ = - \mathcal{S}^- = + \infty$. 

Let us start with $\qconst = 0$. In this case, we have the following property:
\begin{prop}
\label{prop_q_0}
    When $\qconst = 0$, the PnL of the full information optimal strategy is equal to  the PnL of the partial information  optimal strategy and the CJP-strategy.
\end{prop}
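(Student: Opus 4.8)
The plan is to show that at $\qconst = 0$ the fad component decouples entirely from the three ingredients that define each problem---the mid-price dynamics, the order-arrival intensities, and the running objective---so that the full-information, partial-information, and CJP control problems all reduce to one and the same $u$-independent problem. Identical problems yield identical optimal feedback controls, and since the true market (the ground truth of the simulation) is itself driven by $\qconst = 0$ dynamics, the three strategies generate identical state paths and hence identical realised PnLs.

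First I would substitute $\qconst = 0$ (so $\pconst = 1$) into \eqref{eq:mid-priceFI}, \eqref{sto_intens_a}--\eqref{sto_intens_b}, and the integrand of \eqref{eq:new_value_fct}. The price collapses to $\d S_t = \mu\,\d t + \sigma\,\d Z_t$, independent of $\process[U]$; the intensities become $\lambda^{a,b}_t = (\varphi + \psi)\,e^{-k\,\delta^{a,b}_t}\,\mathds{1}_{\{\cdot\}}$, free of the fad; and the running term $\eta\,\qconst\,\sigma\,Q_s\,U_s$ vanishes. Consequently the full-information HJB \eqref{HJB:eq:fin} loses all $u$-dependence, and its optimal displacements $\delta^{*,\mathrm{FI}}$ become functions of $(t,q)$ alone.

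Next I would treat the CJP strategy. The CJP market maker already posits $\qconst = 0$, and her intensity normalisation gives $\kappa = \varphi + \tfrac{\psi}{T}\int_0^T \mathbb{E}[e^{\pm\gamma\,\sigma\,\qconst\,U_t}]\,\d t = \varphi + \psi$ when $\qconst = 0$, so the intensities \eqref{eq:MA_CJP} are exactly $(\varphi + \psi)\,e^{-k\,\delta^{a,b}_t}$. Hence the CJP control problem is literally the reduced full-information problem, yielding $\delta^{*,\mathrm{CJP}}(t,q) = \delta^{*,\mathrm{FI}}(t,q)$. For the partial-information strategy the key step is to show the filter degenerates: at $\qconst = 0$ the innovation gain $\sigma^{-1}(-\eta\,\qconst\,\hat{P}_t + \qconst)$ in Theorem \ref{thm:filter} is identically zero, so $\d\hat{U}_t = -\eta\,\hat{U}_t\,\d t$ with $\hat{U}_0 = 0$, giving $\hat{U}_t \equiv 0$ (equivalently, the price carries no information about the fad). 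Substituting $\hat{U}_t \equiv 0$ into \eqref{eq:lambdas hat_a}--\eqref{eq:lambdas hat_b} recovers the same $u$-free intensities and objective, so $\delta^{*,\mathrm{PI}}(t,q) = \delta^{*,\mathrm{FI}}(t,q)$ as well.

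Finally I would conclude by noting that all three controls are the same deterministic feedback $\delta^*(t,Q_{t^-})$, and the true market at $\qconst = 0$ drives each of them through the same price SDE and the same arrival intensities; therefore the inventory, cash, and terminal value processes---and hence the realised performance criterion \eqref{eq: payoff}---coincide pathwise. The main obstacle is the partial-information step: one must verify carefully that the filter's gain vanishes exactly at $\qconst = 0$ so that $\hat{U}_t$ collapses to its prior mean, since it is this exact degeneration (rather than a mere approximation) that forces the filtered problem to agree with the other two.
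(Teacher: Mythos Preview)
Your proposal is correct and follows essentially the same route as the paper: both arguments show that at $\qconst=0$ the intensities collapse to $(\varphi+\psi)e^{-k\delta}$, the running cost loses its $u$-term, and a $u$-independent value function solves the resulting HJB, which then coincides with the CJP problem since $\kappa=\varphi+\psi$. The only cosmetic difference is in how $\hat U_t\equiv 0$ is obtained: the paper invokes independence of $\process[B]$ and $\process[Z]$ (so $\mathbb{E}[U_t\mid\mathcal{F}^S_t]=\mathbb{E}[U_t]=0$), whereas you read it off directly from the vanishing filter gain in Theorem~\ref{thm:filter}; both are valid and lead to the same conclusion.
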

\begin{proof}
    Consider $\qconst = 0$ in the HJB equation of the full informed problem, we obtain 
    \begin{equation}
        0 = \partial_t V - \eta \, u \, \partial_u V - \phi \, q^2 + \frac{1}{2} \, \partial^2_{u,u} V  + q \, \mu + \frac{e^{-1}}{k} \, e^{k \, \Delta^-V} \, (\varphi + \psi ) \, ,
    \end{equation}
    where the processes satisfy 
    \begin{align}
        \begin{cases}
            &\d U_t = - \eta \, U_t \, \d t + \d B_t \, , \\
            &\d S_t = \mu \, \d t + \sigma \, d \, Z_t \, ,\\
            & \d Q_t = \d N_t^b - \d N_t^a \, ,
        \end{cases}
    \end{align}
    and where $\process[N^{a,b}]$ are Poisson processes with stochastic intensity $$\lambda^{a,b}_t = (\varphi + \psi) \, e^{-k \, \delta^{a,b}_t} \, .$$
    On the other hand, the HJB equation satisfied by in the partial informed setting is given by 
    \begin{equation}
        0 = \partial_t V - \phi \, q^2 + q \, \mu - \eta \, u \, \partial_u V + \frac{e^{-1}}{k} \, (\varphi + \psi) \, (e^{k \, \Delta^- V} + e^{k \, \Delta^+ V }) \, ,
    \end{equation}
    where the processes satisfy 
    \begin{align}
        \begin{cases}
            &\hat{U}_t = 0 \, , \\
            &\d S_t =  \mu \, \d t + \sigma \, d \, Z_t \, ,\\
              &\d {Q}_t = \d {N}_t^b - \d {N}_t^a \, ,
        \end{cases}
    \end{align}
      and where $\process[{N}^{a,b}]$ are Poisson processes with stochastic intensity $${\lambda}^{a,b}_t = (\varphi + \psi) \, e^{-k \, \delta^{a,b}_t} \, .$$
      The equality $\hat{U}_t = 0$ comes from the independence of the two Brownian Motions $\process[B]$ and $\process[Z]$. 

      We take $V$ to be independent of $u$, and we obtain
      \begin{equation}
            0 = \partial_t V - \phi \, q^2 + q \, \mu + \frac{e^{-1}}{k} \, (\varphi + \psi) \, (e^{k \, \Delta^- V} + e^{k \, \Delta^+ V }) \, ,
      \end{equation}
      that can be solved following the ideas developed in Chapter 10 of \cite{CART-book}. This is exactly the CJP-strategy.\footnote{Indeed, in this case $\kappa = \varphi + \psi$.} Since the common value function is independent of $u$ and since the optimal quotes are given by $\delta^{a,b}_t = \frac{1}{k} - \Delta^{\pm} V(t,q)$, then, neither the value of $u$ nor the value of $s$ impacts the quotes. Finally, and following our previous notations, we have that an equality of the PnL following the strategies $\delta^{*,\text{FI}}(t,Q_{t-},U_t)$, $\delta^{*,\text{CJP}}(t,Q_{t-})$, $\delta^{*,\text{PI}}(t,Q_{t-},\hat{U}_t)$.
\end{proof}
Now, we consider $\qconst = 1$. In this case, we have the following property:
\begin{prop}
\label{prop_q_1}
    When $\qconst = 1$, the PnL of the optimal strategy in the full information setting and in the partial information setting are the same.
\end{prop}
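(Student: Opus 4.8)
The plan is to exploit that at $\qconst = 1$ one has $\pconst = \sqrt{1-\qconst^2} = 0$, so the price loses its idiosyncratic noise and becomes an affine bijection of the fad. Under \eqref{state_var_FI} with $\pconst = 0$ we get $\d S_t = \mu\,\d t + \sigma\,\d U_t$, hence $U_t = \sigma^{-1}(S_t - S_0 - \mu\,t)$ is $\mathcal{F}^S_t$-measurable and, conversely, $S$ is a pathwise function of $U$; thus $\mathcal{F}^S_t = \sigma(U_u : u\le t)$ and the price-observing market maker in fact observes the fad exactly. First I would make this rigorous through the filter of Theorem \ref{thm:filter}: substituting $\pconst=0$ (so that $\pconst^2=0$ and $2\eta - 2\eta\qconst^2 = 2\eta\pconst^2 = 0$) into the Riccati equation yields $\tfrac{\d}{\d t}\hat{P}_t = -\eta^2\hat{P}_t^2$ with $\hat{P}_0 = 0$, whose unique solution is $\hat{P}_t \equiv 0$. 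Since $\hat{P}_t = \mathbb{E}[(U_t - \hat{U}_t)^2] = 0$ for every $t$ and both processes are continuous, we conclude $\hat{U}_t = U_t$ for all $t$, almost surely.

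Next I would confirm that the two reduced control problems coincide. Plugging $\hat{P}_t = 0$ and $\qconst = 1$ into the filter dynamics of Theorem \ref{thm:filter} gives $\d\hat{U}_t = -\eta\hat{U}_t\,\d t + \sigma^{-1}\,\d I_t$; a short innovation computation (using $h(U_t) = \mu - \eta\sigma U_t$ together with $\hat{U}_t = U_t$) gives $\d I_t = \sigma\,\d B_t$, so $\d\hat{U}_t = -\eta\hat{U}_t\,\d t + \d B_t$, exactly the $\process[U]$-dynamics. With this identification the two HJB equations become the same operator: in \eqref{eq: PDE incomplete inf} the diffusion coefficient of $\partial^2_{u,u}V$ is $(\hat{P}_t\,\qconst\,\eta - \qconst)^2$, which at $\qconst=1,\hat{P}_t=0$ equals $1$, matching the coefficient in the full-information PDE \eqref{eq:HJB_EQ_FI}; and under the standing simplifications $\delta_\infty = \infty$, $\mcQ = \mathbb{Z}$, $\mathcal{S}^\pm = \pm\infty$ the intensity nonlinearities are the same function of $u = \hat{u}$. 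Both PDEs carry the identical terminal condition $V(T,q,u) = -\alpha\,q^2$, so the two value functions (each being the value of its respective optimization, over the now-identical state dynamics, intensities, running cost, and terminal cost) coincide.

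I would then transfer the conclusion to the realized PnL. The optimal feedback maps are $\delta^{a,*} = (1/k - \Delta^-V)\vee(-\delta^a_\infty)$ and $\delta^{b,*} = (1/k - \Delta^+V)\vee(-\delta^b_\infty)$ in both settings; as the value functions agree, these are the same measurable function of the state. Because $\hat{U}_t = U_t$ pathwise, the partial-information quotes $\delta^{*,\mathrm{PI}}(t, Q_{t^-}, \hat{U}_t)$ and the full-information quotes $\delta^{*,\mathrm{FI}}(t, Q_{t^-}, U_t)$ coincide on every trajectory; since the arrival intensities, and hence the laws of the fills, are then identical, the two strategies generate the same inventory and cash paths and therefore the same value of the criterion \eqref{eq: payoff}.

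The main obstacle is bookkeeping rather than conceptual: one must upgrade $\hat{U}_t = U_t$ from the $L^2$ statement $\hat{P}_t = 0$ to an almost-sure pathwise identity (which follows from continuity of the two Gaussian processes), and transfer equality of the value functions to equality of the feedback controls and then to pathwise equality of the PnL, rather than merely equality in expectation. The filtration identity $\mathcal{F}^S_t = \sigma(U_u : u\le t)$ gives an independent, purely probabilistic route to the same conclusion---the full-information market maker's extra observation of $Z$ is useless here since, with $\pconst=0$, $Z$ enters neither the price, the intensities, nor the cost---which I would record as a consistency check.
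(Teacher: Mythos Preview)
Your proposal is correct and follows essentially the same approach as the paper: both arguments reduce to showing that $\hat{P}_t\equiv 0$ when $\qconst=1$, hence $\hat{U}_t=U_t$, so the two HJB equations and state dynamics coincide and the optimal feedback controls (and thus the realized PnL) are identical. Your treatment is in fact more careful than the paper's, explicitly verifying the Riccati collapse, the diffusion-coefficient match, and the upgrade from $L^2$ to pathwise equality, and your filtration-identity observation is a nice consistency check the paper does not record.
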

\begin{proof}
    Consider $\qconst = 1$, then the HJB equation in the full informed setting becomes
    \begin{equation}
    \begin{aligned}
        0 =& \partial_t V - \eta \, u \, \partial_u V - \phi \, q^2 + \frac{1}{2} \, \partial^2_{u,u} V + (\mu - \eta \, \sigma \, u) \, q \\& \qquad \, \qquad+ \frac{e^{-1}}{k} \,e^{k \, \Delta^- V} (\varphi + \psi \, e^{- \gamma \, \sigma \, u}) + \frac{e^{-1}}{k} \,e^{k \, \Delta^- V} (\varphi + \psi \, e^{\gamma \, \sigma \, u}) \, ,
        \end{aligned}
    \end{equation}
    where the processes satisfy 
    \begin{align}
        \begin{cases}
            & \d U_t = - \eta \, U_t \, \d t + \d B_t \, , \\
            & \d S_t = (\mu - \eta \, \sigma \, U_t ) \, \d t + \sigma \, \d B_t \, , \\
            &\d Q_t = \d N_t^b - \d N_t^a \, ,
        \end{cases}
    \end{align}
   and where $\process[N^{a,b}]$ are Poisson processes with intensity 
    $$\lambda^{a,b}_t = (\varphi + \psi \, e^{\mp \gamma \, \sigma \, U_t}) \, e^{- k \, \delta_t^{a,b}} \, .$$
    The HJB equation satisfied in the partial informed setting is given by 
    \begin{equation}
    \begin{aligned}
                0 =& \partial_t V - \eta \, u \, \partial_u V - \phi \, q^2 + \frac{1}{2} \, \partial^2_{u,u} V + (\mu - \eta \, \sigma \, u) \, q \\& \qquad \, \qquad + \frac{e^{-1}}{k} \,e^{k \, \Delta^- V} (\varphi + \psi \, e^{-\gamma \, \sigma \, u}) + \frac{e^{-1}}{k} \,e^{k \, \Delta^- V} (\varphi + \psi \, e^{ \gamma \, \sigma \, u}) \, ,
    \end{aligned}
    \end{equation}
    since we have $\hat{P}_t = 0$ in this setting. And the processes satisfy 
    \begin{align}
        \begin{cases}
            & \hat{U}_t = U_t \, , \\
              & \d S_t = (\mu - \eta \, \sigma \, U_t ) \, \d t + \sigma \, \d B_t \, , \\
            &\d Q_t = \d N_t^b - \d N_t^a \, ,
        \end{cases}
    \end{align}
      where $\process[N^{a,b}]$ are Poisson processes with intensity 
    $$\lambda^{a,b}_t = (\varphi + \psi \, e^{\mp \gamma \, \sigma \, U_t}) \, e^{- k \, \delta_t^{a,b}} \, .$$
     We then have $V^{\text{FI}} = V^{\text{PI}}$, and since the dynamics of their processes are the same, we have equality of the PnL.
\end{proof}

\subsection{Evaluation at the fundamental price}
\label{sec:fundamental}
In our analysis, we assumed that the residual inventory at time $T$ is evaluated at the mid-price. In what follows, we assume that the market maker evaluates the final inventory at the fundamental price $(S_T- \sigma \, \qconst \, U_T)$ and therefore the performance criterion to be considered in the maximization problem becomes:
\begin{equation}
\label{eq: payoff2}
X_T + Q_T\, (S_T- \sigma \, \qconst \, U_T) - \alpha \, Q_T^2 - \phi \, \int_0^T Q_u^2 \, \d u 
\end{equation}
instead of \eqref{eq: payoff}.

Table \ref{table2}  reports the mean (and standard deviation) of the performance of the optimal strategies  when the mark-to-market of the terminal inventory is computed at the fundamental value (100,000 simulations).  Here, the optimal strategies are those computed when the performance criterion is that in \eqref{eq: payoff2}  following the same steps as above.

The main insights we collected from Table  \ref{table1}  are confirmed in Table \ref{table2}. Notice that evaluating the final inventory at the fundamental leads to a very small increase in the performance of the market. The full information strategy outperforms the partial information strategy, which in turn outperforms the CJP strategy. The effect associated with a stress of key model parameters is also similar.

\begin{table}[H]
\centering
\begin{tabular}{l|ccc}
\hline
\hline
& \multicolumn{3}{c}{Performance criterion }\\
& \multicolumn{3}{c}{$X_T + Q_T\, (S_T- \sigma \, \qconst \, U_T) - \alpha \, Q_T^2 - \phi \, \int_0^T Q_u^2 \, \d u $}\\
\noalign{\vskip-1mm}
\noalign{\vskip 2mm}
\hline
\hline
\noalign{\vskip-1mm}
\noalign{\vskip 2mm}
parameters & $\delta^{*,\mathrm{FI}}(t,Q_{t^-}, U_t)$ & $\delta^{*\mathrm{CJP}}(t,Q_{t^-})$& $\delta^{*,\mathrm{PI}}(t,{Q}_{t^-}, \hat{U}_t)$\\
\noalign{\vskip-1mm}
\noalign{\vskip 2mm}
\hline
\hline
\noalign{\vskip-1mm}
\noalign{\vskip 2mm}
$\varphi= 15$, $\eta=10$, $\gamma=1$, $\qconst=0.6$ &21.33 (4.93)&21.16 (4.93)&21.19 (4.93)\\
\hline
$ \qconst = 0$   &21.34 (5.10) &21.34 (5.10) &21.34 (5.10)\\
$ \qconst = 0.2$ &21.34 (5.08) &21.32 (5.08) &21.32 (5.08)\\
$ \qconst = 0.4$ &21.34 (5.03) &21.26 (5.03) &21.27 (5.03)\\
$ \qconst = 0.6$ &21.33 (4.93) &21.16 (4.93) &21.19 (4.93)\\
$ \qconst = 0.8$ &21.33 (4.79) &21.03 (4.79) &21.12 (4.79)\\
$ \qconst = 1.0$ &21.33 (4.60) &20.86 (4.61) &21.33 (4.60)\\
\hline
$ \gamma = 0$   &21.49 (4.96) &21.34 (4.96) &21.36 (4.95) \\
$ \gamma = 1$   &21.33 (4.93) &21.16 (4.93) &21.19 (4.93) \\
$ \gamma = 2$   &21.16 (4.91) &20.97 (4.91) &20.99 (4.92) \\
$ \gamma = 3$   &20.98 (4.90) &20.76 (4.90) &20.79 (4.91) \\
\hline
$\eta = 2.5$   &21.29 (4.94)&20.78 (4.97)&20.87 (4.99) \\
$\eta = 5.0$   &21.32 (4.93)&21.02 (4.94)&21.06 (4.95)\\
$\eta = 7.5$   &21.33 (4.93)&21.11 (4.93)&21.14 (4.94)\\
$\eta = 10.0$   &21.33 (4.93)&21.16 (4.93)&21.19 (4.93)\\
$\eta = 12.5$   &21.34 (4.93)&21.20 (4.93)&21.22 (4.93)\\
\hline
$\varphi,\psi : 0\%$ informed &21.49 (4.96)&21.34 (4.96)&21.36 (4.95)\\
$\varphi,\psi : 25\%$ informed &21.41 (4.94)&21.26 (4.94)&21.28 (4.94)\\
$\varphi,\psi : 50\%$ informed  &21.33 (4.93)&21.16 (4.93)&21.19 (4.93)\\
$\varphi,\psi : 75\%$ informed &21.25 (4.91)&21.07 (4.92)&21.09 (4.93)\\
$\varphi,\psi : 100\%$ informed &21.16 (4.91)&20.97 (4.91)&20.99 (4.92)\\
  \hline
  \hline
\end{tabular}
  \caption{Average profit and loss (with standard deviation) when the final inventory is evaluated at the fundamental value of the asset for a market maker who follows either (i) the full information optimal strategy $\delta^{*,\mathrm{FI}}(t,Q_{t^-}, U_t)$ observing the process $U_t$,  (ii) the optimal strategy of Cartea-Jaimungal-Penalva (CJP), or (iii)  the imperfect information optimal strategy  $\delta^{*,\mathrm{PI}}(t,{Q}_{t^-}, \hat{U}_t)$ using the filtered process $\hat{U}_t$. The baseline set of parameters is provided in the first row then the parameter changes are reported.
  On each line (varying $\qconst$, $\gamma$, or $\eta$) $\psi$ is rescaled as in \eqref{eq_psi}. The last rows deal with variations of $\varphi,\psi$ keeping the expected number of market arrivals at 30.}
  \label{table2}
\end{table}

\subsection{Further robustness checks}
\label{NORENOR}

As a robustness check, we repeat the analysis developed in Section 
\ref{sec : Simulations}
without rescaling $\psi$; thus, the expected number of market arrivals is not fixed at $30$ as the parameters change. As the number of trade arrivals changes, the figures are different from those in Section \ref{sec : Simulations} but the order (according to performance) among the three strategies is confirmed as well as the main results. 

\textbf{Effect of $\qconst$ :}
We observe an increase of the performance for the FI strategy, while the CJP-strategy performances decreases. The behaviour of the three strategies observed in Section \ref{sec : Simulations} when $\qconst \rightarrow 0$ and $\qconst \rightarrow 1$ is confirmed. The performance increase is due to the higher number of market arrivals. A higher number of market arrivals leads to a higher expected profitability because the market maker benefits from the round trip trades made by liquidity takers.\footnote{The expected number of market arrivals (before sieving according to displacements) goes from $29.86$ (when $\qconst=0$) to $30.22$ (when $\qconst=1$).  }

\begin{figure}[H]
\centering
\includegraphics[width=0.5\textwidth]{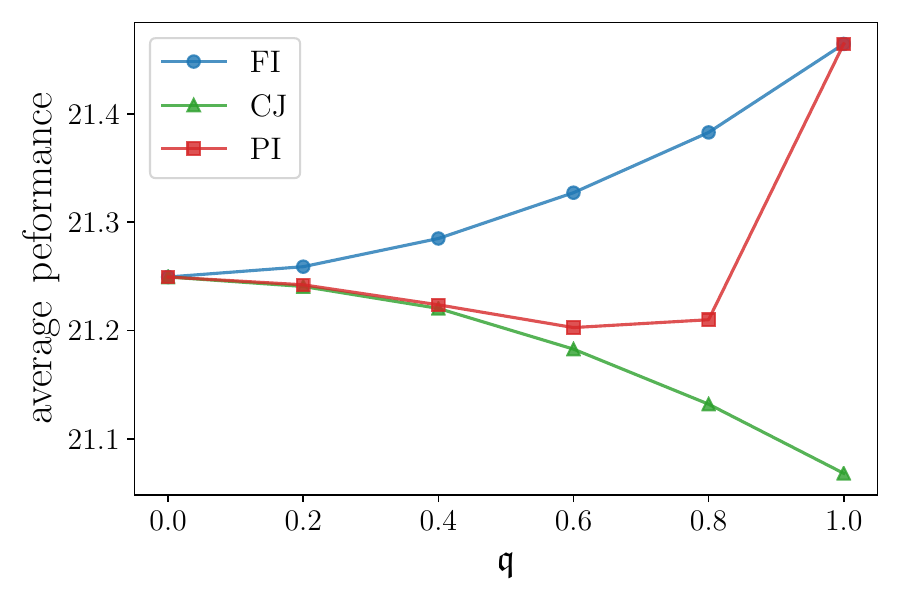}
\caption{Performance of the strategies as a function of $\qconst$.}\label{fig:PnL_qconst_2} 
\end{figure}

\textbf{Effect of $\gamma$ :}
We observe a decrease of the performances up until $\gamma = 1$, followed by an increase in the performances as gamma increases ($\gamma>1$). The decrease for $\gamma \leq 1$ is due to the higher toxicity of the trades sent by the informed traders (in line with the previous results) while the performance increase for $\gamma \geq 1$ is due to the higher number of market arrivals that lead to an increase in profitability that overshadows the increase of toxicity in the order flow of informed traders.\footnote{The expected number of market arrivals (before sieving according to displacements) goes from $29.86$ (when $\gamma=0$) to $31.06$ (when $\gamma = 3$). }
\begin{figure}[H]
\centering
\includegraphics[width=0.5\textwidth]{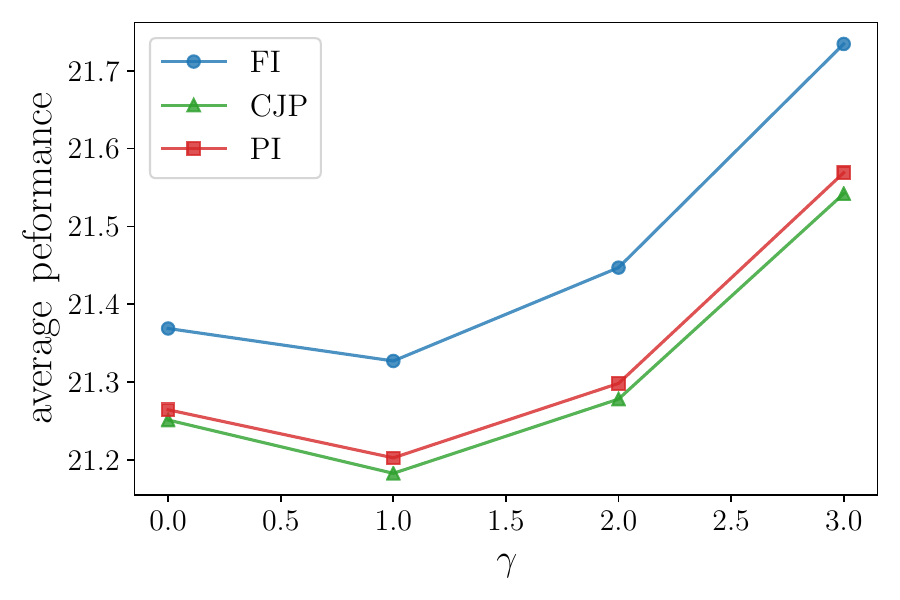}
\caption{Performance of the strategies as a function of $\gamma$.}\label{fig:PnL_gamma_2} 
\end{figure} 

\textbf{Effect of $\eta$ :}
For the full information case we observe a decrease of the performance as $\eta$ increases. This is due to the lower arrivals of orders. For the other two strategies, we refer to the arguments made for the closing of the gap in Figure \ref{fig:PnL_eta}.
\begin{figure}[H]
\centering
\includegraphics[width=0.5\textwidth]{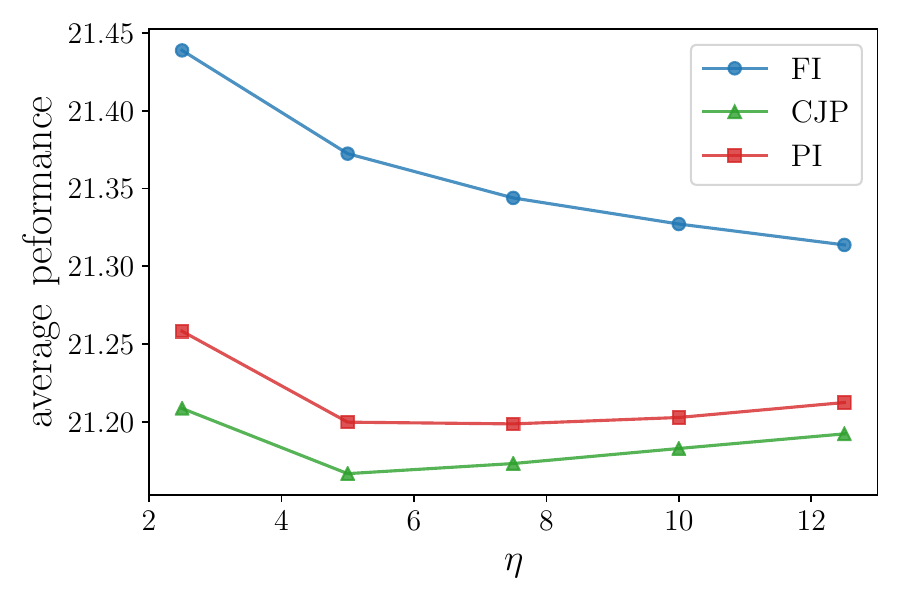}
\caption{Performance of the strategies as a function of $\eta$.}\label{fig:PnL_eta_2} 
\end{figure}

\subsection{How good is the closed-form second order approximation?}\label{sec:how-good-second-order-approx}

Figure \ref{fig:optimal quotes ODE PDE} shows the bid and ask optimal displacements obtained using a finite-difference scheme for the  solution of the PDE \eqref{eq:HJB_EQ_FI} and the second order approximation closed-form solution in \eqref{eq:systemODE_FI}. We observe that the differences are small justifying the second-order approximation of the solution.

\begin{figure}[H]
\centering
\includegraphics[width=0.6\textwidth]{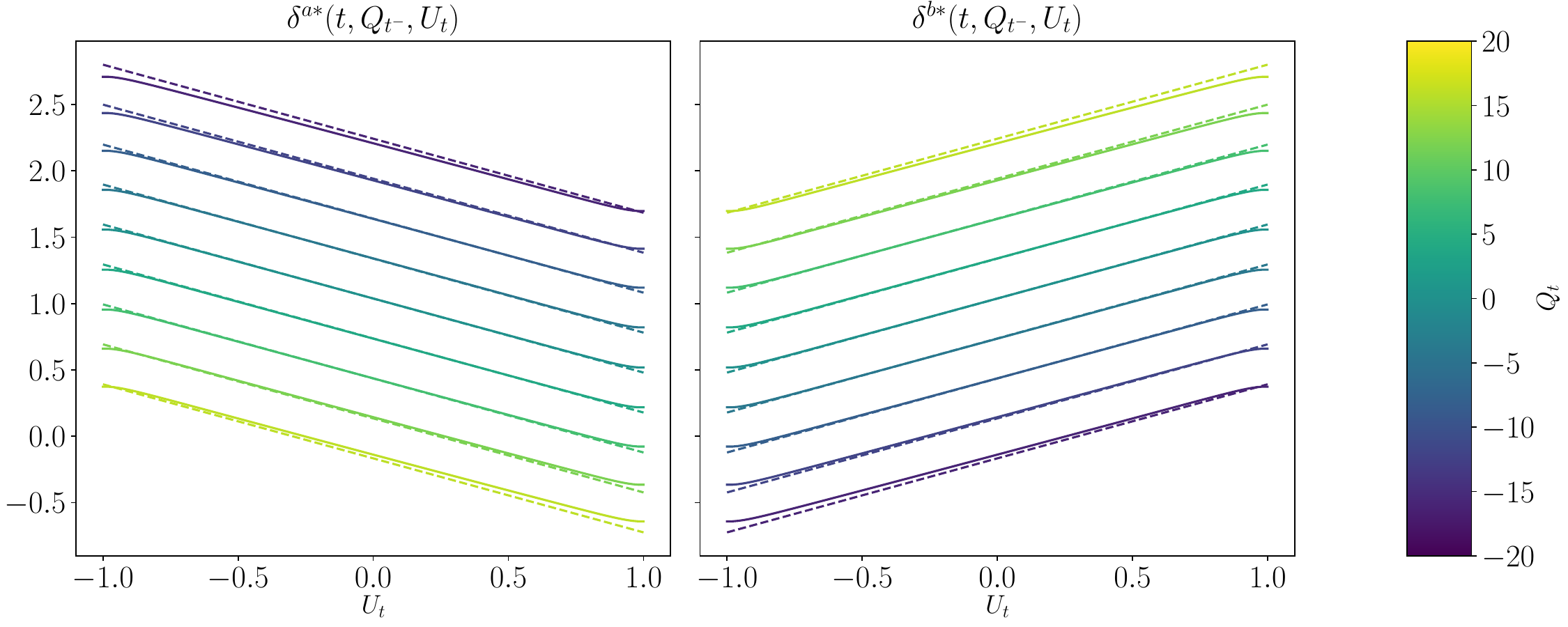}
\caption{Ask and bid displacements as a function of the fad ($x$-axis) and the inventory of the market maker (colour bar) under full information.
Solid lines: numerical simulation of the PDE; dashed lines: numerical simulation of the linear-quadratic approximation. Left panel is for the optimal ask displacement, and right panel is for the optimal bid displacement.}\label{fig:optimal quotes ODE PDE} 
\end{figure} 
\subsection{Maximum likelihood estimators}\label{sec:likelihood_estim}
To calibrate  $(\eta,\qconst)$, we compute a Kalman--Bucy filter for $\process[U]$ for each choice of $(\eta,\qconst)$, and then consider maximising the log-likelihood of the resulting Kalman--Bucy filter in discrete time.\footnote{See Chapter 6.6 in \cite{shumway2006time}.} More precisely, we assume that we observe the mid price $\process[S]$ at discrete times $0=t_0<t_1<...<t_N=T$, with $\Delta = t_{i+1}-t_i$, and we wish to estimate $(\qconst,\eta)$. First, we rewrite the dynamics for $U$ and $S$ as
\begin{equation}
\label{eq:discrete system}
    \begin{cases}
          &U_{k+1} = \phi\, U_k + \varepsilon_k\,,\qquad \phi = e^{-\eta\,\Delta}\,,\qquad \varepsilon_k \sim \mathcal{N}(0,Q)\,,\qquad Q = \frac{1-\phi^2}{2\,\eta}\\&
          y_k = H\,U_k + \xi_k\,,\qquad H = \qconst\,(\phi-1)\,,\qquad y_k = \frac{ S_{k+1}-S_k-\mu\,\Delta}{\sigma}\,,\qquad \xi_k = \qconst\,\varepsilon_k + \pconst\,(Z_{k+1} - Z_k)\,.
    \end{cases}
\end{equation}
Using \eqref{eq:discrete system} we write the associated discrete time Kalman--Bucy filter
\begin{equation}
\label{eq:density_filter_2}
    y_k\,|\,y_0,..,y_{k-1} \sim \mathcal{N}(\hat{y}_k,F_k)\,,
\end{equation}
where $$\hat{y}_k = H\,\widehat{U}_{k|k-1}\,,\qquad F_k = H^2\,P_{k|k-1} + \qconst^2\,Q + \pconst^2\,\Delta\,,$$
and where 
\begin{equation}
    \begin{cases}
        &\widehat{U}_{k+1|k} = \phi\,\hat{U}_{k|k-1}+ K_k\,\nu_k\,,\qquad
        K_k = \frac{\phi\, P_{k|k-1}\,H + \qconst\,Q}{F_k}\,,\qquad \nu_k = y_k -\hat{y}_k\,,\\&
        P_{k+1|k} = \phi^2\, P_{k|k-1} + Q -K_k^2\,F_k\,,\qquad \widehat{U}_{1|0} = 0\,, \qquad P_{1|0} = Q\,.
    \end{cases}
\end{equation}
Using  that
$$p(y_0,...,y_N) = \prod_{k=0}^N p(y_k\,|\,y_0,..,y_{k-1})\,,$$
and equation \eqref{eq:density_filter_2} we derive the log--likelihood 
\begin{equation}
\label{eq:log_likelihood}
    l(\qconst,\eta) = -\frac{1}{2}\,\sum_{k=0}^N \Bigg[\log(F_k) + \frac{\nu_k^2}{F_k} \Bigg]\,,
\end{equation}
and we optimise this quantity over $(\qconst,\eta)$. \\
\newline
 For $(\psi,\varphi,\gamma)$, we start by disentangling the market orders from the filled orders (as in Section \ref{sec:Beyond KBf}), and consider the market arrivals $M^{\pm}_t$ under a probability measure where we use $\hat{U}$ instead of $U$. More precisely, for the calibration exercise we assume that $M$ has stochastic intensities given by 
    $${\vartheta}^{\varphi,\gamma,\pm}_t = \varphi + \psi\, e^{\pm \gamma\,\hat{\qconst}\,\sigma\,\hat{U}_t}\,,$$
    where $\process[\hat{U}]$ is the Kalman--Bucy filter of the fad using the calibrated $(\hat{\qconst}, \hat{\eta})$. 
    Finally, using the independence of the increments of the Poisson processes conditionally on the observation of $\process[\hat{U}]$, we run a maximization of the log--likelihood of the discrete time  processes $({M_t^\pm})_{t \in \mfT}$ to estimate $(\gamma,\varphi)$, where the discrete time log--likelihood is given by 
    $$l(\varphi,\gamma)= \Bigg[\sum_{i= 1}^N\Delta{M^{+}_i}\,\log(\vartheta_i^{\varphi,\gamma,+}) - \sum_{i=1}^N\Delta\,\vartheta^{\varphi,\gamma,+}_i + \sum_{i= 1}^N\Delta{M^{-}_i}\,\log(\vartheta_i^{\varphi,\gamma,-}) - \sum_{i=1}^N\Delta\,\vartheta^{\varphi,\gamma,-}_i\Bigg]\,.$$
    Tables \ref{table_KB_calib} and \ref{table_poisson_calib} (in the main text) show the performance of the above calibration procedures.

\end{document}